\numberwithin{equation}{section}
\theoremstyle{plain}
\newtheorem{theorem}{Theorem}[section]
\newtheorem{lemma}[theorem]{Lemma}
\newtheorem{definition}[theorem]{Definition}
\newtheorem{assumption}[theorem]{Assumption}
\DeclareMathAlphabet\scr{U}{scr}{m}{n}
\SetMathAlphabet\scr{bold}{U}{scr}{b}{n}
  \DeclareFontFamily{U}{scr}{\skewchar\font'177}%
  \DeclareFontShape{U}{scr}{m}{n}{<-6>rsfs5<6-8>rsfs7<8->rsfs10}{}%
  \DeclareFontShape{U}{scr}{b}{n}{<-6>rsfs5<6-8>rsfs7<8->rsfs10}{}%
\theoremstyle{definition}
\newtheorem{remark}[theorem]{Remark}
\newtheorem{example}[theorem]{Example}
\def\E{\mathbb{E}}
\def\R{\mathbb{R}}
\def\sign{\text{sign}}
\def\one{\mathbbm{1}}
\begin{document}

\title{Asset Pricing with General Transaction Costs:\\ Theory and Numerics\footnote{The authors thank Bruno Bouchard, Agostino Capponi, Paolo Guasoni, Robert Pego, and Chen Yang for fruitful discussions, and Steve Shreve for detailed comments on Appendix~\ref{app:ode}. The pertinent remarks of an anonymous associate editor and two anonymous referees are also gratefully acknowledged.}}

\author{Lukas Gonon\thanks{University of Munich, Department of Mathematics, Theresienstra\ss e 39, 80333 Munich, Germany, email \texttt{gonon@math.lmu.de}.}
\and
Johannes Muhle-Karbe\thanks{Imperial College London, Department of Mathematics, London, SW7 1NE, UK, email \texttt{j.muhle-karbe@imperial.ac.uk}. Research supported by the CFM-Imperial Institute of Quantitative Finance.}
\and
Xiaofei Shi\thanks{Carnegie Mellon University, Department of Mathematical Sciences,  5000 Forbes Avenue, Pittsburgh, PA 15213, USA, \texttt{xiaofeis@andrew.cmu.edu}.}
}

\date{April 15, 2020}

\maketitle

\begin{abstract}
We study risk-sharing equilibria with general convex costs on the agents' trading rates. For an infinite-horizon model with linear state dynamics and exogenous volatilities, we prove that the equilibrium returns mean-revert around their frictionless counterparts --  the deviation has Ornstein-Uhlenbeck dynamics for quadratic costs whereas it follows a doubly-reflected Brownian motion if costs are proportional. More general models with arbitrary state dynamics and endogenous volatilities lead to multidimensional systems of nonlinear, fully-coupled forward-backward SDEs. These fall outside the scope of known wellposedness results, but can be solved numerically using the simulation-based deep-learning approach of \cite{han.al.17}. In a calibration to time series of prices and trading volume, realistic liquidity premia are accompanied by a moderate increase in volatility. The effects of different cost specifications are rather similar, justifying the use of quadratic costs as a proxy for other less tractable specifications.

\end{abstract}

\bigskip
\noindent\textbf{Mathematics Subject Classification: (2010)} 91G10, 91G80, 60H10.

\bigskip
\noindent\textbf{JEL Classification:}  C68, D52, G11, G12.

\bigskip
\noindent\textbf{Keywords:} Radner equilibrium, transaction costs, forward--backward SDEs, deep learning

\section{Introduction}

The interplay between \emph{liquidity} and \emph{asset prices} has been studied extensively in the empirical literature, cf., e.g., \cite{amihud.al.06} and the references therein for an overview. The analysis of theoretical models consistent with the main stylized facts established in these studies is challenging, however, since both models with limited liquidity and equilibrium asset pricing models are notoriously intractable on their own right. These difficulties are of course only compounded for models where equilibrium asset prices are determined endogenously in the presence of trading frictions. 

Accordingly, tractable models often focus on settings where asset prices~\cite{vayanos.vila.99,lo.al.04,weston.17} or trading volume~\cite{vayanos.98} are deterministic. Models where asset prices and trading volume both fluctuate randomly have recently been analyzed by focusing on quadratic costs on the agents' trading rates~\cite{garleanu.pedersen.16,sannikov.skrzypacz.16,bouchard.al.18,herdegen.al.19}. The analysis of these models crucially exploits the linearity of the corresponding first-order conditions, thereby naturally raising the question how delicately the qualitative and quantitative predictions depend on the specific choice of the trading costs. Typical examples are linear transaction taxes or empirical estimates of actual trading costs that typically correspond to a power of the order flow of around $3/2$~\cite{lillo.al.03,almgren.al.05}.

The present study addresses this challenge by studying risk-sharing equilibria with \emph{general} convex costs levied on the agents' trading rates. This nests quadratic costs as one special case, but also covers proportional costs as another limiting case. We show that in an infinite-horizon model with linear state dynamics and exogenous price volatility, the corresponding equilibrium returns can be characterized \emph{explicitly} up to the solution of a single nonlinear ODE. The latter determines the mean-reverting fluctuations of the frictional equilibrium returns around their frictionless counterparts. If costs are quadratic, this ``liquidity premium'' is an Ornstein-Uhlenbeck process similarly as in \cite{garleanu.pedersen.16,bouchard.al.18,herdegen.al.19}; for proportional costs it turns out to be a doubly-reflected Brownian motion.

To assess the quantitative differences between the respective equilibrium returns, we calibrate our model to market data. This is challenging, since agents' preferences and endowments are not directly observable. However, we show that this difficulty can be overcome as follows. We first pin down some of the parameters by calibrating the frictionless model to a time series of prices. Then, we fit the additional parameters of our model with proportional transaction costs to trading volume data, by exploiting that the average turnover rate in the model can be computed in closed form. To obtain comparable results for other forms of trading costs, we in turn match the corresponding trading volumes and stationary variances of the liquidity premium. 

We find that realistic transaction costs lead to considerable fluctuations around the constant frictionless expected returns if agents' trading targets are calibrated to match the large trading volume observed empirically. In contrast, the differences between the results for proportional, quadratic, and intermediate costs are rather small if the magnitude of these costs is matched appropriately. This provides some justification for the use of quadratic trading costs as a proxy for other less tractable specifications. 

Trading volume is given by a nonlinear function of the equilibrium returns in our model, and this transformation magnifies the differences between different cost specifications. Indeed, for quadratic costs, volume follows the absolute value of an Ornstein-Uhlenbeck process, whereas subquadratic costs skew volume towards either zero or infinite rates as observed in the limiting case of proportional costs. The trading volume dynamics implied by our model recapture the main stylized facts observed empirically, such as autocorrelation and mean reversion~\cite{lo.wang.00}. However, with realistically small transaction costs, our simple stylized model with constant volatilities and trading needs cannot reproduce the strong persistence observed in real time-series data. Likewise, matching the large average turnover rate observed empirically is tied to excessive fluctuations relative to the data. 

Two further key restriction of our tractable benchmark model are that liquidity premia are zero on average and volatilities are given exogenously. The first property is at odds with a large empirical literature that documents that less liquid securities exhibit higher average expected returns~\cite{amihud.mendelson.86,brennan.subrahmanyam.96,pastor.stambaugh.03}. The second rules out studies of the effects changes in market liquidity (e.g., the introduction of a transaction tax) have on market volatility. 

In order to address these limitations, it is natural to extend our baseline model to more general state dynamics (for which volatilities are mean-reverting stochastic processes, for example) and to determine the volatility process endogenously by matching an exogenous terminal dividend. This in turn allows to study how liquidity influences volatility. As a byproduct, models of this kind can also generate systematic liquidity premia as demonstrated in a model with quadratic costs by~\cite{herdegen.al.19}.  However, the analysis of models with endogenous volatilities is substantially more involved, in that it leads to fully-coupled systems of nonlinear forward-backward stochastic differential equations. Indeed, the optimal positions evolve forward from the agents' initial allocations. In contrast, the initial optimal trading rates need to be determined as part of the solution, taking into account that trading stops at the terminal time. Likewise, the stock dynamics also need to be derived from the terminal dividend. For quadratic trading costs, wellposedness of this multidimensional and fully-coupled system has recently been established by~\cite{herdegen.al.19} for agents with sufficiently similar risk aversions. If trading costs are not quadratic, wellposedness of the system is a challenging open problem, and simplifications to systems of coupled Riccati equations as in~\cite{herdegen.al.19} are not possible even for the simplest linear state dynamics.

 In order to nevertheless shed some light on the behaviour of such more general models, we demonstrate in the present study that systems of this kind can be solved numerically by adapting the simulation-based deep learning approach of~\cite{han.al.17} if the time horizon is not too long. Here, the idea is to use a deep neural network to parametrize the ``decoupling field'' that describes the backward components as a function of the forward variables. For each choice of the decoupling field, the corresponding forward dynamics of the system can in turn be simulated by a standard Euler scheme, so that it remains to keep updating the initial guess for the decoupling field using stochastic gradient descent until the simulation matches the terminal condition of the equation sufficiently well. 

We verify that this algorithm produces accurate results by comparing it to the Riccati system that describes the equilibrium in a benchmark example with quadratic costs and linear state dynamics in \cite{herdegen.al.19}. With minor adjustments, the same algorithm is also able to deal with other trading cost specifications. Like in our baseline model, the specification of the trading cost only has a minor effect on the equilibrium price dynamics for our calibrated parameters. Complementing these numerical results with a rigorous verification theorem is an important direction for future research, as is an extension of the model with stochastic volatility that allows to produce richer trading volume dynamics.

The remainder of this article is organized as follows. Section~\ref{s:fl} introduces our frictionless baseline model and derives the corresponding equilibrium returns. In Section~\ref{sec:regular}, this model and the equilibrium results are extended to general smooth convex costs on the agents' trading rates. The limiting case of proportional transaction costs is treated separately in Section~\ref{sec:singular}. Both models are calibrated to time-series data in Section~\ref{sec:calibration}. Equilibrium prices in more general models with arbitrary state dynamics and endogenous volatilities are linked to nonlinear FBSDEs in Section~\ref{sec:general} and solved numerically in~Section~\ref{ssec:numerics}. For better readability, all proofs are collected in Section~\ref{s:proofs} as well as Appendices~\ref{app:ode} and \ref{app:calculation}.

\paragraph{Notation}

Throughout, we fix a filtered probability space $(\Omega,\mathscr{F},(\mathscr{F}_t)_{t \geq 0},\mathbb{P})$ supporting a standard Brownian motion $(W_t)_{t \geq 0}$ and denote by $\mathscr{L}^p$ the adapted processes $(X_t)_{t \geq 0}$ that satisfy $\E[\int_0^T |X_t|^p dt]<\infty$ for all $T>0$.

\section{Frictionless Baseline Model}\label{s:fl}
\subsection{Risk-Sharing Economy}

We consider two agents indexed by $n=1,2$ that receive (cumulative) random endowments 
\begin{align*}
d\zeta^n_t =\beta^n_t dW_t, \quad \mbox{where $\beta^n_t =\beta^n W_t$,} \quad \beta^n \in \mathbb{R}.
\end{align*}
To hedge against the fluctuations of their endowment streams,  the agents trade a safe and a risky asset. The price of the safe asset is exogenous and normalized to one. The price of the risky asset follows
\begin{align*}
dS_t=\mu_t dt+\sigma dW_t.
\end{align*}
Here, the constant volatility $\sigma$ is given exogenously, whereas the expected returns process $\mu \in \mathscr{L}^2$ is to be determined endogenously by matching the agents' demand to the fixed supply $s \in \mathbb{R}$ of the risky asset; see \cite{vayanos.98,zitkovic.12,choi.larsen.15,kardaras.al.15,garleanu.pedersen.16,xinZit17,bouchard.al.18} for related equilibrium models where the volatility is a free parameter. Models where the volatility is determined endogenously are discussed in Section~\ref{sec:general}.

\begin{remark}
Unlike for more general preferences, an additional orthogonal component (and a finite variation drift) of the agents' endowments would not change the optimizers of the simple linear-quadratic goal functionals~\eqref{eq:nocosts}, \eqref{eq:goal2} that we consider below, compare~\cite{bouchard.al.18}. We therefore focus on the present most parsimonious specification.

The restriction to two agents is made to reduce the dimensionality of the problem. More agents can be treated without difficulties in the frictionless case and, using matrix algebra, also for quadratic costs~\cite{bouchard.al.18}. For more general transaction costs, however, more than two agents would lead to multidimensional nonlinear differential equations. Therefore, we focus on two (representative) here agents for tractability.

Likewise, we restrict ourselves to an extremely specific endowment volatility in order to avoid introducing additional state variables for the optimization problems with transaction costs.

Finally, the assumption of a constant exogenous volatility is also crucial for obtaining analytical results for general transaction costs in Section~\ref{sec:regular} and \ref{sec:singular} below. In contrast, models where the volatility is determined endogenously are much more difficult to analyze, as discussed in Section~\ref{sec:general}. Numerical results for such models are reported in Section~\ref{ssec:numerics}.
\end{remark}

\subsection{Frictionless Optimization and Equilibrium}\label{sec:frictionless}

As a reference point, we first consider the frictionless version of the model. Starting from fixed initial positions that clear the market, $\varphi^1_{0-}+\varphi^2_{0-}=s$, the agents choose their positions $\varphi \in \mathscr{L}^2$ in the risky asset to maximize one-period expected returns penalized for the corresponding variances.
Without transaction costs, the continuous-time version of this criterion is
\begin{align}
\bar{J}_T^n(\varphi) &=\E\left[\int_0^T (\varphi_t dS_t+d\zeta^n_t)-\frac{\gamma^n}{2}d\langle \textstyle{\int_0^\cdot} \varphi_u dS_u+\zeta^n\rangle_t \right] =\E\left[\int_0^T \Big(\varphi_t \mu_t -\frac{\gamma^n}{2}(\sigma\varphi_t+\beta_t^n)^2\Big)dt\right]. \label{eq:nocosts}
\end{align}
Put differently, agents trade off expected returns against the tracking error relative to the exogenous target position $-\beta^n/\sigma$ as in~\cite{choi.al.18,sannikov.skrzypacz.16}. The optimal strategy for~\eqref{eq:nocosts} is readily determined by pointwise optimization as 
\begin{align*}
\varphi^n_t = \frac{\mu_t}{\gamma^n \sigma^2} -\frac{\beta_t^n}{\sigma}, \quad t \in [0,T].
\end{align*}
The equilibrium return is in turn pinned down by matching the agents' total demand $\varphi^1_t+\varphi^2_t$ to the supply $s$ of the risky asset at all times $t \in [0,T]$:

\begin{equation}\label{eq:eqnocosts}
\bar\mu_t= \bar{\gamma}\left[s\sigma^2+\sigma(\beta_t^1+\beta_t^2)\right],   \quad t \in [0,T], \quad \mbox{where } \bar{\gamma}=\frac{\gamma^1\gamma^2}{\gamma^1+\gamma^2}.
\end{equation}
The agents' optimal trading strategies corresponding to this frictionless equilibrium return are

\begin{align*}
\bar\varphi^1_t =\frac{s\gamma^2}{\gamma^1+\gamma^2} + \frac{\gamma^2\beta^2_t-\gamma^1\beta^1_t}{(\gamma^1+\gamma^2)\sigma} 
\qquad \bar\varphi^2_t =s-\bar{\varphi}^1_t, \qquad t \in [0,T].
\end{align*}

Note that the frictionless equilibrium return and the corresponding optimal trading strategies are independent of the time horizon $T$. In particular, the frictionless optimizers also maximize the long-run average performance $\bar{J}^n_T/T$ as $T \to \infty$, in that
\begin{align*}
\limsup_{T \to \infty} \frac{1}{T}\left[\bar{J}_T^n(\varphi)-\bar{J}_T^n(\bar\varphi^n)\right] \leq 0, \qquad \mbox{for all competing admissible strategies $\varphi$.}
\end{align*}
With transaction costs -- where the optimizers are no longer independent of the planning horizon -- we will directly solve the long-run version of~\eqref{eq:nocosts}, see Definitions~\ref{def:eq} and \ref{def:eq2} below.

\section{Equilibrium with Costs on the Trading Rate}\label{sec:regular}

\subsection{Costs and Strategies}

We now take into account transaction costs. A popular class of models originating from the optimal execution literature \cite{almgren.chriss.01,almgren.03} focuses on absolutely continuous trading strategies,
\begin{align*}
\varphi_t=\varphi^n_{0-}+\int_0^t \dot{\varphi}_u du, \quad t \geq 0,
\end{align*}
and penalizes the trading rate $\dot{\varphi}_t=d\varphi_t/dt$ with an \emph{instantaneous trading cost} $G(\dot{\varphi}_t)$. Portfolio choice problems for the most tractable quadratic specification $G(x)=\lambda x^2/2$, $\lambda>0$ are analyzed in single-agent models by \cite{garleanu.pedersen.16,almgren.li.16,moreau.al.17,guasoni.weber.17}; equilibrium returns are determined in \cite{garleanu.pedersen.16,sannikov.skrzypacz.16,bouchard.al.18}. In~\cite{guasoni.weber.18,caye.al.19,bayraktar.al.18}, single-agent models are solved for the more general power costs $G(x)=\lambda |x|^q/q$, $q \in (1,2]$ proposed by \cite{almgren.03}. Below, we will determine equilibrium returns for general smooth convex cost functions $G$ as studied in the duality theory of \cite{guasoni.rasonyi.15}:

\begin{assumption}\label{cond:cost}
\begin{enumerate}
\item The trading cost $G: \mathbb{R} \to \mathbb{R}_+$ is convex, symmetric, and strictly increasing on $[0,\infty)$, differentiable on $[0,\infty)$, and satisfies $G(0)=0$;
\item The derivative $G'$ is  also strictly increasing and differentiable on $(0,\infty)$ with $G'(0)=0$;
\item There exist constants $C>0$, $k\geq2$ and $x_0>0$ such that 
\begin{align*}
|(G')^{-1}(x)| \leq C(1+|x|^{k-1})\ \mbox{for all $x\in\R$}, \quad \quad 
G''(x) \leq C \ \mbox{for all $|x|>x_0$}.
\end{align*}
\end{enumerate}
\end{assumption}

One readily verifies that the power functions $G(x)=\lambda |x|^q/q$, $q \in (1,2]$ proposed in \cite{almgren.03} satisfy all of these requirements, as do linear combinations of these power functions. A relevant example beyond the power class is provided by the empirical estimates of~\cite{bucci.al.19}, who find that trading costs generated by price impact are quadratic for small trades but scale with a power of approximately 3/2 for larger order sizes.

With transaction costs, the analogue of the frictionless mean-variance goal functional~\eqref{eq:nocosts} is
\begin{equation}\label{eq:goal2}
J^n_T(\dot{\varphi}) = \E\left[\int_0^T \Big(\varphi_t\mu_t -\frac{\gamma^n}{2}(\sigma\varphi_t+\beta^n_t)^2-G(\dot{\varphi}_t)\Big)dt\right]. 
\end{equation}
Unlike its frictionless counterpart, this optimization problem is no longer ``myopic'', since the current position influences future choices in the presence of transaction costs, and since optimal strategies naturally depend on a finite time horizon $T$ here. To simplify the analysis below, we therefore focus on the ergodic limit of~\eqref{eq:goal2}, where the goal is to maximize the long-run average performance $J^n_T(\dot{\varphi})/T$ as $T \to \infty$. This criterion has a long history in single-agent problems with transaction costs, cf.~\cite{dumas.luciano.91,taksar.al.88,bouchaud.al.12,gerhold.al.14,guasoni.weber.17}. Here, we show that it also makes the equilibrium analysis of general trading costs tractable. Throughout, we focus on \emph{admissible} strategies 
\begin{align*}
\varphi_t=\varphi^n_{0-}+\int_0^t \dot{\varphi}_u du, \quad t \geq 0
\end{align*}
that satisfy the integrability conditions
\begin{align}\label{eq:integrable}
\E\left[ \int_0^T G(\dot{\varphi}_t)dt \right] < \infty, \qquad \E\left[ \int_0^T \varphi_t^{2}dt \right] < \infty, \quad  \mbox{for all $T >0$},
\end{align}
as well as the transversality condition
\begin{align}\label{assump:vanishing}
\lim_{T\to \infty} \frac {1} {T^2} \E[\varphi_T^2] = 0.
\end{align}
For infinite-horizon goal functionals that don't restrict wealth processes to be positive, similar conditions ruling out arbitrarily large risky positions are also imposed in~\cite{lo.al.04,guasoni.muhlekarbe.15,ekren.muhlekarbe.19}, for example. 

\subsection{Equilibrium}

\begin{definition}\label{def:eq}
$\mu \in \mathscr{L}^2$ is a \emph{(long-run) equilibrium return} if there exist admissible trading rates $\dot\varphi^1$, $\dot\varphi^2$ for agents 1 and 2 such that:
\begin{enumerate}
\item[(Market Clearing)] The total demand $\varphi^1+\varphi^2$ matches the supply $s$ of the risky asset at all times;
\item[(Individual Optimality)] The trading rate $\dot{\varphi}^n$ is optimal for the long-run version of agent $n$'s control problem~\eqref{eq:goal2} in that,
\begin{equation}
\limsup_{T\to \infty} \frac{1}{T} \left[ J_T^n(\dot{\varphi}) - J_T^n(\dot{\varphi}^n)\right] \leq 0, \quad \mbox{{\rm for all competing admissible trading rates $\dot\varphi$.}}
\end{equation}
\end{enumerate}
\end{definition}

\begin{remark}\label{eq:deadweight}
It is important to note that as in, e.g., \cite{lo.al.04,buss.dumas.17}, our transaction cost is an exogenous deadweight cost and not an output of the trading process in equilibrium.
\end{remark}

The construction of the equilibrium return is based on the solution of a nonlinear ODE. For single-agent models with instantaneous trading costs of power form, a corresponding equation has been introduced and studied by \cite{guasoni.weber.18}.\footnote{Indeed, if $G(x)=\lambda |x|^q/q$, $q \in (1,2]$, then differentiating the first-order ODE~(15) in~\cite[Theorem~4.1]{guasoni.weber.18} and a change of variables as in Appendix~\ref{app:calculation} lead to the second-order ODE~\eqref{eqn:ergodic ODE}. An analogous link to a first-order equation is exploited in our existence proof in Appendix~\ref{app:ode}.\label{ODE relationship}}  In Appendix~\ref{app:ode}, we show that their existence and uniqueness proof can be extended to general cost functions satisfying Assumption~\ref{cond:cost}.

\begin{lemma}\label{ODE}
Suppose the instantaneous trading cost $G$ satisfies Assumption~\ref{cond:cost}. Then the ordinary differential equation
\begin{equation}\label{eqn:ergodic ODE}
\frac 1 2 \left(\frac{\gamma^1\beta^1 - \gamma^2\beta^2}{(\gamma^1+\gamma^2)\sigma}\right)^2 g''(x) + g'(x)(G')^{-1}(g(x)) = \frac{(\gamma^1+\gamma^2)\sigma^2}{2} x
\end{equation}
has a unique solution $g$ on $\R$ such that $xg(x)\leq 0$ for all $x\in\R$. Moreover, $g$ satisfies 
\begin{equation}\label{conditions}
\lim_{x\to -\infty} \frac{g(x)}{(G^*)^{-1}(\frac{(\gamma_1+\gamma_2)\sigma^2}{4}x^2)} = 1, \quad \quad
\lim_{x\to +\infty} \frac{g(x)}{(G^*)^{-1}(\frac{(\gamma_1+\gamma_2)\sigma^2}{4}x^2)} = -1,
\end{equation}
where $G^*$ is the Legendre transform of $G$. 
\end{lemma}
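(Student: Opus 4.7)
The plan, following the strategy outlined in footnote \ref{ODE relationship}, is to reduce \eqref{eqn:ergodic ODE} to a first-order ODE via an exact-integration trick, and then recover the correct solution by a shooting argument in the single remaining free parameter. Since the Legendre transform satisfies $(G^*)'(y) = (G')^{-1}(y)$, we have
\[
g'(x)(G')^{-1}(g(x)) = \frac{d}{dx} G^*(g(x)),
\]
so integrating \eqref{eqn:ergodic ODE} once yields the equivalent family of first-order equations
\[
\frac{\alpha^2}{2} g'(x) + G^*(g(x)) = \frac{(\gamma^1+\gamma^2)\sigma^2}{4} x^2 + C, \qquad \alpha := \frac{\gamma^1\beta^1 - \gamma^2\beta^2}{(\gamma^1+\gamma^2)\sigma},
\]
parametrized by a single integration constant $C \in \mathbb{R}$. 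This mirrors the first-order ODE used in \cite{guasoni.weber.18} for power costs.

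Next I would exploit symmetry. Since $G$ is even, so is $G^*$; both the first-order equation above and the constraint $xg(x)\leq 0$ are therefore invariant under the involution $(x,g)\mapsto (-x,-g)$. By the local uniqueness supplied by Assumption \ref{cond:cost}, any admissible solution must be odd, so $g(0) = 0$ and $C = \tfrac{\alpha^2}{2} g'(0)$. It thus suffices to construct $g$ on $[0,\infty)$ with $g(0) = 0$, $g(x) \leq 0$ for $x > 0$, and $G^*(g(x)) - \tfrac{(\gamma^1+\gamma^2)\sigma^2}{4}x^2$ bounded; the last property is what encodes the asymptotics \eqref{conditions}, because once it holds the first integral forces $g'(x)\to 0$, after which inverting $G^*$ on its negative branch yields the stated limits.

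The core step is then a shooting argument in the slope $p := g'(0) \leq 0$. Writing $g_p$ for the local solution of the first-order ODE with $g_p(0)=0$ and $g_p'(0)=p$, I would partition $(-\infty,0]$ into the set $A$ of $p$ for which $g_p$ returns to zero at some finite $x > 0$ (violating the sign constraint) and the set $B$ of $p$ for which $g_p$ diverges to $-\infty$ in finite time (ruling out global existence). Monotone dependence of $g_p$ on $p$, obtained by a comparison argument using the strict monotonicity of $G^*$ on $(-\infty,0]$, together with continuity of the flow, shows that $A$ and $B$ are open, nonempty, and disjoint; the boundary $p^* := \sup A = \inf B$ is then the unique initial slope producing a global solution that stays below zero. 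For $p=p^*$ the trajectory must be squeezed along the curve $\{G^*(g) = \tfrac{(\gamma^1+\gamma^2)\sigma^2}{4}x^2 + C^*\}$, which forces $g'_{p^*}(x) \to 0$ and yields \eqref{conditions}.

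The main obstacle is making this dichotomy quantitative under the general Assumption \ref{cond:cost} rather than merely for power costs: one has to show that small $|p|$ really does lead to a return to zero and large $|p|$ really does produce a blow-up, with a sharp transition. The polynomial growth bound $|(G')^{-1}(x)| \leq C(1+|x|^{k-1})$ provides the needed upper control on the nonlinear term to prevent escape for $|p|$ small, and the uniform bound $G''(x) \leq C$ for $|x| > x_0$ prevents excessively fast decay of $g_p$ and gives the comparison estimates for $|p|$ large; these ingredients replace the explicit power-law computations of \cite{guasoni.weber.18}, and are carried out in detail in Appendix \ref{app:ode}. Uniqueness of the solution satisfying $xg(x)\leq 0$ is then inherited from the uniqueness of the separatrix $p^*$ combined with the oddness reduction.
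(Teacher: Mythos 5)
Your reduction of the second-order equation~\eqref{eqn:ergodic ODE} to the one-parameter family of first-order equations via the exact form $g'(x)(G')^{-1}(g(x)) = \frac{d}{dx}G^*(g(x))$ is exactly the reduction the paper uses (the paper's Appendix~\ref{app:ode} works with $g'(x) = ax^2 - b - F(g(x))$ where $F = \frac{2}{\alpha^2}G^*$, which is your first integral rescaled, with $b$ in linear bijection with your $C$ and hence with your shooting slope $p = g'(0)$). The shooting idea is also in the right spirit, and the comparison/monotonicity in the shooting parameter is essentially the paper's $\partial y/\partial b < 0$ argument. However, there are two concrete gaps.

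First, the oddness argument as stated is circular. Invariance of the problem under $(x,g)\mapsto(-x,-g)$ together with local uniqueness of the IVP tells you that $\tilde g(x) := -g(-x)$ is \emph{another} admissible solution; to conclude $\tilde g = g$, and hence $g(0)=0$, you need exactly the global uniqueness that you are trying to prove. The paper avoids this by first constructing, for each $b$, the unique nonnegative solution $y(\cdot;b)$ on $[\sqrt{b_+/a},\infty)$ (Lemma~\ref{general ODE}), reflecting it to $(-\infty,0]$, and showing there is a unique $b_F$ for which the two branches paste continuously at $0$; oddness of the resulting $g$ is then a consequence of the construction, not an input. Your shooting from $(0,0)$ can be salvaged, but you must first restrict attention to odd candidates by fiat (i.e.\ prove that \emph{an} odd solution exists via shooting) and then prove separately that every solution with $xg(x)\le 0$ coincides with this one; these are two distinct steps.

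Second, and more seriously, the proposed mechanism for the asymptotics~\eqref{conditions} — that boundedness of $G^*(g(x)) - \frac{(\gamma^1+\gamma^2)\sigma^2}{4}x^2$ forces $g'(x)\to 0$ via the first integral, after which one inverts $G^*$ — is false. The first integral reads
\[
\frac{\alpha^2}{2}g'(x) \;=\; \frac{(\gamma^1+\gamma^2)\sigma^2}{4}x^2 + C - G^*(g(x)),
\]
so if $G^*(g(x)) - \frac{(\gamma^1+\gamma^2)\sigma^2}{4}x^2 \to C^*$ then $g'(x) \to \frac{2}{\alpha^2}(C-C^*)$, which need not vanish. Indeed for quadratic costs $G(x)=\lambda x^2/2$ the exact solution is $g(x) = -cx$ with $c = \sqrt{(\gamma^1+\gamma^2)\sigma^2\lambda/2}$ (see Remark~\ref{OU process}), so $g'\equiv -c \neq 0$, yet the asymptotics~\eqref{conditions} do hold because $(G^*)^{-1}(\frac{(\gamma^1+\gamma^2)\sigma^2}{4}x^2) = c|x|$. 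The trajectory does approach the nullcline $\{G^*(g) = \frac{(\gamma^1+\gamma^2)\sigma^2}{4}x^2 - \frac{\alpha^2}{2}b\}$ in the sense that the \emph{difference} tends to zero, but the vector field's transverse gradient grows with $x$, so proximity to the nullcline does not force $g'\to 0$. The paper instead proves the ratio statement directly by a sub/super-solution squeeze on $g(x)/F^{-1}(ax^2)$: it shows $\liminf \ge 1$ by comparing with the nullcline $h$, that the $\limsup$ equals the $\liminf$ by a subsolution of the form $(M-\delta)F^{-1}(ax^2)$, and then rules out $L>1$ by showing it would force super-quadratic growth of $g'$ and a vertical asymptote. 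You would need to replace your $g'\to 0$ step with an argument of this kind; as it stands it would lead you to a false intermediate claim and is not merely a matter of filling in details.
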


\begin{proof}
See Appendix~\ref{app:ode}.
\end{proof}

With the function $g$ from Lemma~\ref{ODE}, we can now define the ergodic state variable that will drive both the expected returns and optimal trading rates in equilibrium.

\begin{lemma}\label{lem:sde}
Let $g$ be the solution of the ODE~\eqref{eqn:ergodic ODE} from Lemma~\ref{ODE}. There exists a unique strong solution of the SDE
\begin{equation}\label{eqn:ergodic SDE}
dX_t = (G')^{-1}\left(g(X_t)\right)dt +  \frac{\gamma^1\beta^1 - \gamma^2\beta^2}{(\gamma^1+\gamma^2)\sigma} dW_t, \quad  t \geq 0 , \quad X_0 = \varphi^1_{0-} - \frac{s\gamma^2}{\gamma^1+\gamma^2}.
\end{equation}
This process is a recurrent diffusion.
 \end{lemma}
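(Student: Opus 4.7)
Write $a := (\gamma^1\beta^1 - \gamma^2\beta^2)/((\gamma^1+\gamma^2)\sigma)$ and $b(x) := (G')^{-1}(g(x))$, so that~\eqref{eqn:ergodic SDE} becomes $dX_t = b(X_t)\,dt + a\,dW_t$. Because $a$ is a nonzero constant and $b$ is continuous (a composition of the $C^2$ function $g$ from Lemma~\ref{ODE} with the continuous monotone function $(G')^{-1}$), the plan is to produce a local strong solution, upgrade it to a global one, and then establish recurrence from the mean-reverting structure of $b$.

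For local strong existence and pathwise uniqueness, Assumption~\ref{cond:cost} ensures that $(G')^{-1}$ is continuously differentiable on $\R\setminus\{0\}$, so $b$ is locally Lipschitz away from the origin. At $x=0$, the sign constraint $xg(x)\leq 0$ combined with continuity of $g$ forces $g(0)=0$, hence $b(0)=0$, and a Yamada--Watanabe-type monotone-drift argument (exploiting that $(G')^{-1}$ is non-decreasing and that $g$ must switch from nonnegative to nonpositive at $0$) bridges the possibly non-Lipschitz behaviour there. This yields a unique strong solution up to an explosion time $\tau$.

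To preclude explosion, I would apply It\^o's formula to $V(x)=x^2$ with the stopping times $\tau_n:=\inf\{t:|X_t|\geq n\}$. Since $(G')^{-1}$ is odd and increasing, the sign constraint lifts to $xb(x)\leq 0$ for every $x\in\R$, so
\begin{equation*}
\mathbb{E}\bigl[X_{t\wedge\tau_n}^2\bigr] \leq X_0^2 + a^2\,t
\end{equation*}
uniformly in $n$; letting $n\to\infty$ gives $\tau=\infty$ almost surely, so the solution is global.

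Finally, for recurrence I would use the scale density
\begin{equation*}
p'(y) = \exp\!\left(-\tfrac{2}{a^2}\int_0^y b(z)\,dz\right).
\end{equation*}
Since $zb(z)\leq 0$ for all $z$, the inner integral satisfies $\int_0^y b(z)\,dz\leq 0$ for every $y\in\R$, hence $p'(y)\geq 1$ and $p(\pm\infty)=\pm\infty$; the standard scale-function criterion for one-dimensional diffusions then forces recurrence. The main subtlety I anticipate is pathwise uniqueness at $x=0$ when $G''(0)=0$ and $(G')^{-1}$ fails to be Lipschitz there; the monotone-drift structure of $b$ is the intended substitute for the missing Lipschitz bound.
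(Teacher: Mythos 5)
Your plan is sound and, in contrast to the paper, essentially self‑contained. The paper's proof of Lemma~\ref{lem:sde} simply \emph{cites} \cite{caye.al.20} for both strong existence/uniqueness (via a localization argument) and for recurrence, and then devotes its actual effort to establishing the uniform polynomial moment bounds $\sup_{T\ge0}\E[|X_T|^k]<\infty$ via Veretennikov's drift criterion --- a stronger conclusion than recurrence, but one the paper wants ``for later use'' rather than for the lemma statement itself. Your route supplies the missing details directly: the key structural fact $xb(x)\le0$ (coming from $xg(x)\le0$, $g(0)=0$, and $(G')^{-1}$ odd increasing) does triple duty for you --- it delivers pathwise uniqueness (since $g$ decreasing and $(G')^{-1}$ increasing make $b$ \emph{decreasing}, so $(X^1_t-X^2_t)^2$ is non‑increasing), non‑explosion via $V(x)=x^2$, and recurrence via $p'\ge1$ on both sides. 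That is cleaner than what one would extract from the cited source and it correctly identifies the monotone structure as the mechanism replacing any Lipschitz bound at $x=0$.

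One small imprecision worth flagging: your assertion that Assumption~\ref{cond:cost} makes $(G')^{-1}$ continuously differentiable on $\R\setminus\{0\}$ is not literally guaranteed. Assumption~\ref{cond:cost}(ii) makes $G'$ strictly increasing and differentiable, hence $G''\ge0$, but it does not exclude $G''(x_0)=0$ at interior points $x_0\neq0$, in which case $(G')^{-1}$ fails to be Lipschitz at $G'(x_0)$. This does not damage the proof --- you do not actually need local Lipschitz anywhere, since pathwise uniqueness on all of $\R$ already follows from $b$ being continuous and decreasing (and weak existence from continuity of $b$ plus your non‑explosion bound, upgraded to strong existence via Yamada--Watanabe). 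I would simply drop the ``locally Lipschitz away from the origin'' remark and lead with the monotone‑drift argument as the sole source of uniqueness; that also matches better what you yourself identify at the end as ``the intended substitute for the missing Lipschitz bound.''
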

 
 \begin{proof}
See Section~\ref{sec:proofsregular}.
\end{proof}

\begin{remark}\label{OU process}
If the instantaneous trading cost is quadratic, $G(x) = \lambda x^2/2$, then $(G')^{-1}(x) =  x/\lambda$, and the solution of~\eqref{eqn:ergodic ODE} from Lemma~\ref{ODE} is $g(x) = -\sqrt{\frac{(\gamma^1+\gamma^2)\sigma^2\lambda}{2}}x$. Accordingly, the dynamics~\eqref{eqn:ergodic SDE} simplify to
 \begin{align*}
dX_t = - \sqrt{\frac{(\gamma^1+\gamma^2)\sigma^2}{2\lambda}} X_tdt +  \frac{\gamma^1\beta^1 - \gamma^2\beta^2}{(\gamma^1+\gamma^2)\sigma} dW_t,  \quad  t \geq 0 , \quad X_0 = \varphi^1_{0-} - \frac{s\gamma^2}{\gamma^1+\gamma^2}.
\end{align*}
Whence, $X$ is an Ornstein-Uhlenbeck process in this case. In general, the drift rate in~\eqref{eqn:ergodic SDE} describes the nonlinear attraction of the process $X$  towards its average level zero, where $xg(x) \leq 0$ ensures that the process is indeed mean reverting and in turn converges to an ergodic limit.
\end{remark}

We now present our first main result. It identifies the equilibrium return for general smooth, convex cost functions.

\begin{theorem}\label{thm:regular}
Recall $\bar{\gamma}=\frac{\gamma^1\gamma^2}{\gamma^1+\gamma^2}$. With the solution $(X_t)_{t \geq 0}$ of~\eqref{eqn:ergodic SDE}, define
\begin{equation}\label{eqn:return dynamic1}
\mu_t =\bar{\gamma}\left[s\sigma^2+\sigma(\beta_t^1+\beta_t^2)\right]+\frac{(\gamma^1 - \gamma^2)\sigma^2}{2} X_t, \quad t \geq 0.
\end{equation}
Then, the trading rates
\begin{equation}\label{strategy}
\dot{\varphi}^1_t = (G')^{-1}\left(g(X_t)\right), \quad \dot{\varphi}^2_t = -(G')^{-1}\left(g(X_t)\right), \quad t \geq 0
\end{equation}
clear the corresponding market and are individually optimal in the long run. Therefore, $(\mu_t)_{t \geq 0}$ is an equilibrium return. 
\end{theorem}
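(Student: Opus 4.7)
Market clearing is immediate: since $\dot{\varphi}^1_t+\dot{\varphi}^2_t\equiv 0$, one has $\varphi^1_t+\varphi^2_t\equiv \varphi^1_{0-}+\varphi^2_{0-}=s$. For individual optimality I would fix $n\in\{1,2\}$, let $\dot\varphi$ be any competing admissible rate starting at $\varphi^n_{0-}$, and set $\psi_t:=\varphi_t-\hat\varphi^n_t$, so $\psi_0=0$. Convexity of $G$ together with the expansion of the quadratic tracking penalty yields the pointwise bound
\begin{equation*}
J^n_T(\dot\varphi)-J^n_T(\dot{\hat\varphi}^n) \leq \E\int_0^T \Big[\psi_t\big(\mu_t-\gamma^n\sigma(\sigma\hat\varphi^n_t+\beta^n_t)\big) - G'(\dot{\hat\varphi}^n_t)\dot\psi_t - \tfrac{\gamma^n\sigma^2}{2}\psi_t^2\Big]dt.
\end{equation*}
The strategy is a verification argument of Pontryagin type: show that the candidate satisfies the adjoint (first-order) equation, so that the linear terms cancel and only a boundary term plus a nonpositive quadratic remain.

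To implement this, set $M^n_t:=G'(\dot{\hat\varphi}^n_t)$. Symmetry of $G$ makes $(G')^{-1}$ odd, hence $M^n_t=(-1)^{n-1}g(X_t)$. Applying It\^o to $g(X_t)$ and invoking the ODE~\eqref{eqn:ergodic ODE} identifies the drift of $M^n_t$ as $(-1)^{n-1}\tfrac{(\gamma^1+\gamma^2)\sigma^2}{2}X_t$. Integration by parts (with a localization argument exploiting~\eqref{eq:integrable} and the polynomial growth of $(G')^{-1}$ from Assumption~\ref{cond:cost} to handle the stochastic-integral piece) gives
\begin{equation*}
\E\int_0^T G'(\dot{\hat\varphi}^n_t)\dot\psi_t\,dt = \E[M^n_T\psi_T] - (-1)^{n-1}\tfrac{(\gamma^1+\gamma^2)\sigma^2}{2}\E\int_0^T \psi_t X_t\,dt.
\end{equation*}
Integrating~\eqref{eqn:ergodic SDE} and using $\beta^n_t=\beta^n W_t$ expresses $\hat\varphi^n_t$ explicitly in terms of $X_t$ and $W_t$; a direct calculation then verifies the crucial adjoint identity
\begin{equation*}
\mu_t-\gamma^n\sigma(\sigma\hat\varphi^n_t+\beta^n_t) = -(-1)^{n-1}\tfrac{(\gamma^1+\gamma^2)\sigma^2}{2}X_t,
\end{equation*}
which is precisely the reason for the form of~\eqref{eqn:return dynamic1}. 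The linear terms therefore cancel and the bound collapses to
\begin{equation*}
J^n_T(\dot\varphi) - J^n_T(\dot{\hat\varphi}^n) \leq -\E[M^n_T\psi_T] -\tfrac{\gamma^n\sigma^2}{2}\E\int_0^T \psi_t^2\,dt.
\end{equation*}

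Dividing by $T$ and sending $T\to\infty$, the quadratic term is nonpositive and can be dropped, so it remains to show that $T^{-1}\E[M^n_T\psi_T]\to 0$. By Cauchy--Schwarz this is bounded by $\sqrt{\E[g(X_T)^2]}\cdot T^{-1}\sqrt{\E[\psi_T^2]}$; the transversality condition~\eqref{assump:vanishing} applied to both $\varphi$ and $\hat\varphi^n$ gives $T^{-1}\sqrt{\E[\psi_T^2]}\to 0$, so the argument will be complete once $\E[g(X_T)^2]$ is controlled uniformly in $T$.

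The main obstacle I expect is exactly this last moment bound: for general $G$ satisfying Assumption~\ref{cond:cost}, establishing $\sup_{T\ge 0}\E[g(X_T)^2]<\infty$ requires combining the recurrence/ergodicity of $X$ from Lemma~\ref{lem:sde} with the asymptotic growth~\eqref{conditions} of $g$ at infinity, and will probably proceed via a Lyapunov estimate (for instance, applying It\^o to a suitable power of $X_t$ and using the mean-reverting sign condition $xg(x)\le 0$ to get a uniform-in-$T$ $L^2$ bound on $X_T$ that dominates $g(X_T)^2$). A secondary technical issue is the martingale property used in the integration by parts above, for which the polynomial growth of $(G')^{-1}$ together with~\eqref{eq:integrable} suggests a standard localization-plus-dominated-convergence argument.
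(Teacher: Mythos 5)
Your plan reproduces the paper's proof: market clearing is trivial, and individual optimality is shown by a Pontryagin-type verification — convexity of $G$, the adjoint identity $\mu_t-\gamma^n\sigma(\sigma\varphi^n_t+\beta^n_t)=\mp\tfrac{(\gamma^1+\gamma^2)\sigma^2}{2}X_t$, It\^o on $g(X_t)$ with the ODE~\eqref{eqn:ergodic ODE}, integration by parts, and a transversality argument driven by a uniform bound on $\E[g(X_T)^2]$. You correctly identify the latter moment bound as the key technical ingredient; the paper obtains it exactly along the Lyapunov/drift lines you anticipate (via the uniform-in-$T$ polynomial moment bounds on $X$ from the mean-reverting sign condition, together with the linear growth of $g$ from Lemma~\ref{symmetric}), and the martingale step rests on boundedness of $g'$ rather than growth of $(G')^{-1}$, a minor rephrasing of what you describe.
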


 \begin{proof}
See Section~\ref{sec:proofsregular}.
\end{proof}

The first term in~\eqref{eqn:return dynamic1} is the frictionless equilibrium return from~\eqref{eq:eqnocosts}. Accordingly, the second term describes how the equilibrium return changes due to transaction costs. Evidently, if both agents have the same risk aversion, then the adjustment is zero like for the quadratic costs studied by~\cite{bouchard.al.18}. In this case, both agents are adversely affected by the transaction costs, but the market still clears at the frictionless equilibrium price. 

For heterogenous agents, there is a nontrivial liquidity premium depending on the current demand imbalance. Indeed, in equilibrium, the state dynamics $dX_t$ also describe the evolution of the deviation between agent 1's actual position and its frictionless counterpart,
\begin{align*}
dX_t = (G')^{-1}\left(g(X_t)\right)dt +\frac{\gamma^1\beta^1-\gamma^2\beta^2}{(\gamma^1+\gamma^2)\sigma}dW_t = d(\varphi^1_t-\bar{\varphi}^1_t).
\end{align*}
By market clearing, the sign is reversed for agent 2.  Accordingly, the liquidity premium is positive if the more risk averse agent wants to sell and negative if the more risk averse agent wants to buy to move closer to the corresponding frictionless allocation. In each case, the return adjustment ensures market clearing by offsetting the more risk averse agent's stronger motive to trade.

For quadratic costs, we recover the Ornstein-Uhlenbeck returns from \cite[Corollary~5.5]{bouchard.al.18}. For general convex trading costs, these are replaced by processes with nonlinear mean-reversion speeds. 

\section{Equilibrium with Proportional Costs}\label{sec:singular}

One important cost specification is not covered by Assumption~\ref{cond:cost}: proportional transaction costs. These arise as the limit $p \to 1$ in the model of \cite{almgren.03}. Rather than studying the (singular) limiting behaviour of the corresponding optimal strategies as in \cite{guasoni.weber.18}, we instead show that the equilibrium with proportional costs can be constructed directly using singular rather than regular stochastic control.

Since proportional costs only penalize trade size but not speed, risky positions are naturally described by general finite-variation processes in this case or, equivalently, by their Jordan-Hahn decompositions into minimal increasing processes -- the cumulative numbers of shares purchased and sold:
$$
\varphi_t = \varphi^n_{0-} +\varphi^{\uparrow}_t - \varphi^{\downarrow}_t. 
$$
As in  \cite{janecek.shreve.10,martin.schoeneborn.11,bouchaud.al.12,martin.14} we assume for simplicity that the (cumulative) costs $\lambda(\varphi^\uparrow_T+\varphi^\downarrow_T)$, $\lambda>0$, are proportional to the number of shares traded (rather than the monetary amount transacted). Agent $n$'s goal functional in turn becomes
\begin{equation}\label{eq:goal3}
J^n_T(\varphi) = \E\left[\int_0^T \left(\varphi_t\mu_t -\frac{\gamma^n}{2}(\sigma\varphi_t+\beta^n_t)^2\right)dt -\lambda (\varphi^{\uparrow}_T + \varphi^{\downarrow}_T) \right].
\end{equation}
We again focus on the long-run average performance $J^n_T(\varphi)/T$ as $T \to \infty$ of \emph{admissible strategies} that satisfy the integrability condition 
\begin{align}\label{assump:singular}
\E\left[\int_0^T \varphi_t^2 dt\right]<\infty, \qquad\E[\varphi^\uparrow_T+\varphi^\downarrow_T] <\infty, \quad \mbox{for all $T>0$,}
\end{align}
as well as the transversality condition
\begin{align}\label{singular:vanishing}
\lim_{T\to \infty} \frac {1} {T} \E\left[ |\varphi_T| \right] = 0.
\end{align}

\subsection{Equilibrium}

We use an analogous notion of Radner equilibrium as in Definition~\ref{def:eq}:

\begin{definition}\label{def:eq2}
$\mu \in \mathscr{L}^2$ is a \emph{(long-run) equilibrium return} if there exist admissible strategies $\varphi^1$, $\varphi^2$ for agents 1 and 2 such that:
\begin{enumerate}
\item[(Market Clearing)] The total demand $\varphi^1+\varphi^2$ matches the supply $s$ of the risky asset at all times;
\item[(Individual Optimality)] The strategy $\varphi^n$ is optimal for the long-run version of agent $n$'s control problem~\eqref{eq:goal3} in that,
\begin{equation}
\limsup_{T\to \infty} \frac{1}{T} \left[ J_T^n(\varphi) - J_T^n(\varphi^n)\right] \leq 0, \quad \mbox{for all competing admissible strategies $\varphi$.}  
\end{equation}
\end{enumerate}
\end{definition}

The construction of the equilibrium return with proportional costs is based on the analogue of the mean-reverting process from Lemma~\ref{lem:sde}. This turns out to be a doubly-reflected Brownian motion,
\begin{align}\label{reflected BM}
dX_t = \frac{\gamma^1\beta^1 - \gamma^2\beta^2}{(\gamma^1+\gamma^2)\sigma} dW_t + dL_t - dU_t, 
\end{align}
where $X_{0-} =\varphi_{0-}^1 -\frac{s\gamma^2}{\gamma^1+\gamma^2}$ and $L$, $U$ are the minimal increasing processes with $L_{0-}=U_{0-}=0$ that keep $(X_t)_{t \geq 0}$ in the interval $[-l, l]$,\footnote{See \cite{kruk2007explicit} for the pathwise construction of $L$, $U$. In particular, there is an initial jump in $L$ or $U$ if the initial value $X_{0-}$ lies below $-l$ or above $l$, respectively.
On $(0,T]$, $L$ and $U$ have continuous paths.} 
whose endpoints have the following explicit expression:
\begin{equation}\label{eq:ell}
l = \left({\frac{3\lambda(\gamma^1\beta^1-\gamma^2\beta^2)^2}{(\gamma^1+\gamma^2)^3\sigma^4}}\right)^{\frac 1 3}.
\end{equation}

With the state variable $X$ at hand, we can now formulate our second main result. It shows that the equilibrium return with proportional costs can be expressed in direct analogy to its counterpart for the smooth, superlinear costs treated in Theorem~\ref{thm:regular}. The only difference is that the mean-reverting state variable in Theorem~\ref{thm:regular} is replaced by the doubly-reflected Brownian motion from~\eqref{reflected BM}.

\begin{theorem}\label{thm:singular}
Recall $\bar{\gamma}=\frac{\gamma^1\gamma^2}{\gamma^1+\gamma^2}$. With the solution $(X_t)_{t \geq 0}$ of~\eqref{reflected BM}, define
\begin{equation}\label{eqn:return dynamic}
\mu_t =\bar{\gamma}\left[s\sigma^2+\sigma(\beta_t^1+\beta_t^2)\right]+\frac{(\gamma^1 - \gamma^2)\sigma^2}{2} X_t, \quad t \geq 0.
\end{equation}
Then, the trading strategies
\begin{equation}\label{singular strategy}
\varphi^1_t = \varphi^{1}_{0-}+L_t -U_t, \quad \varphi^2_t = \varphi^2_{0-}+U_t -L_t, \quad t \geq 0,
\end{equation}
clear the market and are individually optimal in the long run. Therefore $(\mu_t)_{t \geq 0}$ is an equilibrium return.
\end{theorem}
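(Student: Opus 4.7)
Market clearing holds by construction since $\varphi^1_t + \varphi^2_t = \varphi^1_{0-} + \varphi^2_{0-} = s$ for all $t \geq 0$. The substantial work is verifying individual optimality, which I would do for agent~$1$ (agent~$2$ follows by the symmetry $(\gamma^1,\beta^1)\leftrightarrow(\gamma^2,\beta^2)$, which flips the sign of $X$ and swaps $L$ and $U$). The plan mirrors the smooth-cost proof of Theorem~\ref{thm:regular}, with singular stochastic control in place of regular control and the It\^o-Tanaka formula in place of It\^o's lemma.

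Substituting the candidate $\mu_t = \bar\gamma[s\sigma^2+\sigma(\beta^1_t+\beta^2_t)] + A X_t$, with $A:=(\gamma^1-\gamma^2)\sigma^2/2$, into~\eqref{eq:goal3} and using the frictionless first-order condition $\bar\mu_t = \gamma^1\sigma^2 \bar\varphi^1_t + \gamma^1\sigma\beta^1_t$ from Section~\ref{sec:frictionless}, the integrand decomposes into a strategy-independent baseline plus the friction-sensitive quadratic form $A X_tY_t - \tfrac{\gamma^1\sigma^2}{2}Y_t^2$ in $Y_t := \varphi^1_t - \bar\varphi^1_t$. The candidate strategy yields $Y_t = X_t$, so agent~$1$'s problem reduces to a one-dimensional ergodic singular control against the exogenous doubly-reflected Brownian motion $X$ with proportional cost $\lambda$. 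I would then construct a concave $C^1$ potential $v : \R \to \R$, piecewise $C^2$, with $v'(x)=\lambda$ on $(-\infty,-l]$, $v'(x)=-\lambda$ on $[l,\infty)$, and satisfying on $(-l,l)$ an HJB-type equation
\[
\tfrac{1}{2}\Bigl(\tfrac{\gamma^1\beta^1-\gamma^2\beta^2}{(\gamma^1+\gamma^2)\sigma}\Bigr)^{\!2} v''(x) + h(x) \;=\; \rho,
\]
where $h$ encodes the running reward on the equilibrium diagonal $Y=X$ and $\rho$ is the conjectured long-run value; the half-width $l$ in~\eqref{eq:ell} is precisely what smooth pasting and super-contact at $\pm l$ require. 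Applying It\^o-Tanaka to $v(Y_t)$ along an arbitrary admissible trajectory and using $|v'|\leq \lambda$ to dominate the finite-variation contribution of $\varphi^1$ by $-\lambda\,d(\varphi^{1,\uparrow}+\varphi^{1,\downarrow})$ gives $T^{-1} J^1_T(\varphi) \leq \rho + T^{-1}\bigl(v(Y_0)-\E[v(Y_T)]\bigr) + o(1)$; linear growth of $v$ together with~\eqref{assump:singular}--\eqref{singular:vanishing} forces the boundary term to vanish, the candidate strategy turns every inequality into an equality, and recurrence of the DRBM yields $J^1_T(\varphi^1)/T\to\rho$.

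I expect the main obstacle to be the construction of the potential $v$. The HJB carries the cross-term $A X_tY_t$ coupling the agent's state $Y$ to the exogenous $X$, so a purely one-dimensional ansatz in $y$ may be inadequate; a cleaner route is to work with $Z_t := Y_t - X_t$, whose dynamics have no Brownian component, and to build the potential on this purely finite-variation coordinate. Handling initial conditions $\varphi^1_{0-}$ outside $[-l,l]$ (where $L$ or $U$ has an initial jump) and the singular reflection at the boundaries are additional technical points, and upgrading $\limsup T^{-1}J^1_T(\varphi)\leq \rho$ to equality along the candidate strategy requires a careful application of Birkhoff's ergodic theorem to the doubly-reflected Brownian motion on $[-l,l]$.
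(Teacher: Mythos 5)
Market clearing is immediate and you have it right. The gap is in individual optimality, where the reduction ``the candidate strategy yields $Y_t = X_t$, so agent $1$'s problem reduces to a one-dimensional ergodic singular control'' is not correct. The candidate stays on the diagonal $Y = X$, but a competing strategy need not, and the friction-sensitive reward $AX_tY_t - \tfrac{\gamma^1\sigma^2}{2}Y_t^2$ depends jointly on the exogenous $X_t$ and the controlled $Y_t$. A one-dimensional potential $v(y)$ cannot satisfy the HJB uniformly in $X$, and working with $Z_t := Y_t - X_t$ does not resolve this: $Z$ has zero quadratic variation, so there is no second-order operator for an HJB in $Z$ alone, and the reward still depends on both $X$ and $Z$. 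Concretely, plugging the diagonal reward $h(x) = -\tfrac{\gamma^2\sigma^2}{2}x^2$ into your ODE $\tfrac{1}{2}\delta^2 v''(x)+h(x)=\rho$ and imposing smooth pasting $v'(\pm l)=\mp\lambda$ and super-contact $v''(\pm l)=0$ forces $l^3 = 3\lambda\delta^2/(2\gamma^2\sigma^2)$. This differs from the theorem's~\eqref{eq:ell}, which gives $l^3 = 3\lambda\delta^2/((\gamma^1+\gamma^2)\sigma^2)$, unless $\gamma^1 = \gamma^2$ --- so the one-dimensional ansatz produces the wrong no-trade boundary in the heterogeneous case.

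The paper's proof circumvents the two-dimensionality by a direct comparison argument. Set $\theta_t := \varphi_t - \varphi^1_t$. Writing $\varphi_t+\varphi^1_t = \theta_t + 2\varphi^1_t$ and using the first-order-condition identity~\eqref{eq:repprop} ($\gamma^1\sigma(\sigma\varphi^1_t + \beta^1_t) - \mu_t = \gamma\sigma^2 X_t$, with $\gamma=(\gamma^1+\gamma^2)/2$) yields
\begin{align*}
J^1_T(\varphi) - J^1_T(\varphi^1) \leq -\E\Big[\int_{0-}^T \tfrac{\gamma^1\sigma^2}{2}\theta_t^2\, dt\Big] - \E\Big[\int_{0-}^T \gamma\sigma^2 X_t\theta_t\,dt + \int_{0-}^T g(X_t)\,d\theta_t\Big],
\end{align*}
where $g$ is the explicit function of Lemma~\ref{lem:gprop} --- odd, decreasing, $g(\pm l)=\mp\lambda$, $g'(\pm l)=0$, $|g|\leq\lambda$, and $\tfrac{1}{2}\delta^2 g''(x)=\gamma\sigma^2 x$ on $(-l,l)$; note the coefficient $\gamma$, not $\gamma^2$. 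The bound on the proportional-cost difference uses $|g|\leq\lambda$ and that $L$, $U$ only grow where $g = \mp\lambda$. Integration by parts on the \emph{bilinear} quantity $g(X_t)\theta_t$ (using the ODE for $g$ and $g'(\pm l) = 0$ to annihilate the reflection terms in $dg(X_t)$) reduces the second expectation to the boundary term $\E[g(X_T)\theta_T]$, which vanishes under~\eqref{singular:vanishing}. The first term is nonpositive, so $\limsup_{T\to\infty}\tfrac{1}{T}[J^1_T(\varphi) - J^1_T(\varphi^1)] \leq 0$ follows directly --- Definition~\ref{def:eq2} only asks for this comparison, so no ergodic theorem or exact long-run value $\rho$ is needed. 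If one insists on a ``potential,'' the correct object is $V(x,z)=g(x)z$ in both coordinates, with $g$ the shadow price (dual variable) rather than the derivative of a one-dimensional primal value function.
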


 \begin{proof}
See Section~\ref{sec:proofssingular}.
\end{proof}

Note that, in equilibrium, each agent's singular control problem has a fully explicit solution. Similar closed-form expressions for optimal no-trade regions also obtain for the ergodic control of Brownian motion, which underlies the tractability of problems with \emph{small} transaction costs~\cite{soner.touzi.13,kallsen.muhlekarbe.17,cai.al.17}. Surprisingly, the equilibrium constructed in Theorem~\ref{thm:singular} displays the same tractability, even though the corresponding equilibrium return is not zero but a reflected Brownian motion. 

\section{Calibration} \label{sec:calibration}

To assess the quantitative properties of our equilibrium returns, we now calibrate the model to price and trading-volume data for the US equity market. More specifically, we consider the 320 current constituents of the S\&P500 for which ten years of uninterrupted data are available from January~2, 2009 to January~2, 2019 in the CRSP database. (We do not work with an even longer time series, since the corresponding larger changes in price levels then become problematic for our arithmetic model.) To obtain the price dynamics of a ``typical stock'', we then compute the capitalization-weighted average of the respective prices. The total number of outstanding shares of this average stock then is the number of shares outstanding for all our stocks. Likewise, the total share turnover is also aggregated across all stocks. The corresponding time series are available in the online appendix of the present paper.

\subsection{Calibration of the Frictionless Baseline Model}  \label{subsec:calibrationEx}

We first consider the frictionless baseline version of the model from Section~\ref{sec:frictionless}. The exogenous (absolute) daily volatility $\sigma$ can be estimated directly from the time series of stock prices, leading to $\sigma=1.88$ for our dataset.\footnote{Since our average stock prices are 124.11, this corresponds to a Black-Scholes volatility of around $23.8\%$.} To obtain a simple parsimonious model for the equilibrium returns, we suppose throughout as in \cite{lo.al.04} that there is no aggregate endowment ($\beta^1_t=-\beta^2_t$). Then, the frictionless equilibrium expected return from~\eqref{eq:eqnocosts} is $\bar{\mu}=\bar{\gamma}s\sigma^2$. As the number of shares outstanding is $s = 2.46\times 10^{11}$, we choose $\bar{\gamma}=8.31\times 10^{-14}$ to match this to the average (absolute) daily returns $\bar{\mu}$ of $0.072$ in our time series.\footnote{This corresponds to a yearly Black-Scholes return of $14.44\%$ relative to the average price level.}

\subsection{Calibration with Transaction Costs}\label{ss:calcosts}

Whereas the frictionless equilibrium price only depends on the aggregate risk aversion $\bar{\gamma}=\frac{\gamma^1\gamma^2}{\gamma^1+\gamma^2}$ and aggregate endowment $\beta^1+\beta^2$, the individual values of these parameters need to be pinned down to determine equilibria with transaction costs. Moreover, the initial allocations fo the agents need to be specified and an appropriate estimate for the respective trading cost is evidently needed. 

\paragraph{Proportional Costs}
For proportional costs, we use the estimate obtained in \cite{novy.velikov.16} for value-weighted trading strategies: $0.25\%$ of the average stock prices, that is $\lambda_1=0.31$ for our dataset.\footnote{Somewhat larger  bid-ask spreads of 1\% are used \cite{lynch.tan.11,buss.dumas.17}, for example.} Once the aggregate risk aversion $\bar{\gamma}$ is fixed, the individual agents' absolute risk aversions $\gamma^1$, $\gamma^2$ are free parameters in the present model, which correspond to the agents' sizes relative to each other. If both agents are of the same size, the frictional equilibrium coincides with its frictionless counterpart.  To illustrate the effect of heterogeneity, we set $\gamma^2=2\gamma^1$, so that agent 2 has half the risk capacity of agent 1. (These specific parameter values are chosen because they lead to realistic levels of liquidity premia in the extended version of the model with endogenous volatilities, see Section~\ref{subsec:numericalResults}.) Then, with $\bar{\gamma}=8.31\times 10^{-14}$ we have $\gamma^1=1.25\times 10^{-13}$ and $\gamma^2=2.5\times10^{-13}$. For the initial allocations, we suppose for simplicity that each agent initially holds a fraction of the total supply equal to their share of the total risk tolerance, $\varphi^1_{0-}=\frac{\gamma^2}{\gamma^1+\gamma^2}s=s-\varphi^2_{0-}$. This minimizes the effect of transaction costs because no initial bulk trades are necessary in this case. But the initial allocation generally only affects the initial conditions of the state variables in our long-run equilibria in Theorems~\ref{thm:singular}, so that the effect of different specifications is quickly averaged out in any case.

Finally, we calibrate the value of the endowment volatilities $\beta_1^1=-\beta_1^2=\beta_1$ to time-series data for trading volume. More specifically, given our estimate $\lambda_1=0.312$ from the proportional cost, we choose the parameter $\beta_1$ to match the average daily share turnover in 2009-2018, which is $\text{ShTu}=1.84 \times 10^9$ (that is, about $187\%$ of the outstanding shares per year), to the corresponding long-term average value in our model. Using the ergodic theorem, the latter can be calculated as in~\cite[Lemma~C.2.]{gerhold.al.14}, 
$$
\text{ShTu} =\lim_{T \to \infty} \frac1T \int_0^Td|\varphi^1|_t  = \lim_{T \to \infty} \frac{L_T}{T}+  \lim_{T \to \infty} \frac{U_T}{T} = \frac{1}{2l}\left(\frac{\gamma^1\beta^1_1 - \gamma^2\beta^2_1}{(\gamma^1+\gamma^2)\sigma}\right)^2
=\left(\frac{\gamma^1+\gamma^2}{24\lambda_1 \sigma^2}\right)^{1/3}\beta_1^{4/3} \quad {\rm a.s.}
$$
Accordingly, we choose
$$
\beta_1= \left(\frac{24\text{ShTu}^3 \lambda_{1}\sigma^2}{\gamma^1+\gamma^2}\right)^{1/4} 
= 2.57 \times 10^{10}.
$$

\paragraph{Superlinear Costs}
For comparison, we also consider the power costs $G_q(x)=\lambda_q |x|^q/q$, $q \in (1,2]$. To choose the endowment volatilities $\beta^1_q=-\beta^2_q=\beta_q$ in this case, apply the ergodic theorem to compute the long-term average of the daily share turnover as
$$
\text{ShTu}= \lim_{T \to \infty}\frac1T \int_0^T |\dot{\varphi}^1_t|dt = \int_{-\infty}^\infty \left|(G_q')^{-1}\left(g_q(x)\right)\right| \nu_q(x) dx \quad {\rm a.s.}
$$
Here, $\nu_q(x)$ is the invariant density of the stationary law of the state variable $X$. For quadratic costs $G_2(x)=\lambda_{2} x^2/2$, this is an Ornstein-Uhlenbeck process (cf.~Remark~\ref{OU process}) whose stationary distribution is Gaussian with mean zero and variance $\sqrt{\lambda_2\beta_2^4}/\sigma^3\sqrt{2(\gamma^1+\gamma^2)}$. As $(G_2')^{-1}\left(g_q(x)\right)=-\sqrt{(\gamma^1+\gamma^2)\sigma^2/2\lambda_2}x$ for quadratic costs, the average turnover per year in turn is proportional to the endowment volatility $\beta_2$ in this case,
$$
\lim_{T\to \infty}\frac1T \int_0^T |\dot{\varphi}^1_t|dt = \sqrt{\frac{(\gamma^1+\gamma^2)\sigma^2}{2\lambda_2}} \sqrt{\frac{2}{\pi}} \frac{\beta_2 \lambda_2^{1/4}}{(2(\gamma^1+\gamma^2)\sigma^6)^{1/4}}=\left(\frac{\gamma^1+\gamma^2}{2\pi^2\sigma^2\lambda_2}\right)^{1/4}\beta_2 \quad {\rm a.s.}
$$
Accordingly, to match the average share turnover for a given quadratic transaction cost $\lambda_2$, we need $\beta_2 =\text{ShTu}/(\frac{\gamma^1+\gamma^2}{2\pi^2\sigma^2\lambda_2})^{1/4}$. Whence, it remains to choose an appropriate value for the trading cost parameter $\lambda_2$. To make its impact comparable to the proportional cost, we choose it to obtain the same stationary variance of the state variable $X$ as with proportional costs. 

\begin{figure}[htbp] 
    \centering
        \includegraphics[width=0.40\textwidth]{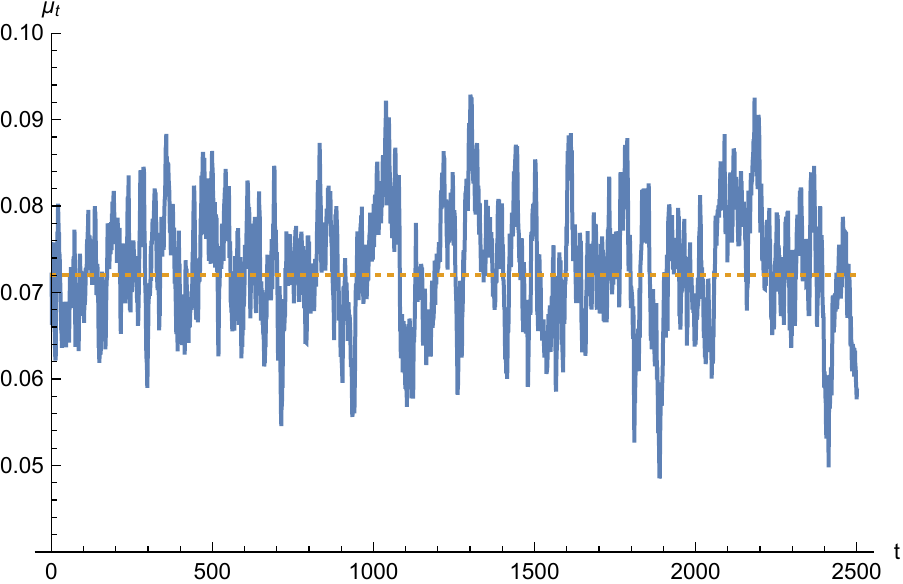}
         \includegraphics[width=0.40\textwidth]{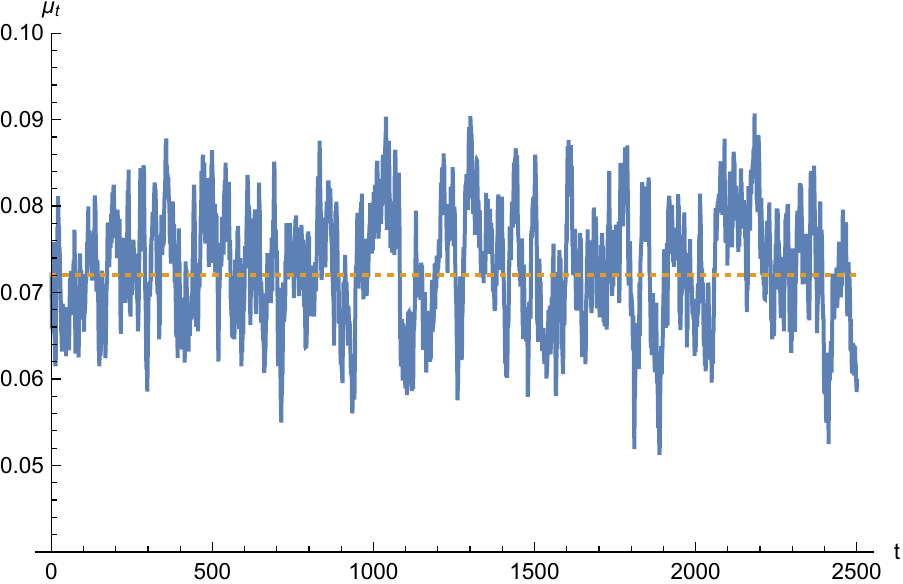}
        \includegraphics[width=0.40\textwidth]{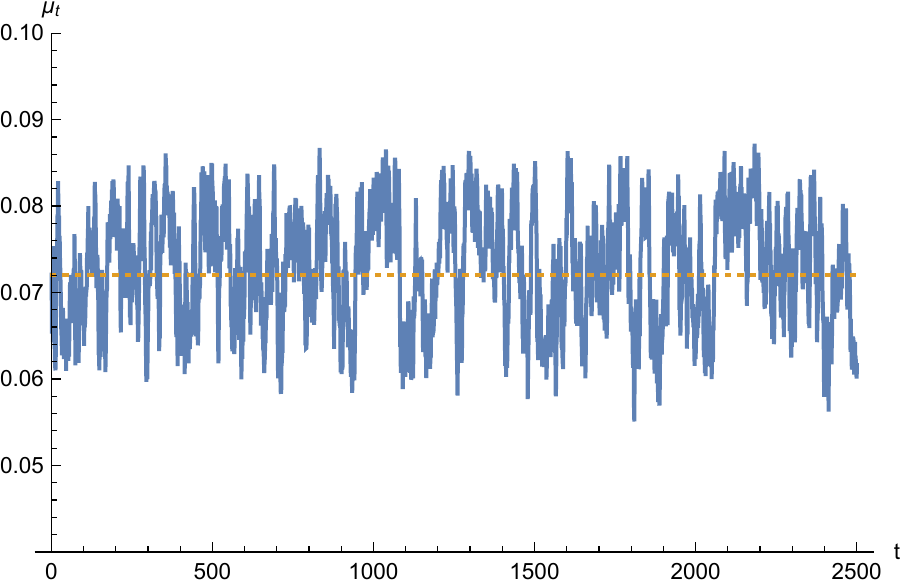}
        \includegraphics[width=0.40\textwidth]{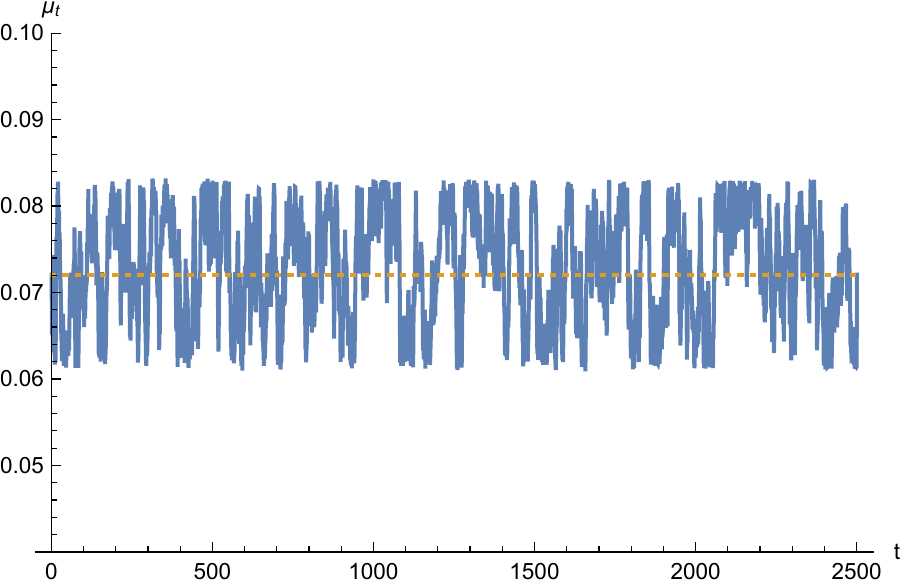}
    \caption{Simulated frictional equilibrium returns with calibrated parameters for quadratic trading costs (left upper panel), costs proportional to the $3/2$-th power of the agents' trading rates (right upper panel), to the $9/8$-th power (lower left panel), and proportional costs (lower right panel). The corresponding (daily) frictionless equilibrium return is constant and equal to $0.072$ here.}
    \label{plots1}
\end{figure}

\begin{figure}[htbp] 
    \centering
        \includegraphics[width=0.40\textwidth]{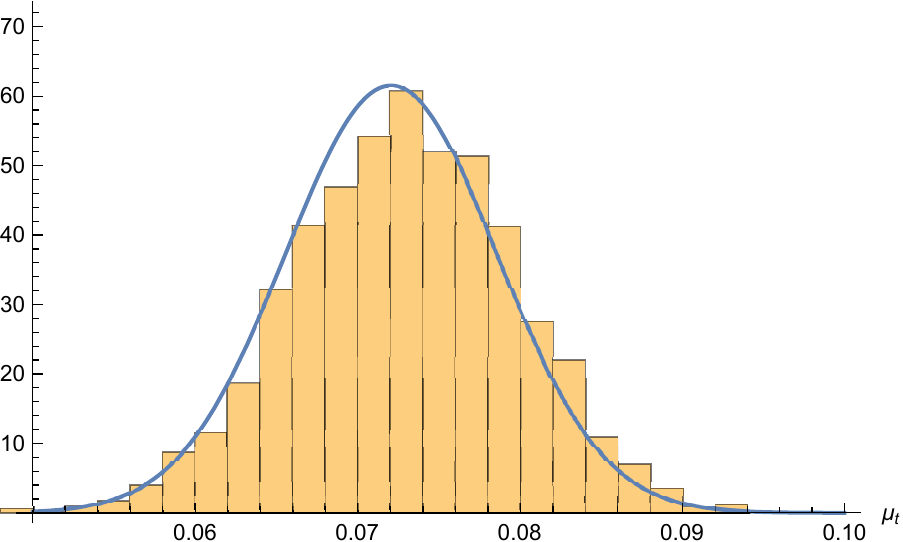}
         \includegraphics[width=0.40\textwidth]{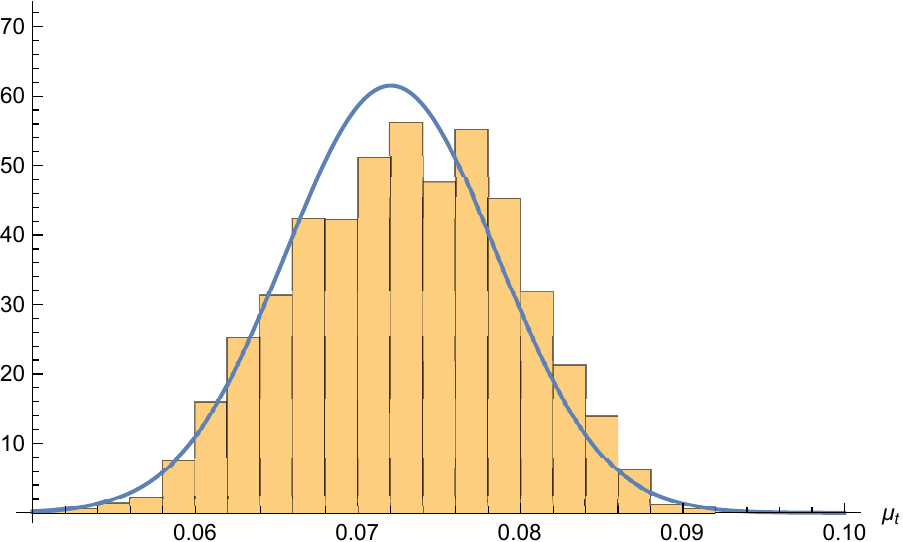}
        \includegraphics[width=0.40\textwidth]{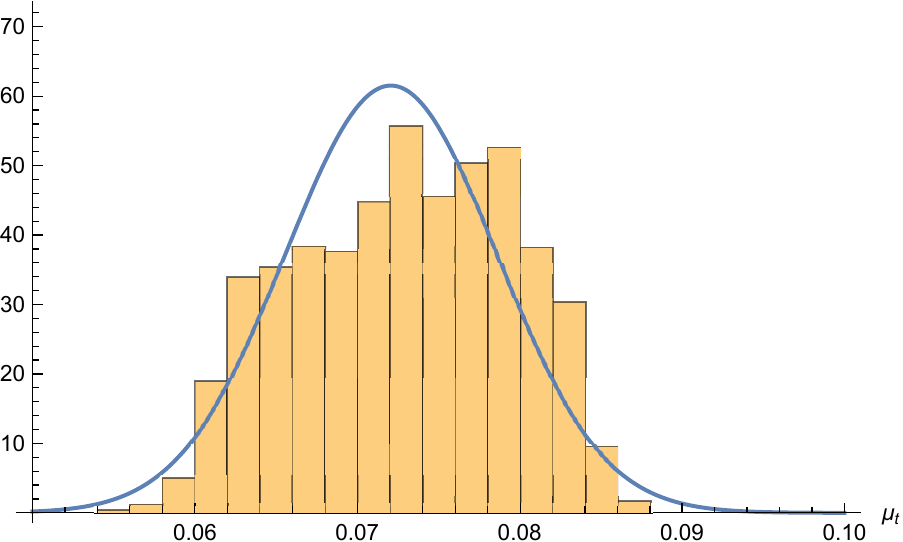}
        \includegraphics[width=0.4\textwidth]{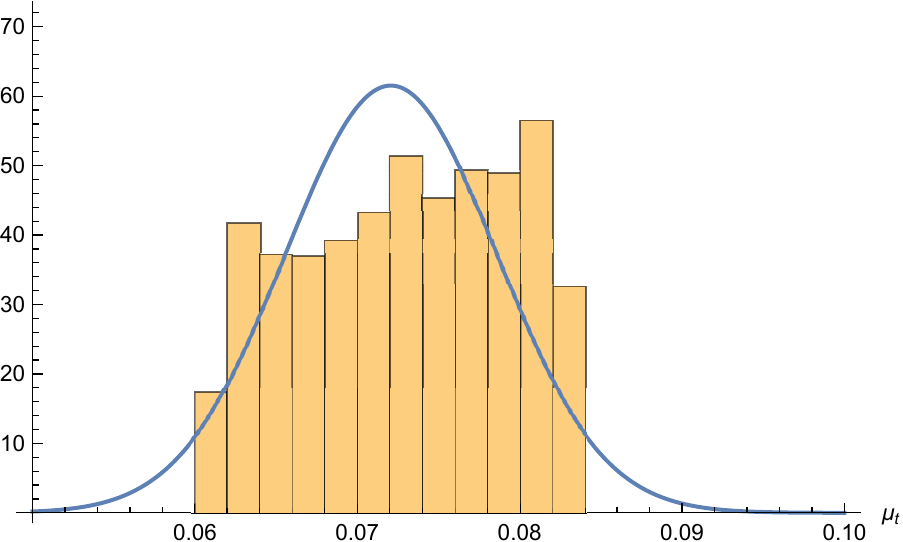}
    \caption{Empirical probability density functions of the simulated equilibrium returns for quadratic trading costs (left upper panel), costs proportional to the $3/2$-th power of the agents' trading rates (right upper panel), to the $9/8$-th power (lower left panel), and proportional costs (lower right panel) compared to the stationary normal distribution for quadratic costs.}
    \label{plots2}
\end{figure}

With proportional costs, this process has a uniform stationary law with standard deviation $l/\sqrt{3}$. With quadratic costs, the stationary standard deviation of the Ornstein-Uhlenbeck state variable is $\beta_2/\sqrt[4]{4\gamma/\lambda_2\sigma^2} = \text{ShTu}\sqrt{\pi\lambda_2}/\sqrt{2\gamma\sigma^2}$. To match this with the stationary standard deviation for proportional costs, we choose $\lambda_2=1.08\times 10^{-10}$. This in turn leads to $\beta_2 = 2.19\times 10^{10}$. Note that our ``equivalent quadratic cost'' is of the same order of magnitude as the direct estimates obtained from proprietary trade execution data for S\&P500 stocks in~\cite[Table~5]{collin.al.20}.

For general power costs $G_q(x)=\lambda_q |x|^q/q$ the solution $g_q$ of the ODE~\eqref{eqn:ergodic ODE} is not known explicitly. However, by exploiting the homotheticity of the power function, a change of variable allows to reduce \eqref{eqn:ergodic ODE} to an equation that only depends on the elasticity $q$ of the price impact function, but not the parameters $\lambda_q$, $\beta_q$ that we are trying to determine here. Accordingly, the values of $\lambda_q$, $\beta_q$ that match the average share turnover observed empirically as well as the variance of the state variable for proportional costs can be expressed as integrals of this universal function. For fixed $q$, these can in turn be computed by using a quadrature formula to integrate the numerical solution of~\eqref{eqn:ergodic ODE}, cf.~Appendix~\ref{app:calculation} for more details. For $q=1.5$ (which is in line with empirical estimates of actual trading costs in~\cite{almgren.al.05,lillo.al.03}), this leads to 
\begin{align*}
\beta_{1.5} = 2.33\times 10^{10}, \qquad \lambda_{1.5} = 5.22 \times 10^{-6}.
\end{align*}
Analogously, for $q=1.125$ (i.e., trading costs close to proportional), we obtain
\begin{align*}
\beta_{1.125} = 2.50\times 10^{10}, \qquad \lambda_{1.125} = 0.019.
\end{align*}

Simulations of ten years of daily equilibrium returns (generated with the same Brownian sample path to facilitate comparison) for these four sets of parameters are shown in Figure~\ref{plots1}. For our calibrated parameters, the frictional equilibrium returns display substantial deviations around their frictionless counterpart, but the differences between the equilibrium returns for the different cost specifications is much smaller.

Even though these numbers are generated from just one sample path, they in fact quite accurately reflect the stationary distributions of the state variables by the ergodic theorem. This is illustrated in Figure~\ref{plots2}, where we compare the empirical probability density functions to the stationary normal distribution for $q=2$. While the empirical distribution clearly does become more spread out for smaller $q$ (it is normal for $q=2$ but uniform for $q=1$), the realized distributions are nevertheless quite similar for our calibrated parameters. 

The simulated share turnover for $q=2$ and $q=3/2$ is compared to the historical trading volume data in Figure~\ref{plots3}. By the calibration of our model, the averages of the simulated trading volumes agree with the empirical data and broadly display the same mean-reverting behaviour. However, for the simple model with constant price volatility and homogenous trading needs, the variances of trading volume are substantially larger than in the data. Moreover, the autocorrelation functions in the model also decay much slower than their empirical counterparts. 

\begin{figure}[htbp] 
    \centering
        \includegraphics[width=0.4\textwidth]{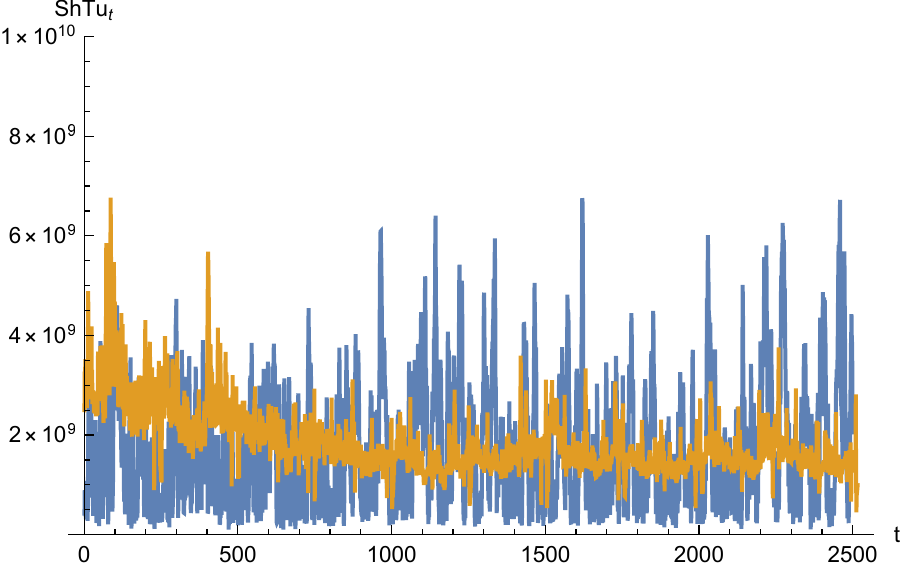}
         \includegraphics[width=0.4\textwidth]{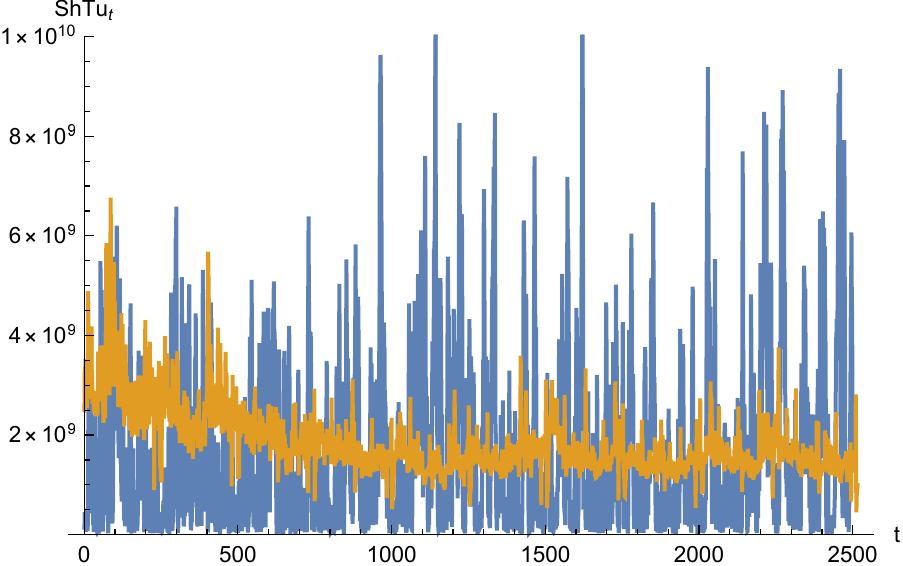}
    \caption{Simulated daily share turnover for quadratic trading costs (blue, left panel) and costs proportional to the $3/2$-th power of the agents' trading rates (blue, right panel), compared to empirical trading volume (orange).}
    \label{plots3}
\end{figure}

\section{More General Settings and Nonlinear FBSDEs}\label{sec:general}

We now discuss how the results from the previous sections formally extend to more general settings with a finite time horizon, more general state dynamics, and endogenous volatilities. Such extensions allow to address some of the shortcomings of our tractable baseline model. For example, more general state dynamics allow to generate stochastic volatility, an important stylized feature of price data in its own right as well as an important determinant of trading volume. Models with endogenous volatility open the door to studying how the latter depends on liquidity and can also generate the systematic liquidity premia observed empirically in expected returns. 

However, as we now outline, the analysis of such more general models leads to non-linear, fully-coupled systems of FBSDEs. For quadratic trading costs and sufficiently similar risk aversions of the two agents, some first wellposedness results have recently been developed in \cite{herdegen.al.19}. Extensions to more general trading costs are an intriguing but challenging direction for further research, but beyond our scope here. To nevertheless shed some first light on the qualitative properties of the equilibrium and the quantitative implications of our calibrated parameters, we discuss numerical algorithms based on the deep-learning approach of~\cite{han.al.17} in Section~\ref{ssec:numerics} below.

\subsection{Market}

In this section, we consider more general state dynamics than in Sections \ref{sec:frictionless}, \ref{sec:regular}, and \ref{sec:singular}. More specifically, for $n=1,2$, the cumulative endowment is of the general form
\begin{align*}
d\zeta^n_t =\beta^n_t dW_t, \quad \mbox{for a general $\beta^n \in \mathscr{L}^2$,}
\end{align*}
and the price of the risky asset has dynamics
\begin{align}\label{eq:dyn2}
dS_t=\mu_t dt+\sigma_t dW_t.
\end{align}
Now, not just the equilibrium return process $\mu \in \mathscr{L}^1$ but also the initial price $S_0 \in \mathbb{R}$ and the volatility process $\sigma \in \mathscr{L}^2$ are to be determined in equilibrium by matching the agents demand to the supply $s \in \mathbb{R}$ of the risky asset.  To pin down these additional quantities, we assume as in \cite{herdegen.al.19} that the terminal stock price is given by an exogenous $\mathscr{F}_T$-measurable random variable:
\begin{equation}\label{eq:temrinal2}
S_T=\mathfrak{S}.
\end{equation}
This can be interpreted as a fundamental value or as a terminal dividend.

\subsection{Frictionless Optimization and Equilibrium} \label{subsec:generalFrictionless}

The frictionless results from Section~\ref{s:fl} readily adapt this more general setting. Indeed, also for a general stochastic volatility process, pointwise maximization of the goal functional~\eqref{eq:nocosts} still yields the agents' individually optimal strategies,
\begin{align*}
\varphi^n_t = \frac{\mu_t}{\gamma^n \sigma_t^2} -\frac{\beta_t^n}{\sigma_t}, \quad t \in [0,T].
\end{align*}
The equilibrium return is then still pinned down by matching the agents' total demand $\varphi^1_t+\varphi^2_t$ to the supply $s$ of the risky asset:
\begin{equation}\label{eq:eqnocosts2}
\bar{\mu}_t= \bar{\gamma}\big[s\bar{\sigma}_t^2+\bar{\sigma}_t(\beta_t^1+\beta_t^2)\big],   \quad t \in [0,T], \quad \mbox{where } \bar{\gamma}=\frac{\gamma^1\gamma^2}{\gamma^1+\gamma^2}.
\end{equation}
Now, however, we also need to determine the corresponding initial price of the risky asset and its volatility. To this end, insert~\eqref{eq:eqnocosts2} into \eqref{eq:dyn2} and recall the terminal condition~\eqref{eq:temrinal2}. This leads the following scalar quadratic BSDE:
\begin{equation}\label{eq:bsde2}
d\bar{S}_t= \bar{\gamma}\big[s\bar{\sigma}_t^2+\bar{\sigma}_t(\beta_t^1+\beta_t^2)\big]dt+\bar{\sigma}_t dW_t, \quad \bar{S}_T=\mathfrak{S}.
\end{equation}
As is well known, the solution of this equation can be expressed in terms of the Laplace transform of the terminal condition, leading to explicit solutions in many concrete examples~\cite[Section~4.1]{herdegen.al.19}. 

\begin{example}\label{ex:bachelier}
If 
$$\beta^1+\beta^2=0 \quad \mbox{and} \quad \mathfrak{S}= bT+aW_T,$$ 
then the frictionless equilibrium price $\bar{S}$ is a Bachelier model with constant expected returns and volatilities:
$$
\bar{S}_t = (b-s\bar{\gamma}a^2)T+ s\bar{\gamma}a^2 t+a W_t, \quad t\in [0,T],
$$
Agents $n=1,2$'s optimal trading strategies in this frictionless equilibrium are
\begin{align}\label{frictionless strategies}
\bar\varphi^n_t = \frac{s\bar{\gamma}}{\gamma^n} -\frac{\beta_t^n}{a}, \quad t \in [0,T].
\end{align}
\end{example}

\subsection{Frictional Optimization and Equilibrium}

With transaction costs, both individual optimization and the corresponding equilibria become significantly more involved, leading to systems of fully-coupled nonlinear FBSDEs. Let us first consider the agents' individual optimization problems for a given initial asset price $S_0 \in \mathbb{R}$, expected returns process $(\mu_t)_{t\in [0,T]}$ and volatility process $(\sigma_t)_{t \in [0,T]}$. By strict convexity of the goal functional~\eqref{eq:goal2}, optimality of a trading rate $\dot{\varphi}^n$ for agent $n$ is equivalent to the first-order condition that the Gateaux derivative $\lim_{\rho \to 0} \frac{1}{\rho}(J^n_T(\dot\varphi^n+\rho\dot\varphi)-J^n_T(\dot\varphi^n))$ vanishes for \emph{any} perturbation $\varphi$, cf.~\cite{ekeland.temam.99}:
$$0=\mathbb{E}_t\left[\int_0^T \left(\mu_t \int_0^t \dot{\varphi}_u du-\gamma^n\sigma_t(\sigma_t\varphi^n_t+\beta^n_t) \int_0^t \dot{\varphi}_u du - G'(\dot{\varphi}^n_t)\dot{\varphi}_t\right)dt\right].$$
As in \cite{bank.al.17}, this can be rewritten using Fubini's theorem as 
$$
0=\mathbb{E}_t\left[\int_0^T \left(\int_t^T \Big(\mu_u -\gamma^n\sigma_u(\sigma_u\varphi^n_u+\beta^n_u)\Big) du -G'(\dot{\varphi}^n_t)\right)\dot{\varphi}_t dt\right].
$$
Since this has to hold for \emph{any} perturbation $\dot{\varphi}_t$, the tower property of conditional expectation yields
\begin{align}
G'(\dot{\varphi}^n_t) &=\mathbb{E}_t\left[\int_t^T \mu_u -\gamma^n\sigma_u\left(\sigma_u\varphi^n_u+\beta^n_u\right)du\right] = M_t - \int_0^t  \Big(\mu_u -\gamma^n\sigma_u \left(\sigma_u\varphi^n_u+\beta^n_u\right)\Big) du,\label{eq:bwopt1}
\end{align}
for a martingale $M^n=M^n_0+\int_0^\cdot Z^n_t dW_t$ that needs to be determined as part of the solution.  Solving for the dynamics of the agents' optimal trading rates would introduce the derivatives of the trading cost. Accordingly, it is preferable to instead work with the marginal trading cost as the backward process that describes the agents' optimal controls: 
\begin{align*}
Y^n_t = G'\left(\dot{\varphi}^n_t\right).
\end{align*}
With this notation, the corresponding trading rates are $\dot{\varphi}^n_t = (G')^{-1}(Y^n_t)$ and agent $n$'s optimal position $\varphi^n$ and the corresponding marginal trading costs $Y^n$ in turn solve the nonlinear FBSDE
\begin{align}
d\varphi^n_t &=(G')^{-1}(Y^n_t) dt, \qquad  &\varphi^n_0=\varphi^n_{0-},  \label{eq:eqphi}\\
dY^n_t &= \big(\gamma^n(\sigma_t\varphi^n_t+\beta^n_t)\sigma_t^\top -\mu_t\big)dt + Z^n_t dW_t, \qquad &Y^n_T =0.\label{eq:bwopt}
\end{align}
(Here, the terminal condition follows from $Y^n_T=G'(\dot{\varphi}^n_T)=G'(0)=0$.)  For constant quadratic costs $\lambda x^2/2$ and constant volatility $\sigma$, this FBSDE becomes linear and can in turn be solved by reducing it to some standard Riccati equations~\cite{bank.al.17,bouchard.al.18}. For volatilities and quadratic costs that fluctuate randomly, these ODEs are replaced by a backward \emph{stochastic} Riccati equation, compare~\cite{kohlmann.tang.02,annkirchner.kruse.15}. With nonlinear costs, no such simplifications are possible. In fact, the wellposedness of the system is generally unclear even for short time horizons since no Lipschitz condition for $(G')^{-1}$ is satisfied for costs of power form $G(x)= |x|^q/q$, $q \in (1,2)$, for example.

Despite these difficulties, formally solving for the corresponding equilibrium return is -- surprisingly -- not more difficult than for quadratic costs. To see this, first observe that symmetry of the trading cost $G$ implies that the marginal cost $G'$ and in turn its inverse $(G')^{-1}$ are antisymmetric. As a consequence, the market-clearing condition $\dot{\varphi}_t^1=-\dot{\varphi}_t^2$ implies that $(G')^{-1}(\dot{\varphi}^1_t)=-(G')^{-1}(\dot{\varphi}^2_t)$ and in turn
\begin{align}\label{eq:clear3}
Y^1_t + Y^2_t = 0, \qquad \mbox{for all}\; t\in[0,T].
\end{align}
After summing the corresponding backward equations~\eqref{eq:bwopt}, it follows that the frictional equilibrium return has to satisfy 
\begin{align*}
0 
&=  \mu_t -\gamma^1(\sigma_t\varphi^1_t+\beta^1_t)\sigma_t +  \mu_t -\gamma^2(\sigma_t\varphi^2_t+\beta^2_t)\sigma_t.
\end{align*}
Together with the market-clearing condition $\varphi^1_t+\varphi^2_t=s$, it follows that the frictional equilibrium return has the same relationship to the frictional volatility and the agents' optimal positions as for quadratic costs~\cite{herdegen.al.19}:
\begin{align}\label{mufric}
\mu_t 
&= \frac{1}{2} \left[(\gamma^2 s+(\gamma^1 - \gamma^2) \varphi^1_t)\sigma_t  +(\gamma^1\beta^1_t+\gamma^2\beta^2_t) \right]\sigma_t. 
\end{align}
Plugging expression~\eqref{mufric} back into agent 1's optimality condition~\eqref{eq:bwopt} in turn yields a backward equation that is linear in the optimal position, like for quadratic costs:\footnote{Note that these linear dynamics obtain here if this equation is expressed in terms of the marginal cost $Y^1_t=G'(\dot{\varphi}^1_t)$ rather than the trading rate $\dot{\varphi}^1_t$.}
\begin{align}\label{eq:eqbwd}
dY^1_t 
&=  \left(\gamma \sigma_t \varphi^1_t-\frac{\gamma^2 s}{2}\sigma_t - \frac{1}{2}(\gamma^2\beta^2_t - \gamma^1\beta^1_t)\right)\sigma_t dt+Z^1_tdW_t, \quad Y^1_T=0.
\end{align}
All nonlinearities are absorbed into the corresponding forward component,
\begin{equation}\label{eq:eqfwd}
d\varphi^1_t =(G')^{-1}(Y^1_t)dt,\quad  \varphi^1_{0}=\varphi^1_{0-}.
\end{equation}
If the volatility process $\sigma$ is not given exogenously, it needs to be determined from the terminal condition $\mathfrak{S}$. By plugging expression~\eqref{mufric} for the equilibrium return into the price dynamics~\eqref{eq:dyn2}, we obtain the following BSDE, which is coupled to the forward-backward system (\ref{eq:eqbwd}-\ref{eq:eqfwd}):
\begin{equation}\label{eq:eqbwd2}
dS_t= \frac{\sigma_t}{2}\Big[s\sigma_t\gamma^2+\gamma^1\beta^1_t+\gamma^2\beta^2_t+(\gamma^1-\gamma^2)\sigma_t\varphi^1_t\Big]dt+\sigma_t dW_t, \quad S_T=\mathfrak{S}.
\end{equation}
This is again the same equation as for quadratic costs~\cite{herdegen.al.19}. In particular, if both agents' risk aversions coincide ($\gamma^1=\gamma^2$), it decouples from the forward-backward system (\ref{eq:eqbwd}-\ref{eq:eqfwd}) and leads to the same equilibrium price as without transaction costs. For heterogenous but sufficiently similar risk aversions $\gamma^1 \approx \gamma^2$ and quadratic costs, it is shown in \cite{herdegen.al.19} that a solution of (\ref{eq:eqbwd}-\ref{eq:eqbwd2}) exists and identifies an equilibrium with transaction costs. However, the proof crucially exploits that with quadratic costs, the forward-backward system (\ref{eq:eqbwd}-\ref{eq:eqfwd}) for a given volatility process $(\sigma_t)_{t \in [0,T]}$ can be studied by means of the stochastic Riccati equation from~\cite{kohlmann.tang.02}. Establishing such results for more general trading costs -- where such tools are not available -- is a challenging direction for further research.

Here, let us just briefly sketch how the nonlinear FBSDE~(\ref{eq:eqbwd}-\ref{eq:eqbwd2}) reduces to a nonlinear ODE in the context of Section~\ref{sec:regular}, where the endowment volatilities $\beta^n_t=\beta^n W_t, n=1,2$ follow Brownian motions. Since the volatility process is exogenous and constant there, we don't have to deal with the second backward component~\eqref{eq:eqbwd2} and, moreover, can work with the state variable
\begin{align*}
X_t = \varphi^1_t-\frac{s\gamma^2}{\gamma^1+\gamma^2}+\frac{\gamma^1\beta^1_t-\gamma^2\beta^2_t}{(\gamma^1+\gamma^2)\sigma}.
\end{align*}
With this notation, the forward-backward system (\ref{eq:eqbwd}-\ref{eq:eqfwd}) becomes autonomous,
\begin{align}
&dX_t = (G')^{-1}\left(Y^1_t\right) dt + \frac{\gamma^1\beta^1-\gamma^2\beta^2}{(\gamma^1+\gamma^2)\sigma}dW_t, & X_0 &= \varphi^1_{0-} - \frac{s\gamma^2}{\gamma^1+\gamma^2}, \label{eq:fwd}\\
&dY^1_t = \frac{(\gamma_1+\gamma_2)\sigma^2}{2}X_t dt + Z^1_t dW_t, & Y_T &= 0. \label{eq:bwd}
\end{align}
Now use the standard ansatz that the backward component $Y^1_t$ should be a function $g(t,X_t)$ of time and the forward component. It\^o's formula and the dynamics of the forward component in turn yield
\begin{align*}
dY^1_t &= \left(g_t(t,X_t)+g_x(t,X_t)(G')^{-1}\left(g(t,X_t)\right)+\frac{1}{2}\left(\frac{\gamma^1\beta^1-\gamma^2\beta^2}{(\gamma^1+\gamma^2)\sigma}\right)^2g_{xx}(t,X_t) \right)dt
\\&\quad +\frac{\gamma^1\beta^1-\gamma^2\beta^2}{(\gamma^1+\gamma^2)\sigma}g_x(t,X_t)dW_t.
\end{align*}
Comparing the drift rate to the BSDE~\eqref{eq:eqbwd}, we therefore obtain the following semilinear PDE:
\begin{equation}\label{eq:semi}
g_t(t,x)+g_x(t,x)(G')^{-1}\left(g(t,x)\right)+\frac{1}{2}\left(\frac{\gamma^1\beta^1-\gamma^2\beta^2}{(\gamma^1+\gamma^2)\sigma}\right)^2g_{xx}(t,x)=  \frac{(\gamma_1+\gamma_2)\sigma^2}{2}x.
\end{equation}
For a long time horizon, the solution should become stationary ($g_t(t,x) \approx 0$). This leads to the nonlinear ODE from Lemma~\ref{ODE}:
\begin{align*}
\frac{1}{2}\left(\frac{\gamma^1\beta^1-\gamma^2\beta^2}{(\gamma^1+\gamma^2)\sigma}\right)^2 g''(x)+g'(x)(G')^{-1}\left(g(x)\right)= \frac{(\gamma_1+\gamma_2)\sigma^2}{2}x. \tag{\ref{eqn:ergodic ODE}}
\end{align*}
For finite time horizons, the PDE~\eqref{eq:semi} cannot be reduced to an ODE. Far from the terminal time $T$, it is natural to expect that the correct solution is still identified by the same growth condition as for the ODE~\ref{ODE}. For the numerical solution of the latter, the growth condition also provides a boundary condition that is approximately correct for large values of the space variable. For the PDE, however, this boundary condition in the space variable is incompatible with the zero terminal condition at maturity, which describes that trading slows down and eventually stops near the terminal time. For more general versions of the model, even the stationary boundary conditions in the space dimensions are not readily available and it is not clear how to paste them together with the terminal condition. Accordingly, it is not straightforward to solve the PDE~\eqref{eq:semi} and its extensions using finite-difference schemes.

As an alternative, in the next section we therefore propose a numerical algorithm in the spirit of \cite{han.al.17}. It solves the FBSDE by simulation and therefore bypasses the need to identify the correct boundary conditions. The algorithm approximates the dependence of the backward component on the forward components by a deep neural network. Whence, it is also able to handle higher-dimensional settings, e.g., with endogenous volatilities or random and time-varying transaction costs.

\section{Numerics}\label{ssec:numerics}

We now present a numerical algorithm to solve the FBSDEs from Section~\ref{sec:general}. The algorithm is then tested for the calibrated parameters from Section~\ref{sec:calibration}.

\subsection{Deep-Learning Algortihm}\label{ss:deep}

\paragraph{Overview}
Solving the forward-backward system is challenging because it is multidimensional and the forward and backward components are fully coupled. Nevertheless, it is amenable to the simulation-based approach of~\cite{han.al.17}, which approximates the solution by a deep neural network. In \cite{han.al.17} the focus lies on BSDEs but the approach can readily be extended to FBSDEs, compare~\cite{han2018convergence}.

Let us briefly sketch the main idea; further details on the implementation are provided below. The first step is to pass to a time-discretized version of (\ref{eq:eqbwd}-\ref{eq:eqbwd2}), e.g., using the Euler scheme. Solving this system amounts to finding at each time step $t_k$ the unknown ``controls'' $Z^1_{t_k}$, $\sigma_{t_k}$. If the terminal condition is a function $\mathfrak{S}(W_T)$ of the underlying Brownian motion only as in Example~\ref{ex:bachelier}, then it is well known that the solution and in turn $Z^1_{t_k}$, $\sigma_{t_k}$ are functions of the forward variables, $Z^1_{t_k}=F^{\theta_k^Z}(W_{t_k}$, $\varphi^1_{t_k})$ and $\sigma_{t_k}=F^{\theta_k^\sigma}(W_{t_k}$, $\varphi^1_{t_k})$.

The algorithm of~\cite{han.al.17} approximates each of these functions with a function in the class $\{ F^{\bar{\theta}} \colon \bar{\theta} \in \Theta \}$ of neural networks, where we write $\theta=(\theta_0^Y,\theta_0^S,\theta_0^\sigma,\ldots,\theta_n^\sigma,\theta_0^Z,\ldots,\theta_n^Z)$ for the collection of all the corresponding parameters. The goal now is to choose these parameters in order to match the terminal conditions $Y^1_T = 0$ and $S_T = \mathfrak{S}(W_T)$ of the system sufficiently well. To this end, one starts with an initial guess for the network functions and then simulates the system forward in time. In this way, a simulated Brownian sample path is mapped to a corresponding terminal condition. This mapping can be efficiently implemented as a computational graph, which is determined by the choice of the building block networks $\{ F^{\bar{\theta}} \colon \bar{\theta} \in \Theta \}$ (i.e., two networks of type~\eqref{eq:NNarchitecture} for each time-step) and the FBSDE system, which describes how these building block networks are concatenated over time (see \eqref{eq:eulerFBSDE} below). To iteratively update the network functions until the terminal conditions are matched sufficiently well, one may then leverage computational technology available for such networks, such as backpropagation and stochastic gradient descent-type algorithms, see e.g.\ \cite[Chapters~6 and 8]{Goodfellow2016}. This can be implemented efficiently, e.g., in Python using Tensorflow. 

\paragraph{Algorithm}

Let us now describe the approximation algorithm in more detail. Fix a discrete time grid $0=t_0 < t_1 < \ldots < t_{n} = T$. For any choice of parameter $\theta$, consider the following discrete-time forward system obtained by discretizing (\ref{eq:eqbwd}-\ref{eq:eqbwd2}): starting from initial values $\varphi_0^{1,\theta} = \varphi_{0-}^1$, $Y_0^{1,\theta} = \theta_0^Y$, $S_0^\theta=\theta_0^S$, for $k=0,\ldots,n-1$ calculate
 \begin{equation} \label{eq:approxControl} 
\begin{aligned}
Z_{t_k}^{1,\theta} & = F^{\theta_k^Z}(W_{t_k}, \varphi_{t_k}^{1,\theta}), \quad \sigma_{t_k}^\theta & = F^{\theta_k^\sigma}(W_{t_k}, \varphi_{t_k}^{1,\theta}), 
\end{aligned}
\end{equation}
and step forward according to the Euler scheme
\begin{align} 
\varphi_{t_{k+1}}^{1,\theta} & = \varphi_{t_{k}}^{1,\theta} + (G')^{-1}(Y_{t_k}^{1,\theta})(t_{k+1}-t_k), \label{eq:eulerFBSDE} \\
Y_{t_{k+1}}^{1,\theta} & = Y_{t_k}^{1,\theta}  +\frac{\sigma_{t_k}^\theta}{2} \left[ \sigma_{t_k}^\theta \varphi_{t_k}^{1,\theta} (\gamma^1+\gamma^2) - \sigma_{t_k}^\theta s \gamma^2 + \gamma^1 \beta^1_{t_k} - \gamma^2 \beta^2_{t_k} \right] (t_{k+1}-t_k) + Z_{t_k}^{1,\theta} (W_{t_{k+1}}-W_{t_k}),	 \notag \\
S_{t_{k+1}}^\theta & = S_{t_{k}}^\theta + \frac{\sigma_{t_k}^\theta}{2} \left[ s \sigma_{t_k}^\theta \gamma^2 + (\gamma^1-\gamma^2) \sigma_{t_k}^\theta \varphi_{t_k}^{1,\theta} + \gamma^1 \beta^1_{t_k} + \gamma^2 \beta^2_{t_k} \right](t_{k+1}-t_k)  + \sigma_{t_k}^\theta (W_{t_{k+1}}-W_{t_k}). \notag
\end{align}
For any choice of the approximation parameter $\theta$, this defines a discrete-time stochastic process, but of course the terminal conditions $Y_T^{1,\theta} = 0$ and $S_T^\theta = \mathfrak{S}(W_T)$ will not even be approximately satisfied for an arbitrary choice of $\theta$. However, if $\hat{\theta}$ is a minimizer of
\begin{equation} \label{eq:mintheta}
\min_{\theta} \, \mathcal{L}(\theta), \quad \mbox{where $\mathcal{L}(\theta) = \E[(Y^{1,\theta}_T)^2] + \E[(S_T^\theta - \mathfrak{S})^2]$},
\end{equation}
where the number $n$ of time steps is sufficiently large and the function class $\{ F^{\bar{\theta}} \colon \bar{\theta} \in \Theta \}$ is sufficiently rich, then $(\varphi^{1,\hat{\theta}},Y^{1,\hat{\theta}},S^{\hat{\theta}}) $ should be a good approximation for the solution $(\varphi^1,Y^1,S)$ of (\ref{eq:eqbwd}-\ref{eq:eqbwd2}) at the time-points $t_0,\ldots,t_n$.

The minimization problem~\eqref{eq:mintheta} can be tackled using the ``stochastic gradient descent algorithm''. The main idea is the following: if the objective functional $\mathcal{L}$ was known explicitly and differentiable, then the classical gradient descent algorithm could be applied. That is, starting from an initial guess $\theta^{(0)}$, one iteratively updates
\begin{equation}\label{eq:update}
\theta^{(j+1)} = \theta^{(j)} - \eta_j \nabla \mathcal{L}_j(\theta^{(j)}),
\end{equation}
where $\mathcal{L}_j = \mathcal{L}$ and the learning rate $\eta_j > 0$ is fixed ($\eta_j = \eta$ for all $j$) or decreasing to $0$. Under suitable assumptions on $\mathcal{L}$ and $\{\eta_j \}_{j \in \mathbb{N}}$ the parameter $\theta^{(j)}$ then converges to a (local) minimum of $\mathcal{L}$ as $j \to \infty$. However, since $\mathcal{L}$ is not known explicitly, one applies the \textit{stochastic} gradient descent algorithm, which is the same procedure as just described, but approximates the expectations in $\mathcal{L}$ by a sample average in each iteration $j$,
$$
\mathcal{L}_j(\theta) = \frac{1}{N_{b}} \sum_{i=1}^{N_b}\left[ (Y_T^{1,\theta}(W^i))^2 + (S_T^\theta(W^i) - \mathfrak{S}(W^i))^2\right].
$$
Here, $N_{b} \in \mathbb{N}$ is called the ``batch size'' and $Y_T^{1,\theta}(W^i)$, $S_T^\theta(W^i)$ are calculated by plugging independent Brownian motions $W^1,\ldots,W^{N_b}$ into the Euler scheme~(\ref{eq:approxControl}-\ref{eq:eulerFBSDE}).

In order to apply the updating rule~\eqref{eq:update}, one needs to be able to calculate $\nabla \mathcal{L}_j(\theta)$ efficiently and this is the point at which the choice of $\{ F^{\bar{\theta}} \colon \bar{\theta} \in \Theta \}$ becomes crucial. As is apparent from (\ref{eq:approxControl}-\ref{eq:eulerFBSDE}), the dependence of the solution on the parameter $\theta$ is complex, since the state variables and parametric functions are iteratively added, multiplied and composed. For instance $Z_{t_k}^{1,\theta}$ depends not only on $\theta_k^Z$, but also (via $\varphi_{t_k}^{1,\theta}$) on $\theta_{0}^Z,\ldots,\theta_{k-2}^Z$ and $\theta_0^\sigma, \ldots, \theta_{k-2}^\sigma$. This makes the computational solution of \eqref{eq:mintheta} by classical numerical techniques highly challenging. Whence, while in principle any sufficiently rich parametric family of functions could be chosen for $\{ F^{\bar{\theta}} \colon \bar{\theta} \in \Theta \}$ in the scheme described above, it turns out to be particularly useful to choose a class of neural networks here. Then, $Y^{1,\theta}_T$ and $S_T^\theta$ can be viewed as the outputs of a deep neural network with random input $(W_{t_k})_{k=0,\ldots,n}$. Thanks to the compositional structure of neural networks one can then use the chain rule to calculate the gradient $\nabla \mathcal{L}_j(\theta)$ in closed form. Furthermore, the resulting analytical expressions can be evaluated efficiently using the so-called backpropagation algorithm, see, e.g., \cite{Goodfellow2016}. By using subgradients, this also extends to e.g.\ the ``ReLU activation function'' used below. Finally, all of this can be implemented efficiently in the computational graph structure employed in libraries such as Tensorflow or Torch.  

In summary, the learning algorithm iteratively updates the network parameters $\theta$ until a desired approximation accuracy is reached for some $\hat{\theta}$. Note that the accuracy of the approximation can be verified out of sample (e.g., in the numerical experiments in in Section~\ref{subsec:numericalResults}) by simulating a large number $N_{\mathrm{test}}$ of additional independent sample paths of $W$ and evaluating the empirical loss $\mathcal{L}_j(\hat{\theta})$ (with $N_b=N_{\mathrm{test}}$) on this collection of test paths.

\paragraph{Implementation}

For the numerical experiments in Section~\ref{subsec:numericalResults}, each $F^{\bar{\theta}}$ is a  neural network with two hidden layers. For the activation function we choose the popular Rectified Linear Unit (\emph{ReLU}) $\bm{\rho}$, which applies $x \mapsto \max(x,0)$ to each component of a vector. Denoting by $N_1,N_2 \in \mathbb{N}$ the number of nodes in the hidden layer, we thus consider functions of the form 
\begin{equation} \label{eq:NNarchitecture} F^{\bar{\theta}}(x)=A^2 \bm{\rho}(A^1 \bm{\rho}(A^0 x + b^0) + b^1), \quad x \in \R^2, \end{equation}
where $A^0 \in \R^{N_1\times 2}$, $b^0 \in \R^{N_1}$, $A^1 \in \R^{N_2 \times N_1}$, $b^1 \in \R^{N_2}$, $A^2 \in \R^{1 \times N_2}$ are called the \emph{weights} and \emph{biases} of the network and we denote by $\Theta$ the set of all parameters $\bar{\theta}=(A^0,b^0,A^1,b^1,A^2)$. To find a close-to-optimal parameter $\hat\theta$ in \eqref{eq:mintheta} we randomly initialize the network parameters and subsequently use the Adam algorithm \cite{Ioffe2015}, which is a variant of stochastic gradient descent which adaptively adjusts the learning rates for all network parameters. Here, some initial hyperparameter optimization has led us to choose $N_1=N_2=15$, set the initial learning rate to $0.0005$ and use a batch size of $128$. In order to accelerate the parameter training procedure, we apply batch-normalization \cite{Kingma2015} (see also \cite[Section~8.7.1]{Goodfellow2016}) at different stages: before the input is fed into the network, before applying the activation function $\bm{\rho}$ and after the last linear transformation $A^2$. All computations are performed in Python using Tensorflow. 

\subsection{Numerical results}\label{subsec:numericalResults}

The algorithm introduced in Section~\ref{ss:deep} is now applied to solve the forward-backward equations corresponding to Example~\ref{ex:bachelier}. As a sanity check, we first consider the simplest version of the model, where the price volatility is exogenous. In this setting, we compare the numerical solution to the nonlinear ODE that describes the exact solution of the infinite-horizon version of the model. 

Subsequently, we consider the model with endogenous volatility. In order to test the performance of the learning algorithm in this case, we compare its results to the semi-explicit solution in term of Riccati equations obtained for quadratic costs in~\cite{herdegen.al.19}.

\paragraph{Exogenous volatility}
We first consider the finite-horizon version of the model from Section~\ref{sec:regular} with power costs $G_q(x)=\lambda_q |x|^q/q$, where $q=1.5$, $\varphi^1_{0} = \frac{\gamma^2}{\gamma^1+\gamma^2}s=s-\varphi^2_{0-} $ and the model parameters are calibrated as in Section~\ref{sec:calibration}. The algorithm described in Section~\ref{ss:deep} for the general FBSDE (\ref{eq:eqbwd}-\ref{eq:eqbwd2}) can be readily adapted by setting $\sigma^\theta_{t_k} = \sigma$ for all $k$, only considering the first two equations in \eqref{eq:eulerFBSDE}, and minimizing $\E[(Y_T^{1,\theta})^2]$. An alternative, slightly more efficient approach is to use instead the system \eqref{eq:fwd}, \eqref{eq:bwd} and discretize it analogously to \eqref{eq:eulerFBSDE}. The algorithm from Section~\ref{ss:deep} in turn yields a parameter $\hat\theta$ such that $(X^{\hat\theta},Y^{1,\hat\theta})$ approximately solves (\ref{eq:fwd}-\ref{eq:bwd}). This allows us to generate approximate samples of (\ref{eq:fwd}-\ref{eq:bwd}) by simulating sample paths of $W$ and evaluating $(X^{\hat\theta},Y^{1,\hat\theta})$. On the other hand we know that $Y^1_{t_k} = g(t_k,X_{t_k})$, where $g$ solves \eqref{eq:semi}.  Thus we generate $N_{\mathrm{test}} = 10^6$ samples of $W$, evaluate $(X^{\hat\theta},Y^{1,\hat\theta})$ on each of them and obtain an approximation $\hat{g}(t_k,x)$ of $g(t_k,x)$ by assigning to each point $x$ which is attained by a sample of $X^{\hat\theta}_{t_k}$ the associated sample of $Y^{1,\hat\theta}_{t_k}$. This yields an approximation of the solution to \eqref{eq:semi} on a (random) grid specified by the state variable. According to \eqref{eq:eqfwd}, the corresponding optimal trading rate is in turn obtained by applying $(G_q')^{-1}(\hat{g}(t_k,\cdot))$ to the state variable.

We now compare this to the long-run optimal trading rate from Theorem~\ref{thm:regular}, where $g$ is given by the solution of the nonlinear ODE from Lemma~\ref{ODE}. Figure~\ref{fig:ODEcomparison} shows the graph of both functions at $t=T-t_k=25$, i.e., the samples of $(X_{t_k}^{\hat\theta},(G_q')^{-1}(Y_{t_k}^{1,\hat\theta}))=(X_{t_k}^{\hat\theta},(G_q')^{-1}(\hat{g}(t_k,X_{t_k}^{\hat\theta})))$ and $(X_{t_k}^{\hat\theta},(G_q')^{-1}(g(X_{t_k}^{\hat\theta})))$. We observe that the long-run optimum is already very close to the numerical-solution of the finite-horizon problem even for a time horizon of just five weeks. On the one hand, this justifies the use of the long-run model as a tractable approximation of its finite-horizon counterpart. On the other hand, it demonstrates that the deep learning algorithm indeed converges to the correct solution in this simplest version of the model.

\begin{figure}[htbp] 
    \centering
        \includegraphics[width=0.45\textwidth]{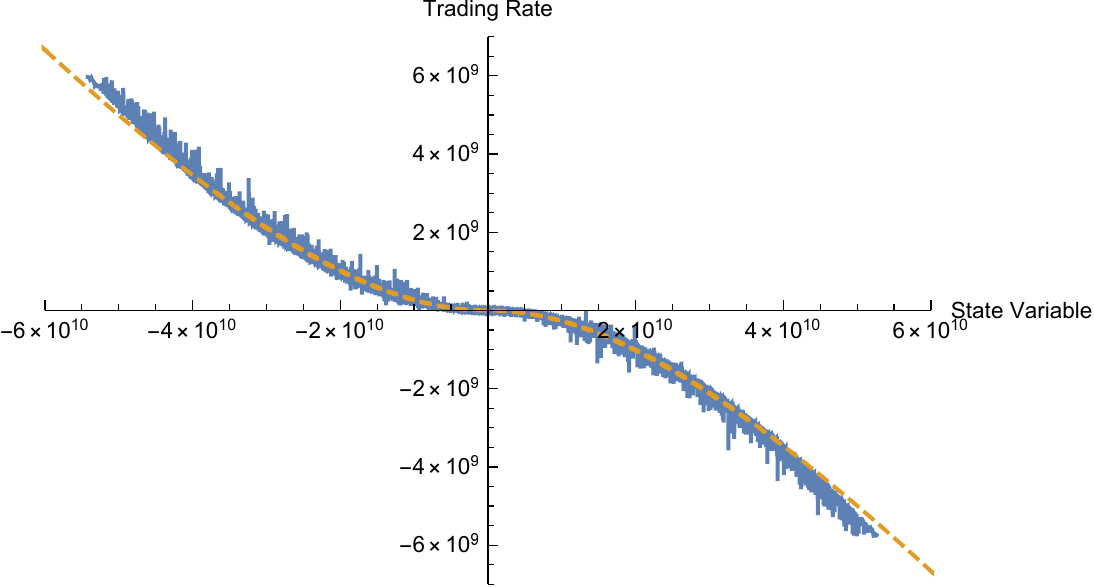} 
   \caption{Long-run optimal trading rate for power costs with $q=1.5$ (dashed) and the neural-network approximation of its finite-horizon counterpart 25 days before the terminal time (solid).}
    \label{fig:ODEcomparison}
\end{figure}

\paragraph{Endogenous volatility}

We now turn to the model with endogenous volatility from Section~\ref{sec:general}. We consider $G_q(x)=\lambda_q |x|^q/q$ both for $q=2$ (quadratic costs) and $q=1.5$ (power costs). For $\lambda_q$, $\gamma_1$, $\gamma_2$, and $\beta^1=-\beta^2=\beta_q$ we use the same parameter values as for the model with exogenous volatility (cf.~Section~\ref{sec:calibration}) and we also again set $\varphi^1_{0-} = \frac{\gamma^2}{\gamma^1+\gamma^2}s=s-\varphi^2_{0-}$. The additional parameters $a$ and $b$ are calibrated to the frictionless equilibrium from Section~\ref{subsec:generalFrictionless}. To wit, $a$ is estimated from the time series (resulting in the same value as for $\sigma$ in Section~\ref{subsec:calibrationEx}) and $b$ is chosen so that $\bar{S}_0=(b-s\bar{\gamma}a^2)T$ matches the current stock price. We focus on a short time horizon $T=20$ discretized into $n=40$ time steps. 

The deep-learning algorithm from Section~\ref{ss:deep} in turn yields an approximate solution of the forward-backward system (\ref{eq:eqbwd}-\ref{eq:eqbwd2}). To assess the effect of different  transaction costs we compare the equilibrium price and volatility to the respective quantities in the frictionless equilibrium, i.e., we examine (sample paths of) the price difference $S^{\hat{\theta}}-\bar{S}$ and the volatility difference $\sigma^{\hat{\theta}} - a$ over time. For quadratic costs it has been shown in \cite{herdegen.al.19} that optimal trading rates and the equilibrium price can be described in terms of a system of coupled Riccati ODEs, which provides a natural benchmark in this case. Figure~\ref{plots2a} shows two sample paths of the price and volatility corrections for quadratic costs calculated by both methods, i.e., by applying the neural network based algorithm described above and by solving the system of ODEs derived in \cite{herdegen.al.19} using a standard ODE solver. We see that the neural network based method provides a very accurate approximation of the equilibrium quantities. Note that since both the initial price correction and the volatility correction are deterministic here, this illustrates the accuracy of the approximation even though we only plot a few sample paths for illustrative purposes.

 \begin{figure}[htbp] 
	\centering
	\includegraphics[width=0.45\textwidth]{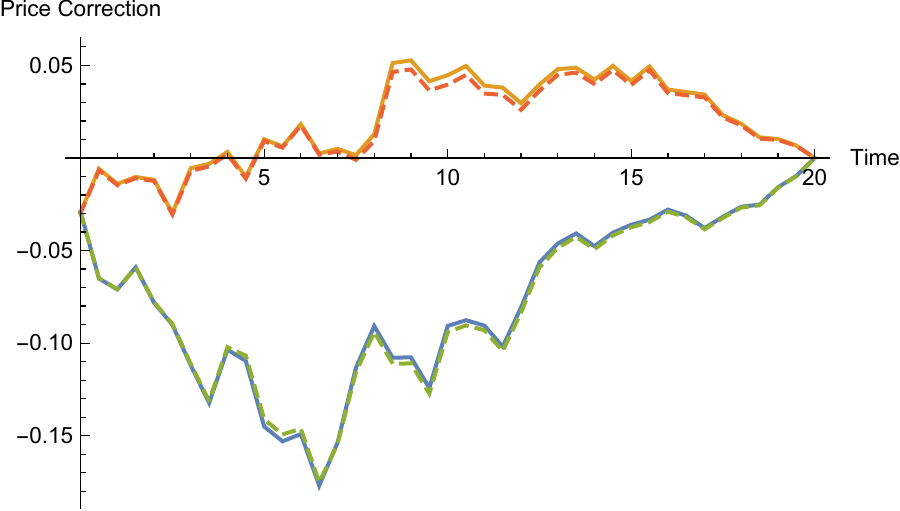}
	\includegraphics[width=0.45\textwidth]{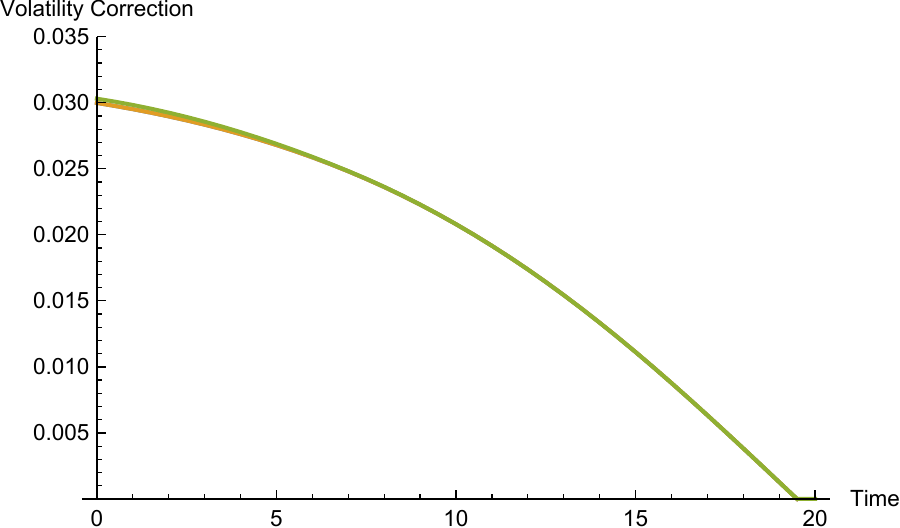}
	\caption{Price adjustment (left panel) and volatility adjustment (right panel) with calibrated parameters for quadratic costs.}
	\label{plots2a}
\end{figure}

Let us briefly comment on the size of the effects observed here. Since the frictionless and frictional equilibrium prices have the same terminal condition, the negative price initial correction plotted in the left panel of Figure~\ref{plots2a} corresponds to a positive liquidity premium, i.e., higher average expected returns in the presence of transaction costs as documented empirically in \cite{amihud.mendelson.86,brennan.subrahmanyam.96,pastor.stambaugh.03}, for example. The absolute price correction shown here corresponds to a (yearly) liquidity premium of about $0.3\%$ relative to the average stock prices. Whence, if one agent has twice the risk aversion of the one of the other as in our calibration, then matching the average trading volume observed empirically leads to a liquidity premium of the same order of magnitude as the ``equivalent proportional cost'' from Section~\ref{sec:calibration} ($0.25\%$). This is roughly in line with the empirical results of \cite{amihud.mendelson.86}, who find a ratio of $1.9$ and motivates our choice of $\gamma^2=2\gamma^1$. For this choice of parameters, the corresponding initial volatility is increased by about $1.6\%$ relative to its frictionless value.

The analogous plots for power costs with $q = 1.5$ are displayed in Figure~\ref{plots4}. (In order to make them comparable to their counterparts for quadratic costs, we use the same Brownian paths). Note that no benchmark is available in this case, but the equilibrium prices for the two cost specifications turn out to be very similar. To wit, the (yearly) liquidity premium for our matched power costs is about $0.32\%$ of the average stock prices, and the frictionless volatility initially increases by about $1.8\%$ of its frictionless value in this case.

\begin{figure}[htbp] 
	\centering
	\includegraphics[width=0.45\textwidth]{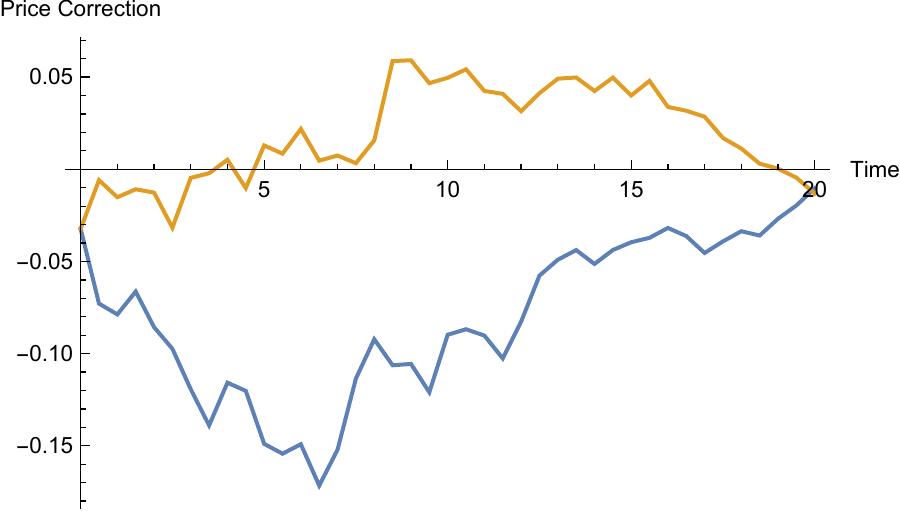}
	\includegraphics[width=0.45\textwidth]{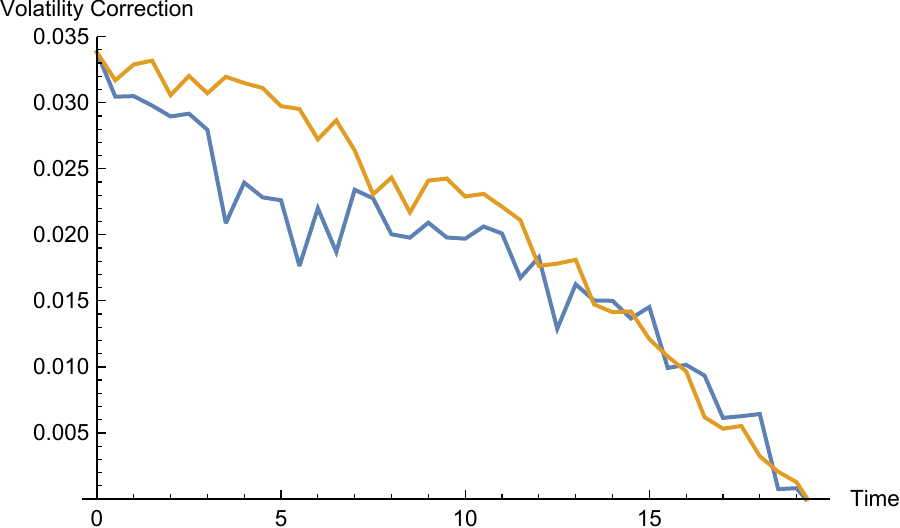}
	\caption{Price adjustment (left panel) and volatility adjustment (right panel) with calibrated parameters for $3/2$-costs.}
	\label{plots4}
\end{figure}

Overall, the numerical results reported in this section corroborate the findings from Section~\ref{sec:calibration} and suggest that quadratic costs can serve as useful proxies for other less tractable costs specifications also in settings with endogenous volatilities.

\section{Proofs}\label{s:proofs}
To ease notation, define 
\begin{align}\label{constants}
\gamma = \frac{\gamma^1+\gamma^2}{2}, \qquad \delta = \frac{\gamma^1\beta^1 - \gamma^2\beta^2}{(\gamma^1+\gamma^2)\sigma}.
\end{align}
Then the ODE~\eqref{eqn:ergodic ODE} in Lemma~\ref{ODE} can be rewritten as
\begin{align*}
\frac{\delta^2}{2} g''(x) + g'(x) \left(G'\right)^{-1}\left(g(x)\right) = \gamma\sigma^2 x,
\end{align*}
and the SDE~\eqref{eqn:ergodic SDE} in Lemma~\ref{lem:sde} rewrites as
\begin{align*}
dX_t = \left(G'\right)^{-1}\left(g(X_t)\right) dt + \delta dW_t.
\end{align*}
\subsection{Proofs for Section~\ref{sec:regular}}\label{sec:proofsregular}
\begin{proof}[Proof of Lemma~\ref{lem:sde}]
Strong existence and uniqueness follow from a standard localization argument, cf.~\cite[Proof of Proposition~1.1]{caye.al.20}.  That $X$ is a recurrent diffusion is established in~\cite[Appendix D.2]{caye.al.20}. For later use, we now also establish some uniform moment bounds for $X$ . To this end observe that by Lemma~\ref{ODE}, we have $g(x)\leq 0$ for $x>0$ and, in view of Assumption~\ref{cond:cost}(ii) there exists $K>0$ such that
$$
|(G')^{-1}(x)| \geq \frac{c}{2} |x| \quad \mbox{for $|x|\geq|K|$.}
$$
As $(G')^{-1}$ is odd, it follows that, for $x$ such that $|g(x)| \geq K$,
$$
x(G')^{-1}(g(x)) = -|x| (G')^{-1}(|g(x)|) \leq -\frac{c}{2} |x||g(x)|.
$$
Notice that $|g|$ is increasing on $[0,\infty)$ and satisfies $\lim_{|x|\to\infty} |g(x)| = \infty$. Whence, there exists $M_0>0$ such that for every $r>0$ and $|x|\geq 2r/c|g(M_0)|+M_0$,
$$
\frac{x(G')^{-1}(g(x))}{|x|} \leq  -\frac{c}{2}|g(x)| \leq -\frac{c}{2}|g(M_0)|\leq -\frac{r}{|x|}.
$$
Thus, \cite[Condition~(6)]{veretennikov1997polynomial} is satisfied and~\cite[Lemma~1]{veretennikov1997polynomial} is applicable for every $r>0$. Therefore, we have the following uniform moment bounds: 
\begin{align}\label{sde:moments}
\sup_{T\geq 0} \E\left[|X_T|^k\right] <\infty, \quad \mbox{for every $k\in \mathbb{N}$.}
\end{align}
\end{proof}

\begin{proof}[Proof of Theorem~\ref{thm:regular}]
Market clearing evidently holds by definition of the trading rates~\eqref{strategy}. Observe that the corresponding positions $\varphi^1$ satisfy
\begin{align}
\label{position:regular}
\mu_t - \gamma^1\sigma( \sigma\varphi^1_t + \beta^1_t) = -\gamma\sigma^2 X_t.
\end{align}
Consider a competing admissible strategy $\varphi$ for agent 1. Identity~\eqref{position:regular} and the convexity of $G$ yield
\begin{align}
&J_T^1(\dot{\varphi}) - J_T^1(\dot{\varphi}^1) \notag\\
&\quad= \E\left[ \int_0^T\big(\varphi_t - \varphi^1_t \big)\mu_t - 
\frac{\gamma^1}{2} \sigma\big(\varphi_t - \varphi^1_t \big)(\sigma\varphi_t + \sigma\varphi^1_t +2\beta^1_t) 
+G(\dot{\varphi}^1_t)-G(\dot{\varphi}_t) \; dt \right] \notag\\
&\quad = \E\left[ \int_0^T\big(\varphi_t - \varphi^1_t \big)\mu_t - 
\frac{\gamma^1}{2} \sigma\big(\varphi_t - \varphi^1_t \big)\big(\sigma\varphi_t- \sigma\varphi^1_t +2(\sigma\varphi^1_t + \beta^1_t)\big) 
+G(\dot{\varphi}^1_t)-G(\dot{\varphi}_t) \; dt \right] \notag\\
&\quad\leq \E\left[ \int_0^T-\frac1 2 \gamma^1 \sigma^2\big(\varphi_t - \varphi^1_t \big)^2 + \big(\mu_t - \gamma^1\sigma(\sigma\varphi^1_t + \beta^1_t)\big)\big(\varphi_t - \varphi^1_t \big) +G'(\dot{\varphi}^1_t)\big( \dot{\varphi}^1_t -\dot{\varphi_t} \big) \; dt\right] \notag\\
&\quad= \E\left[ \int_0^T-\frac1 2 \gamma^1 \sigma^2\big(\varphi_t - \varphi^1_t \big)^2 - \gamma \sigma^2 X_t\big(\varphi_t - \varphi^1_t \big) - G'(\dot{\varphi}^1_t)\big(\dot{\varphi}_t - \dot{\varphi}^1_t \big) \; dt\right]. \label{eq:comp}
\end{align}
We now analyze the terms on the right-hand side. To ease notation, set 
$$
\dot{\theta}_t = \dot{\varphi}_t - \dot{\varphi}^1_t, \qquad \mbox{so that} \quad \theta_t = \int_0^t \left(\dot{\varphi}_u - \dot{\varphi}^1_u\right) du =  \varphi_t - \varphi^1_t. 
$$
The dynamics~\eqref{eqn:ergodic SDE} of $X$, It\^o's formula, and the ODE~\eqref{eqn:ergodic ODE} for $g$ imply
\begin{align}
dg(X_t) 
&= \left[ \frac1 2 \delta^2 g''(X_t) + g'(X_t) (G')^{-1}(g(X_t)) \right] dt + \delta g'(X_t)dW_t  \notag\\
&= \gamma \sigma^2 X_t dt + \delta g'(X_t)dW_t. \label{eq:dyngX}
\end{align}
Integration by parts and the dynamics~\eqref{eq:dyngX} in turn yield
\begin{align}\label{eq:IBP}
d\left(\theta_t g(X_t)\right) = \big[ \dot{\theta_t}g(X_t) + \gamma \sigma^2 X_t\theta_t \big] dt 
+ \delta \theta_tg'(X_t)dW_t.
\end{align}
Here, the local martingale part is a true martingale. Indeed, by H\"older's inequality, the integrability condition~\eqref{eq:integrable} and the boundedness of $g'$ established in Lemma~\ref{symmetric},
\begin{align*}
\E\left[ \int_0^t |g'(X_u)|^2 \theta_u^2du \right] 
&\leq K^2\E\left[ \int_0^t \theta_u^{2}du \right]<\infty.
\end{align*}
Also taking into account that $G'(\dot{\varphi}^1_t) = G'(\left(G'\right)^{-1}(g(X_t))) = g(X_t)$, we can therefore use~\eqref{eq:IBP} to replace the second and the third terms on the right-hand side of~\eqref{eq:comp}, obtaining
\begin{align*}
J_T^1(\dot{\varphi}) - J_T^1(\dot{\varphi}^1) 
&\leq - \E[g(X_T)\theta_T] - \E\left[ \int_0^T \frac1 2 \gamma^1 \sigma^2\theta_t^2 dt \right].
\end{align*}
The Cauchy-Schwartz inequality yields
\begin{align*}
\big|\E[g(X_T)\theta_T]\big|
&\leq \sqrt{\E[g(X_T)^2]\E[\theta_T^2]} 
\leq\sqrt{\E[2g(X_T)^2]}\sqrt{\E[\varphi_T^2] + \E[(\varphi^1_T)^2]}.
\end{align*}
By the polynomial growth of $g$ established in Lemma~\ref{symmetric} and \eqref{sde:moments}, we have  $\sup_{T\geq 0} \E[g(X_T)^2] <\infty$. Together with the transversality condition~\eqref{assump:vanishing}, it follows that
\begin{align*}
0 \leq \lim_{T\to\infty} \frac{1}{T} \left|\E[g(X_T)\theta_T] \right|
\leq \lim_{T\to\infty} \frac{1}{T} \sqrt{\E[2g(X_T)^2]}\sqrt{ \E[\varphi_T^2] + \E[(\varphi^1_T)^2]}
= 0.
\end{align*}
Therefore, the trading rate $\dot{\varphi}^1$ is indeed long-run optimal for agent 1:
\begin{align*}
\limsup_{T\to\infty} \frac{1}{T} \left[J_T^1(\dot{\varphi}) - J_T^1(\dot{\varphi}^1)  \right] 
&\leq \limsup_{T\to\infty} \frac{1}{T} \left[ -\E[g(X_T)\theta_T] - \E\left[ \int_0^T \frac1 2 \gamma^1 \sigma^2\theta_t^2 dt \right]\right]\\
&= -\lim_{T\to\infty} \frac{1}{T} \E[g(X_T)\theta_T] + \limsup_{T\to\infty} \frac{1}{T}\E\left[- \int_0^T \frac1 2 \gamma^1 \sigma^2\theta_t^2 dt \right]\leq 0.
\end{align*}
An analogous argument shows that $\dot{\varphi}^2$ is long-run optimal for agent 2. This completes the proof.
\end{proof}

\subsection{Proofs for Section~\ref{sec:singular}}\label{sec:proofssingular}

The following lemma provides the counterpart of the function $g$ from Lemma~\ref{ODE} for proportional costs. It is given in closed form; its properties listed here are therefore easily verified by direct calculations:

\begin{lemma}\label{lem:gprop}
With the constants $l$ from~\eqref{eq:ell} and $\gamma$, $\delta$ from~\eqref{constants}, define
\begin{align}\label{ode:proportional}
g(x) = \frac{\gamma\sigma^2}{3\delta^2}\left(x^3 - 3l^2x\right) \one_{\{|x|\leq l\}}-\lambda\mathrm{sgn}(x) \one_{\{|x|>l\}}. 
\end{align}
This function has the following properties:
\begin{enumerate}
\item $g$ is an odd, decreasing function;
\item $\frac 1 2 \delta^2 g''(x) = \gamma \sigma^2 x$ for $x\in(-l, l)$;
\item $g'$ is continuous on $\R$ and $g'(l)=g'(-l)=0$;
\item For every $x\in[0,l]$, we have $0\geq g(x)\geq g(l)= -\lambda$.
\end{enumerate}
\end{lemma}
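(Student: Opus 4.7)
My plan is to verify the four properties by direct computation from the explicit formula~\eqref{ode:proportional}, using the definitions of $\gamma$, $\delta$ in~\eqref{constants} and of $l$ in~\eqref{eq:ell}. Since the function is given in closed form, nothing fancier than algebra and one-line differentiations is needed.

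The first step I would take is to record the single algebraic identity that underlies everything else: substituting the definitions of $\gamma$, $\delta$, $l$ gives
\begin{equation*}
\frac{\gamma\sigma^2 l^3}{3\delta^2} = \frac{\lambda}{2}.
\end{equation*}
Indeed, $\gamma\sigma^2 = \tfrac{(\gamma^1+\gamma^2)\sigma^2}{2}$, $\delta^2 = \tfrac{(\gamma^1\beta^1-\gamma^2\beta^2)^2}{(\gamma^1+\gamma^2)^2\sigma^2}$, and $l^3=\tfrac{3\lambda(\gamma^1\beta^1-\gamma^2\beta^2)^2}{(\gamma^1+\gamma^2)^3\sigma^4}$ multiply out to the claimed constant. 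This identity is precisely what makes the piecewise formula~\eqref{ode:proportional} continuous at $x=\pm l$, since evaluating the cubic at $x=l$ gives $\tfrac{\gamma\sigma^2}{3\delta^2}(l^3-3l^3)=-\tfrac{2\gamma\sigma^2 l^3}{3\delta^2}=-\lambda$, matching the exterior value.

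For property~(i), I would note oddness from inspection: both the inner cubic $x^3-3l^2x$ and the outer $\mathrm{sgn}(x)$ term are odd. For monotonicity I would differentiate the inner piece to obtain $g'(x)=\tfrac{\gamma\sigma^2}{\delta^2}(x^2-l^2)\leq 0$ on $[-l,l]$, while $g$ is constant on each exterior region; combined with continuity at $\pm l$ this makes $g$ non-increasing on $\mathbb{R}$ and strictly decreasing on $(-l,l)$. Property~(ii) falls out of one more differentiation: $g''(x)=\tfrac{2\gamma\sigma^2}{\delta^2}x$ on $(-l,l)$, so $\tfrac{1}{2}\delta^2 g''(x)=\gamma\sigma^2 x$. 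Property~(iii) follows because $g'(x)=\tfrac{\gamma\sigma^2}{\delta^2}(x^2-l^2)$ vanishes at $x=\pm l$, and $g'\equiv 0$ on $\{|x|>l\}$ since $g$ is constant there; hence $g'$ is continuous on $\mathbb{R}$ with $g'(\pm l)=0$.

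Finally, for property~(iv), I would factor the inner formula on $[0,l]$ as $g(x)=\tfrac{\gamma\sigma^2}{3\delta^2}\,x(x^2-3l^2)$; since $x\geq 0$ and $x^2-3l^2\leq -2l^2<0$, this gives $g(x)\leq 0$. Monotonicity from~(i) then yields $g(x)\geq g(l)$, and the key identity computes $g(l)=-\tfrac{2\gamma\sigma^2 l^3}{3\delta^2}=-\lambda$. The only place where any thought is required is in verifying the constant $\tfrac{\gamma\sigma^2 l^3}{3\delta^2}=\lambda/2$; once that is done, each of the four claims is a one-line computation, which is why the paper characterizes them as ``easily verified by direct calculations.''
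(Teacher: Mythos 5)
Your proposal is correct, and it supplies exactly the direct calculations the paper alludes to (the paper gives no separate proof, noting only that the properties "are easily verified by direct calculations"). In particular, your identity $\frac{\gamma\sigma^2 l^3}{3\delta^2}=\frac{\lambda}{2}$ is right and cleanly organizes the continuity of $g$ at $\pm l$, the value $g(l)=-\lambda$, and hence all four claimed properties.
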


\begin{lemma}\label{1:strategy}
The strategies from Theorem~\ref{thm:singular} are admissible and satisfy the transversality condition~\eqref{singular:vanishing}. Moreover they clear the market.

\end{lemma}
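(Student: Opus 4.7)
The plan is to verify the three claims in turn — market clearing, admissibility in the sense of~\eqref{assump:singular}, and the transversality condition~\eqref{singular:vanishing} — leveraging the fact that the state variable $X$ is confined to the bounded interval $[-l,l]$ by construction.

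Market clearing is immediate from~\eqref{singular strategy}: summing the two positions, the reflection terms $\pm(L_t-U_t)$ cancel and leave $\varphi^1_t+\varphi^2_t = \varphi^1_{0-}+\varphi^2_{0-}=s$. For the integrability and transversality conditions, the key identity is obtained by rearranging~\eqref{reflected BM}:
\begin{equation*}
L_t-U_t = X_t - X_{0-} - \delta W_t,\qquad t\geq 0.
\end{equation*}
Combined with $|X_t|\leq l$ for $t>0$, this gives the pathwise bound $|\varphi^n_t|\leq |\varphi^n_{0-}|+l+|X_{0-}|+\delta|W_t|$ for $n=1,2$. Standard Brownian moment estimates ($\E[W_t^2]=t$, $\E[|W_T|]=\sqrt{2T/\pi}$) then yield both $\E[\int_0^T(\varphi^n_t)^2\,dt]<\infty$ and $T^{-1}\E[|\varphi^n_T|]\to 0$, supplying the $L^2$-bound in~\eqref{assump:singular} and the transversality condition~\eqref{singular:vanishing}.

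It remains to check $\E[\varphi^{n,\uparrow}_T+\varphi^{n,\downarrow}_T]<\infty$. Since $L$ grows only on $\{X=-l\}$ and $U$ only on $\{X=l\}$, these processes are mutually singular, so $(L,U)$ is the Jordan--Hahn decomposition of $L-U$ and the claim reduces to $\E[L_T+U_T]<\infty$. To bound this, I will apply It\^o's formula to $X_t^2$ on $(0,T]$, where $L$ and $U$ have continuous paths. Using $X_t\,dL_t=-l\,dL_t$ and $X_t\,dU_t=l\,dU_t$ on the respective supports, together with the boundedness of $X$ (which makes the stochastic integral $\int_0^{\cdot}2\delta X_t\,dW_t$ a genuine martingale), taking expectations produces the explicit bound
\begin{equation*}
2l\bigl(\E[L_T]+\E[U_T]\bigr)\leq 2l(L_0+U_0)+\delta^2 T+\E[X_0^2]-\E[X_T^2]\leq 2l(|X_{0-}|+l)+\delta^2 T+l^2<\infty.
\end{equation*}

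The only technical subtlety is the bookkeeping around a potential initial jump in $L$ or $U$ when $X_{0-}\notin[-l,l]$; this is handled by splitting each reflection process into its initial jump (a deterministic constant of size at most $|X_{0-}|+l$) and its continuous part on $(0,T]$, then applying It\^o's formula to the latter continuous semimartingale. Everything else reduces to direct estimation using the boundedness of the reflected process $X$.
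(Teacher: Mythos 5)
Your proof is correct and follows essentially the same route as the paper: market clearing by cancellation, a pathwise bound of the form $|\varphi^n_t|\leq\text{const}+\delta|W_t|$ derived from $L_t-U_t=X_t-X_{0-}-\delta W_t$ and boundedness of the reflected process to get the $L^2$ integrability and transversality, and It\^o's formula applied to a quadratic of $X$ to bound the expected total reflection. The one small difference is that the paper applies It\^o to $(X_t+l)^2/4l$ and $(X_t-l)^2/4l$ separately to bound $\E[U_T]$ and $\E[L_T]$ individually, whereas you apply it once to $X_t^2$ and bound the sum $\E[L_T+U_T]$ directly; this is a slightly more economical version of the same computation, and you also make explicit (which the paper leaves implicit) that mutual singularity of $L$ and $U$ identifies them as the Jordan--Hahn components $\varphi^{1,\uparrow},\varphi^{1,\downarrow}$ so that the required bound really is on $\E[L_T+U_T]$.
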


\begin{proof}
Let $x = |\varphi_{0-}^1| +|\varphi_{0-}^2|+l+s$. 
First, note that the initial jump $X_0-X_{0-}$ satisfies
\begin{align*}
-l\leq X_0 = L_0-U_0 + X_{0-}\leq l,
\end{align*}
and hence
\begin{align*}
X_t = \delta W_t + L_t - U_t + X_{0-}. 
\end{align*}
Therefore, we have
\begin{align*}
\E[|L_T - U_T|] =\E[|X_T -  \delta W_T-X_{0-} |] \leq \delta \E[|W_T|] + \E[|X_T|]+|X_{0-}| \leq x + \delta \sqrt{\frac{2T}{\pi}}, 
\end{align*}
so that the transversality condition~\eqref{singular:vanishing} is satisfied. Next, notice that
\begin{align*}
|L_t - U_t|^2 \leq \left(|X_t| + |X_{0-}| + \delta |W_t| \right)^2 \leq \left(x +\delta|W_t|\right)^2\leq 2x^2 +2\delta^2|W_t|^2.
\end{align*}
As a consequence,
\begin{align*}
\E\left[\int_0^T(L_t - U_t)^2 dt \right] 
\leq \E\left[ \int_0^T 2x^2 +2\delta^2|W_t|^2 dt \right]   
= 2x^2T +2\delta^2\E\left[\int_0^T |W_t|^2 dt\right]
= 2x^2T + \delta^2T^2,
\end{align*}
so that $\varphi^1$ satisfies the first integrability condition in~\eqref{assump:singular}.

Now, apply It\^o's formula to $(X_T+l)^2/4l$, obtaining
\begin{align*} 
&\frac{1}{4l}(X_T+l)^2 - \frac{1}{4l}(X_0+l)^2 \\
&\quad= \int_0^T \frac{\delta}{2l}(X_t+l) dW_t + \int_0^T \frac{\delta^2}{4l} dt + \int_0^T \frac{1}{2l}(-l+l) dL_t - \int_0^T \frac{1}{2l}(l+l) dU_t\nonumber\\
&\quad= \int_0^T \frac{\delta}{2l}(X_t+l) dW_t + \frac{\delta^2}{4l} T - U_T + U_0.
\end{align*}
Rearranging, taking expectations, and taking into account that $0\leq U_0\leq |X_0| \leq x$ leads to
\begin{align}\label{rulesu}
\E[U_T] 
&= U_0 +\frac{1}{4l}(X_0+l)^2 + \frac{\delta^2}{4l} T - \E\left[ \int_0^T \frac{\delta}{2l}(X_t+l) dW_t \right] - \E\left[\frac{1}{4l}(X_T+l)^2\right] \nonumber\\
&\leq x+l + \frac{\delta^2}{4l} T. 
\end{align}
After applying It\^o's formula to $(X_T-l)^2/4l$, a symmetric calculation and $0\leq L_0\leq |X_0| \leq x$ show
\begin{align}\label{rulesl}
\E[L_T] 
&= L_0 +\frac{1}{4l}(X_0-l)^2 + \frac{\delta^2}{4l} T - \E\left[ \int_0^T \frac{\delta}{2l}(X_t-l) dW_t \right] - \E\left[\frac{1}{4l}(X_T-l)^2\right] \nonumber\\
&\leq x+l + \frac{\delta^2}{4l} T. 
\end{align}
Combining \eqref{rulesu} and \eqref{rulesl} yields the second integrability condition in~\eqref{assump:singular}; therefore $\varphi^1$ is indeed admissible. Market clearing evidently holds by construction; in particular $\varphi^2$ is admissible as well. For later use also observe that, by definition,
\begin{align}\label{eq:repprop}
\varphi^1_t =X_t - \delta W_t +\frac{s\gamma^2}{\gamma^1 + \gamma^2},\quad \quad \gamma^1\sigma( \sigma\varphi^1_t  + \beta^1_t) - \mu_t = \gamma\sigma^2 X_t.
\end{align}
\end{proof}

\begin{proof}[Proof of Theorem~\ref{thm:singular}]
Consider a competing admissible strategy with Jordan-Hahn decomposition $\varphi=\varphi^1_{0-}+\varphi^{\uparrow}-\varphi^{\downarrow}$. To ease notation, set 
$$
\theta_t = \varphi_t - \varphi^1_t, \qquad \mbox{so that} \quad d\theta_t = d\varphi^{\uparrow}_t - d\varphi^{\downarrow}_t - dL_t + dU_t, \qquad \theta_{0-} = 0.
$$
By properties (i) and (iv) of the function $g$ from Lemma~\ref{lem:gprop}, we have 
\begin{align}
\one_{(-l,0)}(X_t)g(X_t)d\theta_t &
\leq \lambda\one_{(-l,0)}(X_t)\left[d\varphi^{\uparrow}_t + d\varphi^{\downarrow}_t + dU_t\right], \label{i1}\\
\one_{(0,l)}(X_t)g(X_t)d\theta_t &
\leq \lambda\one_{(0,l)}(X_t)\left[d\varphi^{\uparrow}_t + d\varphi^{\downarrow}_t + dL_t\right].\label{i2}
\end{align}
Since $L$, $U$ only grow on the sets $\{X_t=-l\}$ and $\{X_t=l\}$, respectively, properties (i) and (iv) of $g$ from Lemma~\ref{lem:gprop} and (\ref{i1}-\ref{i2}) show that
\begin{align*}
&\int_{0-}^T g(X_t)d\theta_t \nonumber\\
& = \lambda\int_{0-}^T \one_{\{-l\}}(X_t) \left[ d\varphi^{\uparrow}_t - d\varphi^{\downarrow}_t - dL_t\right] 
- \one_{\{l\}}(X_t) \left[ d\varphi^{\uparrow}_t - d\varphi^{\downarrow}_t + dU_t\right]
+ \one_{(-l,l)}(X_t)g(X_t)d\theta_t \nonumber\\
& \leq \lambda\int_{0-}^T \one_{\{-l\}}(X_t) \left[ d\varphi^{\uparrow}_t - d\varphi^{\downarrow}_t - dL_t\right] 
+ \one_{\{l\}}(X_t) \left[ d\varphi^{\downarrow}_t - d\varphi^{\uparrow}_t - dU_t\right]
+ \one_{(-l,l)}(X_t)\left[ d\varphi^{\uparrow}_t + d\varphi^{\downarrow}_t \right]\nonumber\\
& \leq\lambda \int_{0-}^T \one_{\{-l\}}(X_t) \left[ d\varphi^{\uparrow}_t +d\varphi^{\downarrow}_t - dL_t\right] 
+ \one_{\{l\}}(X_t) \left[ d\varphi^{\uparrow}_t + d\varphi^{\downarrow}_t - dU_t\right]
+ \one_{(-l,l)}(X_t)\left[ d\varphi^{\uparrow}_t + d\varphi^{\downarrow}_t \right]\nonumber\\
& \leq\lambda\int_{0-}^T \left(\one_{\{-l\}}(X_t) + \one_{(-l,l)}(X_t) + \one_{\{l\}}(X_t)\right) \left[ d\varphi^{\uparrow}_t + d\varphi^{\downarrow}_t - dL_t -dU_t\right] \nonumber\\
& = \lambda \left[\varphi^{\uparrow}_T + \varphi^{\downarrow}_T - L_T - U_T\right] - \lambda \left[\varphi^{\uparrow}_{0-} + \varphi^{\downarrow}_{0-} - L_{0-} - U_{0-}\right]\\
& = \lambda \left[\varphi^{\uparrow}_T + \varphi^{\downarrow}_T - L_T - U_T\right] . 
\end{align*}
Together with~\eqref{eq:repprop}, it follows that
\begin{align}
&J^1_T({\varphi}) - J^1_T(\varphi^1) \notag\\
&= \E\left[ \int_{0-}^T\left(\big(\varphi_t - \varphi^1_t \big)\mu_t - 
\frac{\gamma^1}{2} \sigma\big(\varphi_t - \varphi^1_t \big)(\sigma\varphi_t+ \sigma\varphi^1_t +2\beta^1_t)\right) dt
-\lambda(\varphi^{\uparrow}_T  + \varphi^{\downarrow}_T) + \lambda(L_T +U_T) \right] \notag\\
& = \E\left[ \int_{0-}^T\left(\big(\varphi_t - \varphi^1_t \big)\mu_t - 
\frac{\gamma^1}{2} \sigma\big(\varphi_t - \varphi^1_t \big)\big(\sigma\varphi_t - \sigma\varphi^1_t +2(\sigma\varphi^1_t + \beta^1_t)\big) \right) dt
- \lambda \big(\varphi^{\uparrow}_T + \varphi^{\downarrow}_T - L_T - U_T\big) \right] \notag\\
&= \E\left[ \int_{0-}^T-\left(\frac1 2 \gamma^1 \sigma^2\big(\varphi_t - \varphi^1_t \big)^2 + \gamma \sigma^2 X_t\big(\varphi_t - \varphi^1_t \big) \right) dt -
 \lambda \big(\varphi^{\uparrow}_T + \varphi^{\downarrow}_T - L_T - U_T\big) \right]\notag\\
&\leq -\E\left[ \int_{0-}^T \frac1 2 \gamma^1 \sigma^2\theta_t^2 dt \right] -\E\left[\int_{0-}^T \gamma \sigma^2 X_t\theta_t dt + \int_{0-}^T g(X_t)d\theta_t \right] 
. \label{eq:compprop}
\end{align}
To simplify this expression, use It\^o's formula, the dynamics~\eqref{reflected BM} of the doubly-reflected Brownian motion $X$, the fact that $L$, $U$ only grow on the sets $\{X_t=-l\}$ and $\{X_t=l\}$ respectively, and the ODE for $g$ from Lemma~\ref{lem:gprop}(ii) to compute
\begin{align*}
dg(X_t) 
&= \frac 1 2 \delta^2 g''(X_t) dt + g'(X_t) \big[dL_t - dU_t \big] + \delta g'(X_t)
dW_t\\
&= \gamma\sigma^2 X_t dt +  \delta g'(X_t)
dW_t.
\end{align*}
Integration by parts in turn yields
\begin{align*}
d\left(g(X_t)\theta_t\right) = g(X_t)d\theta_t + \gamma\sigma^2 \theta_t X_t dt + \delta \theta_tg'(X_t)
dW_t. 
\end{align*}
Since $g'$ is bounded, the integrability condition~\eqref{assump:singular}  implies that the local martingale part in this decomposition is a true martingale, so that
\begin{align}\label{eq:local}
\E\left[ \int_{0-}^T \gamma \sigma^2 X_t\theta_t dt + \int_{0-}^T g(X_t)d\theta_t \right] =  \E\left[g(X_T)\theta_T\right] - \E\left[g(X_{0-})\theta_{0-}\right] = \E\left[g(X_T)\theta_T\right].
\end{align}
Now, the long-run optimality of $\varphi^1$ for agent $1$ follows from~\eqref{eq:compprop} and~\eqref{eq:local} by taking into account that property (iv) of $g$ and the transversality condition~\eqref{singular:vanishing} imply
\begin{align*}
\lim_{T\to\infty}\frac{1}{T} \big|\E\left[g(X_T)\theta_T\right]\big|
&\leq \lim_{T\to\infty}\frac{1}{T} \E\left[|g(X_T)\theta_T|\right]
\leq \lim_{T\to\infty}\frac{\lambda}{T} \E\left[|\theta_T|\right]
\leq \lim_{T\to\infty}\frac{\lambda}{T} \E\left[|\varphi_T| + |\varphi^1_T|\right] = 0.
\end{align*}
An analogous argument shows that $\varphi^2$ is optimal for agent 2, thereby completing the proof.
\end{proof}

\appendix

\section{Proof of Lemma~\ref{ODE}}\label{app:ode}
In this appendix, we establish existence, uniqueness, and properties for the second-order nonlinear ODE~\eqref{eqn:ergodic ODE} from Lemma~\ref{ODE}. To this end, we introduce the following first-order nonlinear ODE:
\begin{equation}\label{ode:general}
y'(x) = f(x,y(x)) = -ax^2 + b + F(y(x)),
\end{equation}
and extend the ideas of~\cite{guasoni.weber.18} to general functions $F:\R\to\R$ which satisfy Assumption~\ref{property} below. That is, in Lemma~\ref{general ODE}, 
 we establish that for suitable functions $F$, and any choice of $a>0$ and $b\in\R$, \eqref{ode:general} has a unique positive solution on its maximal domain which contains $[\sqrt{\max\{b,0\}/a},\infty)$. 
Then, for the first-order ODE
\begin{align}\label{ODE:g}
g'(x) = ax^2 - b - F(g(x)),
\end{align}
Lemma~\ref{symmetric} shows that there is a unique value of $b$ that guarantees there is a solution on $\R$ such that $xg(x)\leq 0$, and the solution is unique. Moreover, Lemma~\ref{2nd ODE} proves that this solution to~\eqref{ODE:g} is also the unique solution of the second-order ODE
\begin{equation}\label{2nd ODE:g}
g''(x) = 2ax - F'(g(x))g'(x).
\end{equation}
Finally, with the help of Lemma~\ref{equ assump} pointing out the relationship between Assumption~\ref{cond:cost} and Assumption~\ref{property}, we establish the proof of Lemma~\ref{ODE} with $F$ chosen to be proportional to the Legendre transform of the trading cost function $G$. 

To carry out this program, we first introduce the assumptions on $F$ that are needed to generalize the argument developed for power functions by \cite{guasoni.weber.18}. Subsequently, in Remark~\ref{prop1} and Lemma~\ref{prop2}, we derive a number of consequences, which are crucial tools for the analysis.  

\begin{assumption}\label{property}
\begin{enumerate}[(i)]
\item $F$ is convex, differentiable, even, and strictly increasing on $[0,\infty)$ with $F(0)=0$;
\item $F'$ is also differentiable and strictly increasing on $[0,\infty)$ with $F'(0)=0$; 
\item There exists a constant $K$ such that $F(x)\leq K(1+|x|^p)$ for some $p\geq2$;
\item There exist constants $\tilde{C}>0$ and $x_0>0$ such that $F''(x)>\tilde{C}$ for every $|x|>x_0$. 
\end{enumerate}
\end{assumption}

\begin{remark}\label{prop1}
Some immediate consequences of Assumption~\ref{property} are as follows:
\begin{enumerate}[(i)]
\item $F'$ is increasing on the whole real line, since it is an odd function (as $F$ is even) and $F'$ is strictly increasing on $[0,\infty)$;
\item Assumption (iv) implies that there is some $\hat{a}>0$ such that $F(x)>\hat a x^2$ for large $x>0$. This is why $p\geq 2$ in Assumption~\ref{property}(iii) is without loss of generality.
\end{enumerate}
\end{remark}

\begin{lemma}\label{prop2}
Suppose $F$ satisfies Assumption~\ref{property}. Then:
\begin{enumerate}[(i)]
\item $F^{-1}$ exists and is concave on $[0,\infty)$;
\item For every $x\geq0$ and every $\alpha\geq 1$:
\begin{align*}
\alpha F(x) \leq F(\alpha x), \quad \quad F^{-1}(\alpha x) \leq \alpha F^{-1}(x);
\end{align*}
\item For $x,y\geq 0$:
\begin{align*}
F(x+y) \geq F(x)+F(y), \qquad F^{-1}(x) +F^{-1}(y)\geq F^{-1}(x+y);
\end{align*}
\item On $(0,\infty)$, $F^{-1}$ is strictly increasing but $(F^{-1})'$ is strictly decreasing; 
\item There exists constant $C>0$ that ${F^{-1}}(x^2) \leq C|x|$ and $2x(F^{-1})'(x^2)\leq 2C$ for every $|x|>x_0$. 
\end{enumerate}
\end{lemma}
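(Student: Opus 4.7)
The plan is to derive the five parts in order, since each builds on standard convex-analysis tools set up by Assumption~\ref{property}, with (v) being the only step that mixes the growth assumption (iv) non-trivially with the convexity assumption (i).

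For (i), continuity and strict monotonicity of $F$ on $[0,\infty)$ with $F(0)=0$ guarantee that $F^{-1}$ exists on the image of $F$, which is all of $[0,\infty)$ thanks to the unbounded growth implied by Assumption~\ref{property}(iv). Concavity of $F^{-1}$ then follows from the classical fact that the inverse of an increasing convex function vanishing at $0$ is concave; once part (iv) is in hand, this can also be verified directly from $(F^{-1})'(y)=1/F'(F^{-1}(y))$.

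For (ii) and (iii), I would rely on convexity of $F$ together with $F(0)=0$. Applying convexity to $x = \tfrac{1}{\alpha}(\alpha x) + (1-\tfrac{1}{\alpha})\cdot 0$ yields $\alpha F(x)\leq F(\alpha x)$; the dual bound $F^{-1}(\alpha x)\leq \alpha F^{-1}(x)$ then follows by substituting $x = F(u)$ and applying the increasing $F^{-1}$. Superadditivity $F(x+y)\geq F(x)+F(y)$ follows from the two convex combinations $F(x)\leq \tfrac{x}{x+y}F(x+y)$ and $F(y)\leq \tfrac{y}{x+y}F(x+y)$; the subadditivity of $F^{-1}$ is the dual statement, obtained by setting $u=F^{-1}(x)$, $v=F^{-1}(y)$ and applying $F^{-1}$ to $F(u+v)\geq x+y$. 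Part (iv) is then direct: strict monotonicity of $F^{-1}$ is inherited from $F$, while the strict decrease of $(F^{-1})'(y)=1/F'(F^{-1}(y))$ follows from the strict monotonicity of $F'$ on $(0,\infty)$ assumed in Assumption~\ref{property}(ii), composed with the strictly increasing $F^{-1}$.

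The main work, and the only real obstacle, lies in (v). Integrating $F''>\tilde C$ twice from $x_0$ produces $F(y)\geq\tfrac{\tilde C}{2}(y-x_0)^2$ for $y\geq x_0$, so there exists $c_1>0$ with $F(y)\geq c_1 y^2$ for $y$ large enough; inverting yields the first bound $F^{-1}(x^2)\leq |x|/\sqrt{c_1}$ for $|x|$ large. For the derivative bound, I would combine this with the convexity inequality $F'(y)\geq F(y)/y$, which follows from $F(0)\geq F(y)+F'(y)(0-y)$ via the supporting-hyperplane characterization of convexity. Substituting $y=F^{-1}(x^2)$ and using the first bound gives $F'(F^{-1}(x^2))\geq x^2/F^{-1}(x^2)\geq \sqrt{c_1}\,|x|$, whence $2|x|(F^{-1})'(x^2)=2|x|/F'(F^{-1}(x^2))\leq 2/\sqrt{c_1}$. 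Taking $C=1/\sqrt{c_1}$ yields both desired inequalities simultaneously. The remaining bookkeeping subtlety is that the quadratic lower bound $F(y)\geq c_1 y^2$ only holds once $y$ is large enough to dominate the $x_0$-corrections; this is easily absorbed either by enlarging $x_0$ in the statement or by inflating the constant $C$, using continuity of $F^{-1}$ on the intermediate compact region.
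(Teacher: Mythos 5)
Your proposal is correct and follows essentially the same route as the paper: concavity of $F^{-1}$ by inverting the convexity inequality, parts (ii) and (iii) via the standard convex-combination arguments (the paper derives superadditivity from monotonicity of $F'$ rather than the two weighted convex combinations you use, but this is cosmetic), part (iv) from $(F^{-1})' = 1/(F'\circ F^{-1})$, and part (v) by combining the quadratic lower bound obtained from integrating $F''>\tilde C$ with the inequality $F'(y)\geq F(y)/y$ (which the paper gets by differentiating $yF'(y)-F(y)$ and you get from the supporting-hyperplane form of convexity — the same fact).
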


\begin{proof} (i): Convexity of $F$ implies that, for $x,y\geq 0$ and $0<a<1$, 
\begin{align*}
ax + (1-a)y = aF(F^{-1}(x))+(1-a)F(F^{-1}(y))\geq F(aF^{-1}(x)+(1-a)F^{-1}(y)).
\end{align*}
As $F$ is increasing, $F^{-1}$ is increasing as well. Applying $F^{-1}$ on both sides of the above estimate in turn yields the concavity of $F^{-1}$.  

\hspace{6mm}(ii): Recall that $F(0)=0$ and again use convexity of $F$ to obtain, for every $x\geq0$ and $\alpha\geq1$,
$$ 
F(\alpha x)= \alpha\left[\frac{1}{\alpha} F(\alpha x) + \left(1-\frac{1}{\alpha}\right) F(0) \right]\geq \alpha F\left(\frac{1}{\alpha} \alpha x\right) = \alpha F(x).
$$
Analogously, the concavity of $F^{-1}$ yields $F^{-1}(\alpha x) \leq \alpha F^{-1}(x)$.

\hspace{6mm}(iii): Since $F'$ is increasing we have $F'(x+y) - F'(x) \geq0$ for every $x,y>0$. As a consequence, $F(x+y) - F(x) \geq F(0+y) - F(0) = F(y)$ as asserted. 
It implies that 
$$
F(F^{-1}(x) + F^{-1}(y)) \geq F(F^{-1}(x) ) +F(F^{-1}(y))  = x + y.
$$
Since $F$ is strictly increasing, we can infer 
$$
F^{-1}(x) + F^{-1}(y) \geq F^{-1}(x+y). 
$$

\hspace{6mm}(iv): Since $F$ is convex and $F$ and $F'$ are strictly increasing on $[0,\infty)$, then $F'\geq 0$, $F''\geq0$ and they are both not equal to zero on any interval, hence 
\begin{align*}
(F^{-1})' (x)= \frac{1}{F'(F^{-1}(x))}\geq 0, \qquad (F^{-1})'' (x)= -\frac{F''(F^{-1}(x))}{\Big(F'(F^{-1}(x))\Big)^3}\leq 0,
\end{align*}
and they are both not zero on any interval. 
So $F^{-1}$ is strictly increasing on $[0,\infty)$ but
$(F^{-1})'$ is strictly decreasing on $(0,\infty)$ as asserted.
 
\hspace{6mm}(v): By directly integrating the inequality in Assumption~\ref{property} (iv) and choosing $C$ large enough, together with (ii) in~\ref{prop2}, it's easy to see that the first statement holds. For the second statement, by Assumption~\ref{property}(ii),
\begin{align*}
\frac{d}{dx}\left[ x F'(x) - F(x)\right] = xF''(x) + F'(x) - F'(x) = xF''(x)\geq0.
\end{align*}
As $F^{-1}$ is strictly increasing, we have $F^{-1}(x^2) >0$ for $x>0$ and hence
\begin{align*}
F'(F^{-1}(x^2)) \geq \frac{F(F^{-1}(x^2))}{F^{-1}(x^2)} = \frac{x^2}{F^{-1}(x^2)}. 
\end{align*}
Together Assumption~\ref{property} (iv) it follows that, for $x\geq|x_0|$, 
\begin{align*}
\frac{d}{dx} F^{-1}(x^2) = 2x (F^{-1})'(x^2) = \frac{2x}{F'(F^{-1}(x^2))} \leq \frac{2xF^{-1}(x^2)}{x^2} \leq \frac{2Cx^2}{x^2} = 2C, 
\end{align*} 
which yields the desired result.
\end{proof}

Now we address the existence and uniqueness of the positive solution to~\eqref{ode:general} on $[\sqrt{b/a},\infty)$. 

\begin{lemma}\label{general ODE}
Let $F$ be a function satisfying Assumption~\ref{property} and $a>0$, $b\in\R$. Then there exists a unique solution $y$ of 
\begin{equation}
y'(x) = f(x,y(x)) = -ax^2 + b + F(y(x)), \tag{\ref{ode:general}}
\end{equation}
such that $[\sqrt{\max\{b,0\}/a}, \infty)$ is contained in its maximal interval of existence, and $y(x)\geq 0$ for every $x\geq \sqrt{\max\{b,0\}/{a}}$. Moreover, $[0, \infty)$ is contained in its maximal interval of existence, $y$ is increasing on $[\sqrt{\max\{b,0\}/{a}}, \infty)$, and satisfies the growth condition
\begin{equation}\label{lemma_growth}
\lim_{x\to \infty} \frac{y(x)}{F^{-1}(ax^2)} =1. 
\end{equation}
Further, in \eqref{ode:general}, if we write $y(x) = y(x;b)$, then for every $x\in [0,\infty)$, $b\in\R$,
\begin{align}\label{eq:b}
\frac{\partial{y(x;b)}}{\partial b} < 0.
\end{align}
\end{lemma}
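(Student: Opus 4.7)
The strategy follows and extends that of Guasoni-Weber~\cite{guasoni.weber.18}. Since $F\in C^1$ by Assumption~\ref{property}, the right-hand side $f(x,y)=-ax^2+b+F(y)$ is locally Lipschitz in $y$, so Picard-Lindel\"of provides local existence and uniqueness together with continuous and strictly monotone dependence on the initial value. For a fixed reference point $x_1>x_\ast:=\sqrt{\max\{b,0\}/a}$, I would parameterise solutions by $c:=y(x_1)$ and partition $(0,\infty)$ into three sets: the blow-up set $\mathcal{B}$ (where $y(\cdot;c)$ escapes to $+\infty$ in finite forward time), the crossing set $\mathcal{C}$ (where $y(\cdot;c)$ first reaches zero at some finite $x>x_1$), and the admissible set $\mathcal{A}:=(0,\infty)\setminus(\mathcal{B}\cup\mathcal{C})$. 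Openness of $\mathcal{B}$ and $\mathcal{C}$ follows from continuous dependence on initial conditions. Non-emptiness of $\mathcal{B}$ stems from the superquadratic growth of $F$ noted in Remark~\ref{prop1}(ii): for $c$ large, $F(c)$ dominates $ax^2-b$ on a macroscopic forward interval and comparison with the autonomous blow-up equation $\tilde{y}'=\tfrac12 F(\tilde{y})$ forces divergence. Non-emptiness of $\mathcal{C}$ holds since $y'(x_1;c)<0$ for small $c$ and the solution is then dominated by the decreasing parabola $-ax^2+b$. Connectedness of $(0,\infty)$ yields $\mathcal{A}\neq\emptyset$, and each $c\in\mathcal{A}$ gives a positive solution on $[x_1,\infty)$.

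The main obstacle is to show $\mathcal{A}$ is a singleton and that the corresponding trajectory satisfies~\eqref{lemma_growth}. The plan is to construct explicit super- and sub-solutions of~\eqref{ode:general} of the form $F^{-1}(ax^2+\psi_\pm(x))$ with slowly varying corrections $\psi_\pm(x)=o(x^2)$, exploiting the convexity/concavity properties in Lemma~\ref{prop2} together with Assumption~\ref{property}(iv) to verify the one-sided ODE inequalities. Any admissible trajectory is then trapped between these barriers, and the asymptotic $F^{-1}(ax^2+\psi)/F^{-1}(ax^2)\to 1$ for $\psi=o(x^2)$ (which follows from the concavity and subadditivity of $F^{-1}$ established in Lemma~\ref{prop2}(i),(iii)) yields~\eqref{lemma_growth}. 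For uniqueness of $c^\ast$, if $c_1<c_2$ both lay in $\mathcal{A}$ then $z:=y(\cdot;c_2)-y(\cdot;c_1)>0$ would solve the linear equation $z'=F'(\xi)z$ with $\xi$ between the two trajectories, and hence tending to infinity; by Assumption~\ref{property}(iv), $F'(\xi)\to\infty$ and $z$ would grow super-exponentially, contradicting the common upper barrier. Constructing the barriers and matching them precisely to the separatrix is the most delicate step, requiring careful use of Lemma~\ref{prop2}(v).

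Given the admissible solution on $[x_\ast,\infty)$, it must lie strictly above the nullcline $x\mapsto F^{-1}(ax^2-b)$: any downward crossing would place $c$ in $\mathcal{C}$ since below the nullcline $y'<0$ and the solution would continue to fall. Hence $y'=F(y)-(ax^2-b)>0$ on $(x_\ast,\infty)$, giving monotonicity. The backward extension to $[0,x_\ast]$ uses Picard-Lindel\"of again, with the polynomial growth bound $F(x)\leq K(1+|x|^p)$ from Assumption~\ref{property}(iii) precluding blow-up on a compact interval once $y$ is controlled there. Finally, for~\eqref{eq:b}, note that for $b_1<b_2$ the nullclines satisfy $F^{-1}(ax^2-b_2)<F^{-1}(ax^2-b_1)$ uniformly, while both distinguished solutions share the same leading-order asymptotic $F^{-1}(ax^2)$. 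A strict comparison argument combined with the uniqueness of the separatrix in each case forces $y(x;b_2)<y(x;b_1)$ for all $x\in[0,\infty)$; the associated derivative $z:=\partial_b y$ satisfies the linearisation $z'=1+F'(y)z$, and its strict sign $z<0$ is then inherited from the strict comparison via the implicit function theorem applied at the asymptotic selection condition.
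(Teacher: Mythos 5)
Your proposal takes a genuinely different route from the paper's. The paper does not shoot: instead, for each $\bar{x}>\sqrt{b_+/a}$ it launches a solution from the point $(\bar{x},h(\bar{x}))$ on the nullcline $h(x)=F^{-1}(ax^2-b)$, shows these trajectories are monotone in $\bar{x}$, and defines the desired solution as $y_*(x)=\sup_{\bar{x}>x}y(x;\bar{x},h(\bar{x}))$. Finiteness of the supremum, continuity, differentiability, and the fact that $y_*$ solves the ODE are then checked directly, and the growth condition is established not by sandwiching between explicit barriers but by a case analysis: one rules out $\lim y/F^{-1}(ax^2)\in(1,\infty)$ by exhibiting a contradiction with linear growth of $F^{-1}$, and rules out the limit being $+\infty$ via a Riccati-type finite-time blow-up argument. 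Your shooting/barrier strategy is structurally sound and the openness/connectedness scaffolding (non-emptiness of the blow-up set $\mathcal{B}$ via the super-quadratic growth from Remark~\ref{prop1}(ii), non-emptiness of the crossing set $\mathcal{C}$ for small initial data) is correct in principle.

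However, there are genuine gaps. The central one you flag yourself: you do not actually construct the super- and sub-solutions $F^{-1}(ax^2+\psi_\pm(x))$ and verify the one-sided ODE inequalities. This is precisely the step the paper replaces with the $L$-case-analysis; asserting that it requires ``careful use of Lemma~\ref{prop2}(v)'' is not a proof, and the verification that $(F^{-1}(ax^2+\psi_+))'\geq b+\psi_+$ holds for a suitable $o(x^2)$ correction is nontrivial because $(F^{-1})'$ decays and the derivative of the barrier has to be compared to a term that can be positive. The lower barrier is essentially the nullcline itself (so $\psi_-=-b$ works), but the upper barrier is the substantive part. Second, the argument for $\partial_b y<0$ is vague. ``Implicit function theorem applied at the asymptotic selection condition'' is not a standard tool: the selected solution is defined by a condition at $+\infty$, not a finite-dimensional equation, and an infinite-dimensional IFT would require a careful functional-analytic setup that you do not provide. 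The paper instead differentiates each approximant $y(\cdot;\bar{x},h(\bar{x};b))$ in $b$ (obtaining the sensitivity equation, $\Phi'=1+F'(y)\Phi$, whose solution is non-positive by variation of constants), shows $\Phi(x;\bar{x},b)$ is monotone in $\bar{x}$, and then pushes the derivative through the supremum via upper and lower Riemann-sum estimates analogous to those used to show $y_*$ is $C^1$ in $x$. That bookkeeping is the whole point and cannot be replaced by a one-line appeal to the IFT. Third, the backward extension to $[0,x_*]$ is not secured by polynomial growth of $F$ alone — $y'=y^2$ has a polynomial right-hand side yet blows up — you need an a priori bound on $y$ on the compact interval, which the paper derives precisely from the monotonicity in $b$ (and hence cannot be decoupled from the $\partial_b y<0$ argument you left incomplete). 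Your uniqueness argument, by contrast, is essentially equivalent to the paper's: both compare the linear (or slower) growth forced by the upper barrier with the at-least-linear divergence of the difference $y_2-y_1$ coming from $F(y_2)-F(y_1)\geq F(y_2-y_1)\geq F(\delta)$, though you phrase it through the mean-value linearisation $z'=F'(\xi)z$.
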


\begin{proof} 
Let ${b}_+ = \max\{b,0\}$. 
On $[\sqrt{{b}_+/a},+\infty)$, define the function $h(x)=F^{-1}({a x^2-b})$. 
Notice that by definition of $h(x)$ we have $f(x,h(x))=0$ and $h$ is strictly increasing on $[\sqrt{b_+/a},+\infty)$. Thus we can infer that $h$ is a supersolution on $(\sqrt{b_+/a}, \infty)$ in that $h'(x) \geq f(x,h(x))=0$.

Notice that $f(x,y)$ is locally Lipschitz, so that local existence and uniqueness hold for the initial-value problem~\eqref{ode:general} with initial condition $(x_0,y_0)$. For every $\bar x>\sqrt{b_+/a}$, let $y(x;\bar x, h(\bar x))$ denote the unique solution to~\eqref{ode:general} with initial condition $(\bar x, h(\bar x))$ on its maximal interval of existence $(T^{-}, T^{+})$. 
The first step is to show that the following inequalities hold for every $\bar{x}>\sqrt{b_+/a}$:
\begin{align}
&\mbox{on } [\sqrt{b_+/a},\bar{x}),\;  y({x};\bar x, h(\bar x)) > h(x)\geq 0, \; y(\cdot;\bar x, h(\bar x)) \mbox{ is increasing},  \label{est: x<barx} \\
&\mbox{on } [\bar{x},T^+), \;\;\quad y({x};\bar x, h(\bar x)) \leq h(x).\label{est: x>barx}
\end{align}
First, by directly calculating the first-order derivative, we find that for every $\bar{x}>\sqrt{b_+/a}$, 
\begin{align*}
y'(\bar{x};\bar x, h(\bar x)) = f(\bar{x},y(\bar{x};\bar x, h(\bar x)) )
=f(\bar{x}, h(\bar x))= 0 < h'(\bar{x}).
\end{align*}
Therefore, there exists $\epsilon^{-}\in(0,\bar{x} - \sqrt{b_+/a} \vee T^{-})$ such that $y({x};\bar x, h(\bar x)) >h(x)$ for $x\in(\bar{x}-\epsilon^{-},\bar{x})$. Define
\begin{align*}
x_0 = \inf\{x\in[\sqrt{b_+/a},\bar{x})\cap(T^{-}, T^+): y({x};\bar x, h(\bar x)) >h(x)\}.
\end{align*}
It is easy to see that on $(x_0,\bar{x})$, $y({x};\bar x, h(\bar x))$ is increasing through
\begin{align*}
y'({x};\bar x, h(\bar x)) = f({x},y({x};\bar x, h(\bar x)) ) > f(x,h(x))=0,
\end{align*}
and since $y(x_0;\bar x, h(\bar x))$ is between $h(x_0)>0$ and $h(\bar{x})<\infty$, we conclude $x_0\in(T^{-},T^+)$.  
Suppose $x_0>\sqrt{b_+/a}$. The definition of $x_0$ yields $y({x_0};\bar x, h(\bar x)) = h(x_0)$ and $y'({x_0};\bar x, h(\bar x)) \geq h'(x_0)>0$; but plugging $y({x_0};\bar x, h(\bar x)) = h(x_0)$ into~\eqref{ode:general} gives $y'({x_0};\bar x, h(\bar x)) = f(x_0,h(x_0)) = 0$, a contradiction. Therefore, $y(\cdot;\bar{x},h(\bar{x}))$ is increasing on $(x_0,\bar{x})$ and $x_0 = \sqrt{b_+/a}$, which implies that~\eqref{est: x<barx} holds. 

To show~\eqref{est: x>barx}, we calculate the second-order derivative, 
\begin{align}\label{eq:2nd derivative}
y''(x;\bar x, h(\bar x)) = -2ax + F'(y(x;\bar x, h(\bar x))) y'(x;\bar x, h(\bar x)),
\end{align}
which implies that $y''(\bar{x};\bar x, h(\bar x))< 0$ and there exists $\epsilon^{+} >0$ such that $y(x;\bar x, h(\bar x))<h(x)$ for $x\in(\bar{x}, \bar{x}+\epsilon^{+})$. 
Define
\begin{align*}
x_1 = \sup\{x\in[\bar{x},T^{+}): y({x};\bar x, h(\bar x)) <h(x)\}.
\end{align*}
Suppose $x_1<T^{+}$; the definition of $x_1$ yields $y({x_1};\bar x, h(\bar x)) = h(x_1)$ and $y'({x_1};\bar x, h(\bar x)) \geq h'(x_1) >0$. But plugging $y({x_1};\bar x, h(\bar x)) = h(x_1)$ into~\eqref{ode:general} gives $y'({x_1};\bar x, h(\bar x)) = f(x_1,h(x_1)) = 0$, a contradiction. Therefore, \eqref{est: x>barx} holds.

Now define
\begin{align}
x_2: = \sup\{x\geq \bar{x}: y(x;\bar{x},h(\bar{x}))\geq 0, \text{ or } y'(x;\bar{x},h(\bar{x}))\leq 0\}.
\end{align}
From~\eqref{eq:2nd derivative}, we see that $y''(x;\bar{x},h(\bar{x})) \leq -2ax$ for $x\in[\bar{x}, x_2)$, so $y(\cdot;\bar{x},h(\bar{x}))$ is strictly decreasing and strictly concave on $[\bar{x},x_2)$. This implies $x_2<+\infty$, and by continuity we have $y(x_2;\bar{x},h(\bar{x})) = 0$, $y'(x_2;\bar{x},h(\bar{x})) <0$, and in addition that $y(\cdot;\bar{x},h(\bar{x}))<0$ in a right-neighbourhood of $x_2$. For $x>x_2$, we claim that $y(x;\bar{x},h(\bar{x})) \leq 0$, because in order to become positive again, $y(\cdot;\bar{x},h(\bar{x}))$ would need to cross zero, but $y(x;\bar{x},h(\bar{x}))=0$ implies $y'(x;\bar{x},h(\bar{x}))=-ax^2+b  <0$. Therefore, we can conclude that either $T^+ = \infty$ or $\lim_{x\uparrow T^+} y(x;\bar x, h(\bar x)) = -\infty$.
For the latter case, define
\begin{align*}
y(x;\bar x, h(\bar x)) = -\infty, \qquad \mbox{for } x\in[T^+,\infty).
\end{align*}

Next, we consider the relationship between $y(x; \bar x_1, h(\bar x_1))$ and $y(x; \bar x_2, h(\bar x_2))$ for $\bar x_2 > \bar x_1>\sqrt{b_+/a}$. By~\eqref{est: x<barx}, at $\bar x_1$, 
$y(\bar x_1; \bar x_1, h(\bar x_1)) = h(\bar x_1) < y(\bar x_1; \bar x_2, h(\bar x_2))$. By (local) uniqueness of the initial value problems associated with~\eqref{ode:general}, there cannot exist $x$ such that $y(x; \bar x_1, h(\bar x_1))=y(x; \bar x_2, h(\bar x_2)) >-\infty$, thus the graph of $y(x; \bar x_1, h(\bar x_1))$ lies strictly below the graph of $y(x; \bar x_2, h(\bar x_2))$ except when they both take the value $-\infty$. In summary
\begin{align}\label{increasing in bar x}
\mbox{on } (\sqrt{b_+/a}, \infty), \quad  y(x; \cdot, h(\cdot)) \mbox{ is increasing}. 
\end{align}

Next, we show that any solution $y$ of~\eqref{ode:general} such that $[0, \infty)$ is contained in its maximum interval of existence with $y(x)\geq 0$ for every $x\geq \sqrt{{b_+}/{a}}$, automatically satisfies the growth condition~\eqref{lemma_growth}. From the above argument concerning the relationship between $h(x)$ and $y(x; \bar{x}, h(\bar{x}))$, an important observation is for every $x>\sqrt{b_+/a}$ and every $\bar{x}>x$, we need to have $y(x)>y(x;\bar{x},h(\bar{x}))\geq h(x)$; otherwise the solution $y$ will not stay positive. We summarize the properties of $y$ as follows: 
\begin{enumerate}[i)]
\item\label{positive} $y(x)> h(x)\geq 0$, ${y}'(x) = -ax^2+b+F(y(x)) > -ax^2+b+F(h(x))=0$, which means $y$ is strictly increasing on $(\sqrt{b_+/a},+\infty)$;
\item\label{domain} $[\sqrt{b_+/a},+\infty)\subset D$, where $D$ is the maximal interval of existence of $y(x)$. 
\end{enumerate}
From Property~\ref{positive} and Lemma~\ref{prop2} (iii,iv), it follows that
\begin{align*}
1 &= \lim_{x\to\infty} \frac{F^{-1}(ax^2) - F^{-1}(b_+)}{F^{-1}(ax^2)}
\leq\liminf_{x\to\infty}\frac{h(x)}{F^{-1}(ax^2)} \leq \limsup_{x\to\infty}\frac{h(x)}{F^{-1}(ax^2)} 
\leq \lim_{x\to\infty} \frac{F^{-1}(ax^2)}{F^{-1}(ax^2)}=1,
\end{align*}
and in turn
\begin{equation*}
\liminf_{x \rightarrow \infty} \frac{y(x)}{F^{-1}(a x^2)}\geq 1.    
\end{equation*}
Next we show that $L=\lim_{x \rightarrow \infty} \frac{y(x)}{F^{-1}(a x^2)}$ exists and $L=1$. To this end, set $M=\limsup_{x \rightarrow \infty} \frac{y(x)}{F^{-1}(a x^2)}$ and notice that $1\leq M\leq\infty$. If $M=1$ then we can conclude that $L=1$. 

Assume $1<M<\infty$. We first want to show $M=L$. There exists a  sequence $(x_n)_{n\geq0} \to \infty$ such that 
\begin{align*}
\lim_{n \rightarrow \infty} \frac{y(x_n)}{F^{-1}(a x_n^2)}=M.
\end{align*} 
In particular, for any $\delta\in(0,M-1)$ there exists $N_\delta \in \mathbb{N}$ such that for every $n \geq N_\delta$ we have 
\begin{align*}
y(x_n)\geq(M-\delta){F^{-1}(a x_n^2)}.
\end{align*}
For large $x$, we claim that the function $s(x)=(M-\delta)F^{-1}(a x^2)$ is a subsolution of \eqref{ode:general}. By Lemma~\ref{prop2} (v), we know that for $x\geq|x_0|/\sqrt{a}$, 
\begin{align*}
0<s'(x) = (M-\delta){(F^{-1})'(a x^2 )} 2ax \leq 4\sqrt{a}(M-\delta)C.
\end{align*}
Since $M-\delta>1$, there exists $\bar{x}$ such that for $x\geq\bar{x}$, we have $(M-\delta)ax^2 -ax^2 +b \geq 4\sqrt{a}(M-\delta)C$. As a consequence, 
\begin{align*}
s'(x)\leq 4\sqrt{a}(M-\delta)C &\leq-ax^2+b+(M-\delta)ax^2 \\
&=-ax^2+b+F(F^{-1}((M-\delta)(ax^2)))\\
&\leq-ax^2+b+F((M-\delta)F^{-1}(ax^2)) = f(x,s(x)).
\end{align*}
On the other hand, notice that $y(x_n)\geq s(x_n)$ for every $n\geq N_\delta$. 
Thus by the comparison lemma, for every $\delta\in(0,M-1)$ and some large $x_N$, from $y(x_N) \geq (M-\delta) F^{-1}(a x_N^2) = s(x_N)$ we can conclude $y(x)\geq s(x) =(M-\delta){F^{-1}(a x^2)}$ for $x\geq x_N$. 
In particular, for every small $\delta$,
$$
\liminf_{x \rightarrow \infty} \frac{y(x)}{F^{-1}(a x^2)}\geq M-\delta,
$$
and therefore
$$
\liminf_{x \rightarrow \infty} \frac{y(x)}{F^{-1}(a x^2)}=M=\limsup_{x \rightarrow \infty} \frac{y(x)}{F^{-1}(a x^2)}.
$$
 If $M=\infty$, we substitute $M-\delta$ with $N\in\mathbb{N}$ and then infer with the same argument that $\liminf_{x \rightarrow \infty} \frac{y(x)}{F^{-1}(a x^2)}=\infty$.  In other words, the limit $L$ exists and $L=M \in[1, \infty]$.
 
Next, we show $L=1$. First, assume to the contrary $1<L<\infty$. 
Since $\lim_{x \rightarrow \infty} \frac{y(x)}{F^{-1}(a x^2)}=L$, by Lemma~\ref{prop2} (v), there exists a constant $K>0$ such that $y(x)\leq Kx$ for large $x>0$. Moreover, for every $\delta\in(0,L-1)$ and large $x$, by Lemma~\ref{prop2} (ii),
\begin{align*}
y(x) \geq (L-\delta)F^{-1}(ax^2)\geq F^{-1}((L-\delta)ax^2).
\end{align*}
As a consequence,
\begin{align*}
\liminf_{x\to\infty} \frac{F(y(x))}{ax^2} \geq L-\delta.
\end{align*}
On the other hand, \eqref{ode:general} implies 
\begin{equation*}
\liminf_{x\rightarrow\infty}\frac{y'(x)}{a x^2}\geq L-\delta-1>0,
\end{equation*}
so that $y'(x)$ grows at least quadratically, leading to a contradiction. 

Now assume that $L=+\infty$. With a similar argument as above, for every $L'>0$ and $x$ sufficiently large,
\begin{align*}
F(y(x)) \geq F(F^{-1}(L'ax^2)) = L'ax^2, 
\end{align*}
and it follows that 
\begin{align*}
\lim_{x\to\infty} \frac{ax^2}{F(y(x))} = 0.
\end{align*}
From~\eqref{ode:general} it follows that
\begin{equation*}
\lim_{x\rightarrow\infty}\frac{y'(x)}{F( y(x))} = \lim_{x\to\infty} \frac{-ax^2+b+F(y(x))}{F(y(x))} =1.
\end{equation*}
Notice that for large $x$ such that $y(x)>C x_0$, Lemma~\ref{prop2} (v) yields
$$
\frac{y(x)^2}{C^2} = F\left(F^{-1}\left(\frac{y(x)^2}{C^2}\right)\right) \leq F\left(C \frac{y(x)}{C}\right) = F(y(x)).
$$
Thus, for small $\delta$ and sufficiently large $\bar{x}$, for all $x>\bar{x}$ we have
\begin{align*}
 \frac{1-\delta}{C^2} y^2(x)\leq (1-\delta)F(y(x))\leq y'(x).
 \end{align*} 
Hence for sufficiently large $\xi>\bar{x}$,
\begin{align*} 
\frac{y'(\xi)}{y^2(\xi)} \geq  \frac{1-\delta}{C^2}  >0.
\end{align*}
Integrating this inequality from $\xi = \bar{x}$ to $\xi=x$, we obtain
\begin{align*}
y(x) \geq \frac{1}{\frac{1}{y(\bar x)}-  \frac{1-\delta}{C^2}(x-\bar x)}.
\end{align*}
In particular, $y(x)$ has a vertical asymptote, contradicting Property~\ref{domain}. In summary, $L=1$. 

We now establish the uniqueness of $y(x)$. Suppose there exists another solution $y_2$ of~\eqref{ode:general} such that $[0, \infty)$ is contained in its maximal domain, and $y_2(x)\geq 0$ for every $x\geq \sqrt{{b_+}/{a}}$, and there exists $\bar x$, $\delta>0$ such that $y_2(\bar x) \geq y(\bar x) +\delta $. Then, on $[\bar x, \infty)$, the graph of $y_2$ always lies above $y$; otherwise it will violate the local uniqueness of the initial value problems associated with~\eqref{ode:general}. Moreover, for $x\geq\bar{x}$,
\begin{align*}
y_2'(x) - y'(x) = F(y_2(x)) - F(y(x))\geq 0,
\end{align*}
which means $y_2 - y$ is increasing. As a result,
\begin{align*}
y_2'(x) - y'(x) &= F(y_2(x)) - F(y(x)) 
\geq F(y_2(x) - y(x)) 
\geq F(y_2(\bar x) - y(\bar x)) 
\geq F(\delta)>0,
\end{align*}
which implies that, for every $x>\bar x$,
\begin{align*}
y_2(x) - y(x) \geq \delta + (x-\bar x) F(\delta).
\end{align*}
But $y_2$ also satisfies~\eqref{lemma_growth}, and for large $x$ we have 
\begin{align*}
F^{-1}(ax^2)\leq \sqrt{a}C x.
\end{align*}
Whence,
\begin{align*}
0 = \lim_{x\to\infty} \frac{y_2(x) - y(x)}{F^{-1}(ax^2)}  \geq \lim_{x\to\infty} \frac{\delta + (x-\bar x) F(\delta)}{\sqrt{a}Cx} = \frac{F(\delta)}{\sqrt{a}C}>0,
\end{align*}
which leads to contradiction. A symmetric argument yields the same results for the case where there exists $\bar x$ and $\delta>0$ such that $y_2(\bar x) \leq y(\bar x)-\delta$. This establishes uniqueness. 

We now establish the existence of $y(x)$. To this end, fix $x\geq \sqrt{b_+/a}$ and define 
\begin{align*}
y_*(x)=\sup\{y(x;\bar x,h(\bar x)): \bar x > x\}. 
\end{align*} 
Let $x_0>0$ and $C>0$ be the constant in Lemma~\ref{prop2} (v). For every $x_1\geq\sqrt{b_+/a}$, we can choose a large 
$y_1 > F^{-1} (ax_1^2 +2\sqrt{a}C+x^2_0+|b|)$, 
and for $x\geq x_1$ define
\begin{align*}
\tilde y(x)=F^{-1}(F(y_1)+{a}(x^2-x_1^2)).
\end{align*}
Then by $F(y_1)-ax_1^2 +b> 0$, $\tilde y(x)>h(x)$ for every $x\geq x_1$. Moreover, from the fact that $F(y_1)-ax_1^2 +b > 2\sqrt{a}C +x^2_0 + |b|+b>x^2_0$, we can infer
\begin{align*}
0\leq {\tilde y}'(x) &= 2ax (F^{-1})'(F(y_1)+{a}(x^2-x_1^2)) 
\\& \leq 2\sqrt{a} \sqrt{F(y_1)+{a}(x^2-x_1^2)}  (F^{-1})'(F(y_1)+{a}(x^2-x_1^2))
\\&\leq 2\sqrt{a}C 
\\&< F(y_1) - ax_1^2 +b = f(x_1,{\tilde y}(x_1)).
\end{align*} 
In particular, the unique local solution $y(x;x_1,y_1)$ to \eqref{ode:general} with initial condition $(x_1,y_1)$ satisfies 
\begin{align*}
{\tilde y}'(x_1) < f(x_1,{\tilde y}(x_1)) = f(x_1, y_1) = y'(x_1;x_1,y_1).
\end{align*}
Thus for every $\bar x > x_1$, $y(\bar x; x_1,y_1)>\tilde y(\bar x) > h(\bar x) = y(\bar{x}; \bar{x}, h(\bar x))$. The local uniqueness of the initial value problems associated with~\eqref{ode:general} implies that $y(x; x_1, y_1)$ and $ y(\bar{x}; \bar{x}, h(\bar x))$ cannot cross, so the graph of $y(x;x_1,y_1)$ lies above $y(x;\bar x, h(\bar x))$ and, in particular, $y_1=y(x_1;x_1,y_1)> y(x_1;\bar x, h(\bar x))$. Taking the supremum over $\bar{x}$ yields that $y_*(x_1)\leq y_1<+\infty$. In summary $y_*$ is defined pointwise on $[\sqrt{b_+/a},\infty)$, and $y_*<\infty$ is guaranteed.

Next we study the continuity and differentiability of $y_*$. Notice that by~\eqref{est: x<barx}, we know that $y_*(x) \geq h (x)\geq 0$ and is increasing on $[\sqrt{b/a}, \infty)$. Therefore, for every $x\in[\sqrt{b_+/a}, +\infty)$, we have 
$
y_*(x+) = \lim_{\epsilon\to 0^+} y_*(x+\epsilon) 
$
exists; and for every $x\in(\sqrt{b_+/a},+\infty)$, 
$
y_*(x-) = \lim_{\epsilon\to 0^+} y_*(x-\epsilon) 
$
exists as well. In particular, $f(x, y_*(x))$ is locally integrable on $[\sqrt{b_+/a},\infty)$. 

For $x_2>x_1\geq \sqrt{b_+/a}$, we want to estimate $y_*(x_2) - y_*(x_1)$. By the monotonicity of $y(x;\bar{x},h(\bar{x}))$ in $\bar{x}$ established in~\eqref{increasing in bar x}, we know that
\begin{align*}
y_*(x) = \sup\{y(x;\bar x,h(\bar x)): \bar x > \sqrt{b_+/a}\}.
\end{align*} 
Together with $y_*\geq 0$ and since $F$ is increasing on $[0,\infty)$, it follows that
\begin{align}\label{continuity upper}
y_*(x_2) - y_*(x_1) 
& =  \sup\{y(x_2;\bar x,h(\bar x)) - y_*(x_1): \bar x > \sqrt{b_+/a}\}\notag\\\notag
&\leq \sup\{y(x_2;\bar x,h(\bar x)) - y(x_1;\bar x,h(\bar x)): \bar x > \sqrt{b_+/a}\}\\\notag
& = \sup\left\{\int_{x_1}^{x_2} y'(\xi ;\bar x,h(\bar x)) \; d\xi : \bar x > \sqrt{b_+/a}\right\}\\\notag
&\leq \int_{x_1}^{x_2} \sup\{y'(\xi;\bar x,h(\bar x)):\bar x>\sqrt{b/a}\}\; d\xi \\\notag
& = \int_{x_1}^{x_2}  -a\xi^2 +b  + \sup\{F(y(\xi;\bar x,h(\bar x))):\bar x > \sqrt{b_+/a}\} \; d\xi \\\notag
& = \int_{x_1}^{x_2}  -a\xi^2 +b +F(y_*(\xi)) \; d\xi\\
&=  \int_{x_1}^{x_2} f(\xi, y_*(\xi)) d\xi.
\end{align}
For every $\delta>0$, there exists $\bar{x}$ such that $y(x_1;\bar x, h(\bar x)) +\delta > y_*(x_2)$. By~\eqref{increasing in bar x}, without loss of generality, we can assume that $\bar{x}>x_2$, and therefore $y(\xi ; \bar x, h(\bar x))$ is increasing in $\xi$ on the interval $[x_1, x_2]$ by~\eqref{est: x<barx}. Thus for every $\delta>0$ and for every $\xi\in[x_1, x_2]$, the monotonicity of $F$ on $[0,\infty)$ yields
$
F(y(\xi ; \bar x, h(\bar x))) \geq F(y(x_1 ; \bar x, h(\bar x))) \geq F(y_*(x_1) -\delta)
$.
Therefore,
\begin{align*}
y_*(x_2) - y_*(x_1) 
&\geq  y_*(x_2) -  (y(x_1;\bar x, h(\bar x))+\delta)\\
&\geq  y(x_2;\bar x, h(\bar x)) - y(x_1;\bar x, h(\bar x)) -\delta\\
& = \int_{x_1}^{x_2} y'(\xi;\bar x, h(\bar x))\; d\xi -\delta\\
&= \int_{x_1}^{x_2} -a\xi^2 +b+F(y(\xi;\bar x, h(\bar x))) d\xi -\delta\\
&\geq  (x_2-x_1) F(y(x_1 ; \bar x, h(\bar x)))-\delta + \int_{x_1}^{x_2} (-a\xi^2 +b)\; d\xi
\\&\geq  (x_2-x_1) F(y_*(x_1) - \delta) -\delta + \int_{x_1}^{x_2} (-a\xi^2 +b)\; d\xi.
\end{align*}
As this holds for arbitrary small $\delta>0$, it follows from the continuity of $F$ that
\begin{align}\label{continuity lower}
y_*(x_2) - y_*(x_1) \geq (x_2-x_1) F(y_*(x_1))+ \int_{x_1}^{x_2} (-a\xi^2 +b) d\xi.
\end{align}
By~\eqref{continuity upper}  and~\eqref{continuity lower}, we can conclude the continuity of $y_*$ on $[\sqrt{b_+/a},+\infty)$. Consider first that for $x\in[\sqrt{b_+/a},+\infty)$, by~\eqref{continuity upper} and the continuity of $y_*$,
\begin{align*}
\limsup_{\epsilon\to 0^+} \frac{y_*(x+\epsilon) - y_*(x) }{\epsilon} \leq \limsup_{\epsilon\to 0^+}  \frac{1}{\epsilon}  \int_{x}^{x+\epsilon} f(\xi, y_*(\xi)) d\xi = f(x,y_*(x)),
\end{align*}
and by~\eqref{continuity lower}.
\begin{align*}
\liminf_{\epsilon\to 0^+} \frac{y_*(x+\epsilon) - y_*(x) }{\epsilon} &\geq \liminf_{\epsilon\to 0^+} \frac{1}{\epsilon} \left( \epsilon F(y_*(x))+ \int_{x}^{x+\epsilon} (-a\xi^2 +b) d\xi \right)
\\&=  -ax^2 + b+F(y_*(x)) \\&=  f(x,y_*(x)). 
\end{align*}
In addition, for $x\in(\sqrt{b_+/a},+\infty)$, by~\eqref{continuity upper},~\eqref{continuity lower} and the continuity of $y_*$,
\begin{align*}
f(x,y_*(x))
&= \limsup_{\epsilon\to 0^+}  \frac{1}{\epsilon}  \int_{x-\epsilon}^x f(\xi, y_*(\xi)) d\xi \\
&\geq \limsup_{\epsilon\to 0^+} \frac{y_*(x) - y_*(x-\epsilon) }{\epsilon}\\
&\geq \liminf_{\epsilon\to 0^+} \frac{y_*(x) - y_*(x-\epsilon) }{\epsilon}\\
&\geq \liminf_{\epsilon\to 0^+} \frac{1}{\epsilon} \left( \epsilon F(y_*(x-\epsilon))+ \int_{x-\epsilon}^x (-a\xi^2 +b) d\xi \right)\\
&= -ax^2 + b+F(y_*(x)) \\&=  f(x,y_*(x)).
\end{align*}
Hence we can conclude that $y_*'$ exists and 
\begin{align*}
y_* '(x) = f(x,y_*(x)), \mbox{ for all } x \in [\sqrt{b_+/a},+\infty).
\end{align*}
In summary, for $b\in\R$, the function $y_*$ therefore is a solution of~\eqref{ode:general} that satisfies properties~\ref{positive},~\ref{domain} and hence satisfies also the growth condition~\eqref{lemma_growth}. 

We only need to show~\eqref{eq:b} holds. When $b>0$, where $b_+=b$, $[0,\sqrt{b/a})$ is contained in the maximal interval of existence of $y_*$ is a side product in the proof of~\eqref{eq:b}

We rewrite $h(x;b) = F^{-1}(ax^2-b)$, and notice that $h'$ is strictly decreasing by Lemma~\ref{prop2} (iv). Let $y(x;\bar x, h(\bar x ;b))$ denote the solution to~\eqref{ode:general} with constant $b$ and initial condition $(\bar x, h(\bar x;b))$ with $\bar x >\sqrt{b_+/a}$. Then by Proposition 2.76 in~\cite{chicone1999}, we know that
\begin{align}\label{ivp continuity wrt data}
\frac{\partial}{\partial b} y(x;\bar x, h(\bar x ;b)) = \Phi(x; \bar x, b), 
\end{align}
where $\Phi(x; \bar x, b)$ is the solution to the following initial value problem on its maximal interval of existence, 
\begin{equation}\label{ivp wrt data}
\left\{
\begin{aligned}
\Phi'(x; \bar x, b) &= F'\left(y(x;\bar x, h(\bar x ;b)) \right) \Phi(x; \bar x, b), \\
\Phi(\bar x; \bar x, b) &= \frac{\partial}{\partial b} h(\bar x; b) = -\left(F^{-1}\right)' (a\bar x^2 - b). 
\end{aligned}
\right.
\end{equation}
Indeed, for $\bar x > x\geq \sqrt{b_+/a}$, $y(x;\bar x, h(\bar x ;b))\geq h(x ;b) \geq 0$, variation of constants yields:
\begin{align}\label{b not positive}
 \Phi(x; \bar x, b) =  -\left(F^{-1}\right)' (a\bar x^2 - b) e^{-\int_x^{\bar x} F'\left(y(\xi;\bar x, h(\bar x ;b)) \right) \; d\xi} \leq 0.
\end{align}
By~\eqref{increasing in bar x} and Lemma~\ref{prop2} (iv), we conclude that $ \Phi(x; \bar x, b)$ is increasing in $\bar x$ on $(x, +\infty)$. 
We rewrite $y(x;b)$ as the unique solution to~\eqref{ode:general} with constant $b$ where $[\sqrt{b_+/a},+\infty)$ is contained in its maximal interval of existence $D_b$, such that $y(x;b)\geq 0$ for every $x\geq \sqrt{b_+/a}$. For $x\geq \sqrt{b_+/a}$, we claim $\frac{\partial}{\partial b} y(x;b)$ exists and 
\begin{align}\label{continuity wrt b}
\frac{\partial}{\partial b} y(x;b) = \sup\left\{\Phi(x; \bar x, b): \bar x > x\right\} \leq 0. 
\end{align}
To this end, for $b_2>b_1$, fix $x\geq \sqrt{{b_2}_+/a}$, first observe that
\begin{align}\label{b upper}
y(x;b_2) - y(x;b_1) 
&= \sup\left\{ y(x;\bar x, h(\bar x ;b_2)) - y(x;b_1) : \bar x > x \right\}\notag\\
&\leq \sup\left\{ y(x;\bar x, h(\bar x ;b_2)) - y(x;\bar x, h(\bar x ;b_1)) : \bar x > x \right\}\notag\\
&= \sup\left\{ \int_{b_1}^{b_2} \frac{\partial}{\partial b} y(x;\bar x, h(\bar x ;b)) \; db : \bar x > x \right\}\notag\\
&\leq  \int_{b_1}^{b_2} \sup\left\{\frac{\partial}{\partial b} y(x;\bar x, h(\bar x ;b))  : \bar x > x \right\}db\notag\\
&=  \int_{b_1}^{b_2} \sup\left\{ \Phi(x; \bar x, b)  : \bar x > x \right\}db.
\end{align}
On the other hand, for every $\delta>0$, there exists $\bar x_\delta $ such that 
\begin{align*}
y(x;\bar x_\delta, h(\bar x_\delta ;b_1)) +\delta >y(x;b_1). 
\end{align*}
By~\eqref{increasing in bar x}, we can assume without loss of generality that $\bar x_\delta > x$ and, for every $\bar x >\bar x_\delta $,
\begin{align*}
y(x;\bar x, h(\bar x ;b_1)) +\delta >y(x;b_1).
\end{align*}
Therefore, for every $\bar x >\bar x_\delta$,
\begin{align*}
y(x;b_2) - y(x;b_1) &\geq y(x;b_2) - y(x;\bar x, h(\bar x ;b_1))-\delta\\
& \geq y(x;\bar x, h(\bar x ;b_2)) -y(x;\bar x, h(\bar x ;b_1)) -\delta\\
& = \int_{b_1}^{b_2} \frac{\partial}{\partial b} y(x;\bar x, h(\bar x ;b))db - \delta\\
& =  \int_{b_1}^{b_2} \Phi(x; \bar x, b)db -\delta.
\end{align*}
As $\Phi(x; \bar x, b) $ is increasing in $\bar x$ on $(x, +\infty)$, it follows that
\begin{align*}
y(x;b_2) - y(x;b_1) &\geq \sup\left\{ \int_{b_1}^{b_2}  \Phi(x; \bar x, b)  \; db : \bar x > \bar x_\delta \right\}-\delta \\
& =  \int_{b_1}^{b_2}  \sup\left\{ \Phi(x; \bar x, b)  : \bar x > \bar x_\delta \right\}db -\delta\\
&=\int_{b_1}^{b_2}  \sup\left\{ \Phi(x; \bar x, b)  : \bar x > x \right\}db -\delta.
\end{align*}
Since the above inequality holds for every $\delta>0$, we conclude that
\begin{align}\label{b lower}
y(x;b_2) - y(x;b_1) \geq \int_{b_1}^{b_2}  \sup\left\{ \Phi(x; \bar x, b)  : \bar x > x \right\}db. 
\end{align}
By~\eqref{b upper} and~\eqref{b lower}, $y(x;b)$ is continuous and differentiable with respect to $b$. We can infer our claim~\eqref{continuity wrt b} from~\eqref{b not positive}; in particular, \eqref{eq:b} holds. By Theorem 2.77 in~\cite{chicone1999}, $[0,\infty)$ is contained in the maximal interval of existence of $y(\cdot;b)$ for every $b\in\R$. 

In summary, for every $a>0$, $b\in\R$, we have shown that there exists a unique solution $y_*(\cdot;b)$ of~\eqref{ode:general} with $y_*(x;b)\geq 0$ on $[\sqrt{b_+/a},\infty)$. Moreover, we have derived a number of properties of $y_*$ that will be utilized below.
\end{proof}

In Lemma~\ref{general ODE}, we have shown that for every $b\geq0$, there exists non-negative solution $y_r$ to~\eqref{ode:general} on $[0,\infty)$. A symmetric argument yields that for every $b\geq0$, there exists a non-positive solution $y_l$ to~\eqref{ode:general} on $(-\infty, 0]$. Then by the monotonicity of $y(0;b)$ with respect to $b$, there exists a unique choice of the constant $b$ in \eqref{ode:general} that allows to smoothly paste together the solution $y_l$ and $y_r$ at 0, thereby obtaining a solution of~\eqref{ode:general} on the whole real line. 

\begin{lemma}\label{symmetric}
Let $F$ be a function satisfying Assumption~\ref{property}. Then there exists a unique constant $b_F>0$ such that when $b=b_F$, the ODE
\begin{align}
g'(x) = ax^2 - b - F(g(x)), \tag{\ref{ODE:g}}
\end{align}
has a solution $g$ on $\R$ such that $xg(x)\leq 0$. 
Moreover, $g$ is unique, and it is odd and decreasing and satisfies the following growth conditions: 
\begin{align}\label{moments}
\lim_{x\to -\infty} \frac{g(x)}{F^{-1}(ax^2)} = 1, \quad \quad
\lim_{x\to +\infty} \frac{g(x)}{F^{-1}(ax^2)} = -1. 
\end{align}
Further, there exists $K>0$, such that for $x\in\R$,
\begin{align*}
|g(x)| \leq K(1+|x|), \qquad |g'(x)| \leq K.
\end{align*}
\end{lemma}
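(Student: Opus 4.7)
The plan is to exploit the evenness of $F$ to reduce the global problem to the half-line analysis carried out in Lemma~\ref{general ODE}, and to determine $b_F$ as the unique parameter value for which the corresponding half-line solutions paste continuously across the origin. Throughout, let $y_*(\cdot;b)$ denote the canonical non-negative solution of~\eqref{ode:general} from Lemma~\ref{general ODE}. I would first observe that the function defined by $g(x):=-y_*(x;b)$ on $[0,\infty)$ and $g(x):=y_*(-x;b)$ on $(-\infty,0]$ satisfies~\eqref{ODE:g} on each open half-line (by direct substitution, using that $F$ is even) and has $xg(x)\leq 0$ by construction. For $g$ to be a classical solution across $x=0$, continuity at the origin is required, i.e.\ $y_*(0;b)=0$; the ODE then automatically gives $g'(0^\pm)=-b-F(0)=-b$, which delivers $C^1$ matching.

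Second, I would establish that the map $\phi(b):=y_*(0;b)$ has a unique zero $b_F>0$. By~\eqref{eq:b}, $\phi$ is continuous and strictly decreasing on $\R$, so it suffices to show that $\phi$ changes sign. For $\phi(0)>0$: if $\phi(0)=0$, iterated differentiation of~\eqref{ode:general} at $0$ yields $y_*'(0)=y_*''(0)=0$ and $y_*'''(0)=-2a<0$, so $y_*(x)\sim -ax^3/3<0$ for small $x>0$, contradicting $y_*\geq 0$. For $\phi(b)<0$ at large $b$, I argue by contradiction: if $\phi(b)\geq 0$, then $y_*(\cdot;b)\geq 0$ throughout $[0,\sqrt{b/a}]$, since any first zero at an interior point $x_1$ would force $y_*'(x_1)=-ax_1^2+b>0$, incompatible with $y_*$ touching zero from above, and a zero at the endpoint $x_1=\sqrt{b/a}$ would yield the lower-bound contradiction $y_*(0)\leq -\tfrac{2}{3}b^{3/2}/\sqrt{a}<0$. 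Integrating $y_*'\geq -ax^2+b$ over $[0,\sqrt{b/a}]$ then gives $y_*(\sqrt{b/a};b)\geq \tfrac{2}{3}b^{3/2}/\sqrt{a}$, while the explicit supersolution $\tilde y$ built in the proof of Lemma~\ref{general ODE}, combined with the quadratic-growth bound $F(x)\geq \hat a x^2$ from Remark~\ref{prop1}(ii), yields $y_*(\sqrt{b/a};b)\leq F^{-1}(2b+C)\leq C'\sqrt{b}$, which contradicts the lower bound for $b$ sufficiently large. The intermediate value theorem then delivers the desired unique $b_F>0$.

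Third, the remaining properties would follow routinely. Oddness of $g$ is built into the construction. Monotonicity on $[0,\infty)$ follows from $g'(x)=ax^2-b_F-F(y_*(x;b_F))\leq 0$, obtained by combining the lower bound $y_*(x;b_F)\geq F^{-1}((ax^2-b_F)_+)$ from Lemma~\ref{general ODE} with the evenness of $F$; oddness then propagates monotonicity to all of $\R$. The asymptotics~\eqref{moments} are the direct reflection of~\eqref{lemma_growth}, and the linear bound $|g(x)|\leq K(1+|x|)$ follows from $F^{-1}(ax^2)\leq C|x|$ (Lemma~\ref{prop2}(v)) together with continuity near $0$. For $|g'|\leq K$, the matching lower and upper bounds on $y_*(x;b_F)$ from Lemma~\ref{general ODE} sandwich $F(y_*(x;b_F))$ between $ax^2-b_F$ and $ax^2+C'$, so $g'(x)\in[-b_F-C',0]$. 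Uniqueness of $g$: any competitor $\tilde g$ with $x\tilde g(x)\leq 0$ gives $-\tilde g|_{[0,\infty)}$ non-negative and solving~\eqref{ode:general} for some $b$; Lemma~\ref{general ODE} forces $\tilde g=-y_*(\cdot;b)$ there, and continuity at $0$ forces $b=b_F$, so $\tilde g=g$.

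The main difficulty I anticipate is the two-sided comparison used to push $\phi(b)<0$ at large $b$: carefully ruling out all candidate zeros of $y_*$ on $[0,\sqrt{b/a}]$, producing the $b^{3/2}$ lower bound at the right endpoint by elementary integration, and matching it against the $\sqrt{b}$ upper bound furnished by the supersolution construction, with the mismatch in powers driven precisely by the quadratic-growth hypothesis on $F$ built into Assumption~\ref{property}.
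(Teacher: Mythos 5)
Your proof follows the same high-level route as the paper: reflect the half-line solution $y_*(\cdot;b)$ of Lemma~\ref{general ODE} across the origin, use the strict monotonicity of $\phi(b)=y_*(0;b)$ furnished by~\eqref{eq:b} together with the intermediate value theorem to locate the unique pasting parameter $b_F$, and then transfer asymptotics, monotonicity, and bounds. Two sub-steps are carried out differently. For $\phi(b)<0$ at large $b$, you compare a $b^{3/2}$ lower bound at $x=\sqrt{b/a}$ (from integrating $y_*'\geq -ax^2+b$ on $[0,\sqrt{b/a}]$) against a $\sqrt{b}$ upper bound coming from the supersolution construction plus the quadratic growth of $F$. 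This works, but is heavier than the paper's argument, which simply integrates the ODE over $[0,1]$ and derives the contradiction from the monotonicity $y_r(1;b)\leq y_r(1;0)<\infty$. Conversely, for the boundedness of $g'$ you sandwich $F(y_*(x;b_F))$ between $ax^2-b_F$ (subsolution $h$) and $ax^2+C'$ (supersolution), giving $g'(x)\in[-b_F-C',0]$ for $x\geq\sqrt{b_F/a}$ directly and then invoking compactness on $[0,\sqrt{b_F/a}]$; this is markedly shorter than the paper's argument, which obtains the bound from a third-derivative analysis at local minima of $g'$ following~\cite{bayraktar.al.18}. Finally, your Taylor argument for $\phi(0)>0$ addresses a point the paper glosses over: the paper only establishes $y_r(0;0)\geq 0$ and hence $b_F\geq 0$, which does not by itself give the strict inequality $b_F>0$ in the lemma statement. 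One small caveat there: the ``iterated differentiation'' giving $y_*'''(0)=-2a$ presupposes that $y_*$ is three times differentiable at $0$, which would require continuity of $F''$ that Assumption~\ref{property} does not explicitly impose. The conclusion $y_*(x)<0$ just to the right of $0$ still holds, either via a short bootstrap ($y_*=o(x)\Rightarrow F(y_*)=o(x)\Rightarrow y_*=O(x^2)\Rightarrow F(y_*)=o(x^2)\Rightarrow y_*=-\tfrac{a}{3}x^3+o(x^3)$), or more simply by observing from the ODE for $y_*''$ that $y_*''(x)=-2ax+F'(y_*(x))y_*'(x)<0$ on a small right-neighbourhood of $0$, which together with $y_*(0)=y_*'(0)=0$ forces $y_*<0$ there.
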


\begin{proof}
From Lemma~\ref{general ODE}, we know that for every parameter $b\geq0$ there exists a unique solution $y_r(x;b)$ with $y_r(x;b)\geq 0$ for every $x\geq \sqrt{b/a}$. Moreover, $[0,\infty)$ contains in the maximal existence of interval of $y_r(\cdot;b)$, i.e. $0\in D_b$, and $y_r(x;b)$ satisfies
$$
\lim_{x\to+\infty} \frac{y_r(x;b)}{F^{-1}(ax^2)} = 1.
$$
 Define $y_l(x;b) = - y_r(-x;b)$ on $(-\infty,0]$. Then
$$
\lim_{x\to -\infty} \frac{y_l(x;b)}{F^{-1}(ax^2)} = -\lim_{x\to- \infty} \frac{y_r(-x;b)}{F^{-1}(ax^2)}= -\lim_{x\to \infty} \frac{y_r(x;b)}{F^{-1}(ax^2)} = -1.
$$
Moreover, since $F$ is even, for $x\leq 0$,
\begin{align}\label{left}
{y_l}'(x;b) 
&= {y_r}'(-x;b) =- a(-x)^2 + b + F(y_r(-x;b)) \notag\\
&= -ax^2 + b + F(-y_r(-x;b)) \nonumber
\\&= -ax^2 +  b + F(y_l(x;b)).
\end{align}
That is, $y_l(x;b)$ also satisfies~\eqref{ode:general} on $(-\infty,0]$. 

For $b=0$, by Item~\ref{positive} in the proof of Lemma~\ref{general ODE}, we have $y_r(x;0)> F^{-1}(ax^2)$. Hence, by the continuity of $y_r$ and $F^{-1}$,
\begin{align*}
y_r(0;0) \geq F^{-1}(0) =0 \geq -y_r(0;0) = y_l(0;0).
\end{align*}
By~\eqref{eq:b} in Lemma~\ref{general ODE}, for $x\geq0$, $y_r(x;b)$ is strictly decreasing in $b$ and thus $y_r(x;b)\leq y_r(x;0)<\infty$ for all $b\geq 0$. In addition, we claim that as $b\to+\infty$, $y_r(0;b)$ goes to $ -\infty$. Suppose not. Then there exists 
\begin{align*}
\delta_1=\lim_{b\to+\infty} y_r(0;b) >-\infty. 
\end{align*}
As a result, 
\begin{align*}
y_r(1;b) &= y_r(0;b) + \int_0^1 -ax^2 + b + F(y(x;b))~dx \geq   y_r(0;b) + \int_0^1 (-a + b) dx \geq \delta_1 + b - a,
\end{align*}
and, for $b\to+\infty$,
\begin{align*}
y_r(1;0) \geq \lim_{b\to+\infty} y_r(1;b) \geq  \lim_{b\to+\infty} \delta_1 + b - a = +\infty,
\end{align*}
which leads to contradiction. Hence as $b\to+\infty$, $y_r(0;b)$ goes to $ -\infty$, and $y_l(0;b) = -y_r(0;b)$ goes to $+\infty$.
Thus, for some constant $b_F\geq 0$ we have 0 is contained in $D_{b_F}$ and 
\begin{align}\label{zero}
y_r(0;b_F) =0= y_l(0;b_F).
\end{align}
As $y_r(x;b)$ is decreasing in $b$, the constant $b_F$ is unique. 

Now we use $y_r(\cdot;b_F)$ and $y_l(\cdot;b_F)$ to construct the solution for~\eqref{ODE:g}:
\begin{align*}
g(x) = -y_r(x;b_F) \one_{\{x\geq0\}}-y_l(x;b_F) \one_{\{x<0\}}.
\end{align*}
It's easy to see that $g$ is defined on $\mathbb{R}$ and satisfies the growth conditions~\eqref{moments}. We now show that $g$ is indeed a solution of~\eqref{ODE:g} with $b=b_F$. 
Using~\eqref{zero}, we can see that $g$ is continuous and equal to zero at $x=0$. Therefore,
\begin{align*}
g(x) = -y_r(x;b_F) \one_{\{x\geq0\}}+y_r(-x;b_F) \one_{\{x<0\}} = -y_r(x;b_F) \one_{\{x>0\}}+y_r(-x;b_F) \one_{\{x\leq 0\}}
= -g(-x),
\end{align*}
which implies that $g$ is odd. Furthermore, as $y_r$ is increasing on $[\sqrt{b_F/a},\infty)$, and 
\begin{align*}
y_r'(x;b_F) = -ax^2+b_F + F(y_r(x;b_F)) \geq -ax^2+b_F \geq 0, \quad \mbox{for $x\in[0,\sqrt{b_F/a}]$,}
\end{align*}
$y_r$ is increasing on $[0,\infty)$, and we infer that $g$ is decreasing.  Since $F$ is even, we have 
\begin{align*}
F(g(x)) = F(-y_r(x;b_F)) = F( -y_l(-x;b_F)) = F(g(-x)), \quad \mbox{for $x\geq 0$.}
\end{align*}
Therefore we can conclude that
\begin{align*}
g'(x) = -{y_r}'(x) = ax^2 - b_F - F(y_r(x;b_F)) = ax^2 - b_F - F(g(x)), \quad \mbox{for $x> 0$.}
\end{align*}
Likewise,
\begin{align*}
g'(x) = -{y_l}'(x) = ax^2 - b_F - F(y_l(x;b_F)) = ax^2 - b_F - F(g(x)), \quad \mbox{for $x<0$.}
\end{align*}
Moreover, the continuity of $g'$ is guaranteed at $x=0$ since 
\begin{align*}
\lim_{x\to 0^+} {g}'(x) = -{y_r}'(0;b_F) = -b_F  = -{y_l}'(0;b_F)= \lim_{x\to 0^-} g'(x).
\end{align*}
In summary, $g$ therefore is indeed a solution of~\eqref{ODE:g} with $b=b_F$. 

Next, we show that $g$ is unique. 
Let $g$, defined and continuously differentiable on $\R$, satisfy~\eqref{ODE:g} for some $b\geq 0$ and also satisfy $xg(x)\leq 0$ for $x\in\R$. Then $-g$ is the unique function $y(\cdot;b)$ in Lemma~\ref{general ODE}. Because $F$ is even and $g$ satisfies~\eqref{ODE:g}, we know $g(-x)$ also satisfies the conditions of Lemma~\ref{general ODE}. Hence $g(-x) = y(x;b)$ for $x$ in the maximal interval of existence of $y(\cdot;b)$. Therefore, 
\begin{align*}
-g(0) = y(0;b) = g(0),
\end{align*}
which implies $y(0;b) = 0$. This forces $b$ to be equal to $b_F$, and $g$ to be the function constructed above. 

The growth condition~\eqref{moments} and Lemma~\ref{prop2} (v) imply that there exist $x_0>0$ and $\hat{c}>0$ such that, for every $|x|>x_0$, 
\begin{align*}
|g(x)| 
\leq 2|F^{-1}(ax^2)|\leq 2\hat{c}|x|.
\end{align*}
Therefore, for all $x$, and since $-g$ is increasing,  
\begin{align}\label{sublinear of g}
|g(x)| \leq  |g(x_0)|+2\hat{c}|x| .
\end{align}

Now we establish the boundedness of $g'$, using an idea from~\cite{bayraktar.al.18}. Since $g$ is odd, we only need to show that for $x>0$, $g'$ is bounded from below. From~\eqref{sublinear of g}, we can infer that as $x\to\infty$, $g'$ cannot go to $-\infty$. Therefore, there exists $M>0$ and an increasing sequence $\{x_n\}_{n=1}^\infty$ such that $x_n\to\infty$ and 
$-M\leq g'(x_n)\leq 0$. Now suppose $g'$ is not bounded from below, which means that for every integer $n>M$, there exists $z_n > x_n$ such that $g'(z_n)\leq-n$. For each $n>M$, let $m(n)>n$ denote the first integer such that $x_n<z_n<x_{m(n)}$. Then from
\begin{align*}
g'(z_n)\leq -n<-M \leq \min\{g'(x_n),  g'(x_{m(n)})\},
\end{align*}
we can infer that there exists a local minimum of $g'$ on $[x_n, x_{m(n)}]$ for every integer $n>M$, denoted by $\xi_n$. Therefore, for every integer $n>M$, $g''(\xi_n) = 0$, and
\begin{align*}
0 \leq g'''(\xi_n) &= 2a - F''(g(\xi_n))\big(g'(\xi_n)\big)^2 - F'(g(\xi_n))g''(\xi_n) = 2a - F''(g(\xi_n))\big(g'(\xi_n)\big)^2.
\end{align*}
Together with Assumption~\ref{property} (iv), we know that $F''(g(\xi_n))>0$ for $n$ large enough, and hence 
\begin{align*}
n^2 \leq \big(g'(\xi_n)\big)^2  \leq \frac{2a}{F''(g(\xi_n))} \leq \frac{2a}{\tilde{C}},
\end{align*}
which leads to a contradiction. Without loss of generality, we choose $M>0$ large enough so that $|g'(x)|<M$ for every $|x|>x_0$. Now, choose $K>M+|g(x_0)|+2\hat{c}$. Then we have
\begin{align*}
|g(x)| \leq K(1+|x|), \qquad |g'(x)| \leq K
\end{align*}
as asserted. This completes the proof.
\end{proof}

Next, we show that with $b=b_F$, the solution to the first-order ODE~\eqref{ODE:g} on $\R$ with $xg(x)\leq0$ is also the unique solution on $\R$ to the second-order ODE~\eqref{2nd ODE:g} with $xg(x)\leq0$. 

\begin{lemma}\label{2nd ODE}
Let $F$ be a function satisfying Assumption~\eqref{property}. Then the unique solution $g$ on $\R$ to~\eqref{ODE:g} such that $xg(x)\leq 0$ is also the unique solution on $\mathbb{R}$ of the second-order ODE
\begin{equation}\tag{\ref{2nd ODE:g}}
g''(x) = 2ax - F'(g(x))g'(x)
\end{equation}
such that $xg(x)\leq 0$.
\end{lemma}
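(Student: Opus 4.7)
The plan is to prove the lemma in two short steps. Existence is immediate: the function $g$ from Lemma~\ref{symmetric} is continuously differentiable and satisfies the first-order ODE $g'(x) = ax^2 - b_F - F(g(x))$. Since $F$ is $C^1$ by Assumption~\ref{property}(ii) and $g'$ is continuous, the right-hand side is itself continuously differentiable in $x$, so $g \in C^2(\mathbb{R})$. Differentiating the first-order ODE gives exactly $g''(x) = 2ax - F'(g(x))g'(x)$, and $xg(x)\leq 0$ was already part of Lemma~\ref{symmetric}. Hence $g$ solves~\eqref{2nd ODE:g} with the sign condition.

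For uniqueness, the key idea is to reverse the above derivation: show that \emph{any} solution of the second-order ODE automatically satisfies a first-order ODE of the form~\eqref{ODE:g} for some parameter, and then invoke the uniqueness statement of Lemma~\ref{symmetric}. Concretely, let $\tilde{g} \in C^2(\mathbb{R})$ be a solution of~\eqref{2nd ODE:g} with $x\tilde{g}(x) \le 0$. Define
\[
h(x) := \tilde{g}'(x) - ax^2 + F(\tilde{g}(x)).
\]
Differentiating and substituting~\eqref{2nd ODE:g} yields
\[
h'(x) = \tilde{g}''(x) - 2ax + F'(\tilde{g}(x))\tilde{g}'(x) = 0,
\]
so $h$ is a constant on $\mathbb{R}$. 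Setting $b := -h(0) = -\tilde{g}'(0) - F(\tilde{g}(0))$, we obtain
\[
\tilde{g}'(x) = ax^2 - b - F(\tilde{g}(x)),
\]
which is exactly the first-order equation~\eqref{ODE:g} with the constant $b$. Since $x\tilde{g}(x)\leq 0$ on $\mathbb{R}$, Lemma~\ref{symmetric} forces $b = b_F$ and $\tilde{g} = g$. This gives uniqueness for~\eqref{2nd ODE:g} subject to $xg(x)\leq 0$.

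The only place where care is needed is to make sure the constant of integration produced above lies in the range of parameters covered by the uniqueness assertion of Lemma~\ref{symmetric}. This is not really an obstacle: Lemma~\ref{symmetric} is stated precisely in terms of \emph{existence on all of $\mathbb{R}$ with the sign condition $xg(x)\leq 0$}, and any candidate $\tilde{g}$ automatically provides such a solution of the first-order ODE once $h$ has been shown to be constant. No further regularity or integrability hypothesis is needed, since the second-order equation together with $F \in C^1$ already ensures $\tilde{g} \in C^2(\mathbb{R})$, which is more than what Lemma~\ref{symmetric} requires.
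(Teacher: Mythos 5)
Your proof is essentially the paper's argument: both establish existence by differentiating the first-order ODE for $g$, and both establish uniqueness by showing that any $C^2$ solution $\tilde{g}$ of~\eqref{2nd ODE:g} must also solve a first-order equation of the form~\eqref{ODE:g} with some constant $\tilde{b}$, then invoke the uniqueness from Lemma~\ref{symmetric}. The paper integrates~\eqref{2nd ODE:g}; you differentiate the auxiliary function $h$ and show $h'\equiv 0$ — the same computation, packaged differently.

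One small point you gloss over that the paper handles explicitly: the uniqueness argument in the \emph{proof} of Lemma~\ref{symmetric} is carried out only for constants $b\ge 0$, so one should verify that the constant $\tilde{b}=-\tilde{g}'(0)-F(\tilde{g}(0))$ you produce is indeed nonnegative before citing that lemma. This is exactly why the paper inserts the remark that the sign condition $x\tilde{g}(x)\le 0$ forces $\tilde{g}$ to be ``non-increasing at zero'': continuity and the sign condition give $\tilde{g}(0)=0$, hence $F(\tilde{g}(0))=0$, and $\tilde{g}'(0)\le 0$, so $\tilde{b}=-\tilde{g}'(0)\ge 0$. You acknowledge that ``care is needed'' about whether the produced constant lies in the range covered by Lemma~\ref{symmetric}, but you then dismiss the concern by appealing to the lemma's \emph{statement} rather than checking the sign; adding the one-line observation above would close this cleanly.
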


\begin{proof}
In view of the first-order ODE~\eqref{ODE:g} satisfied by $g$, its derivative is also differentiable. Differentiating the ODE for $g$ in turn shows that $g$ also satisfies the second-order ODE~\eqref{2nd ODE:g}.

Now suppose $\tilde{g}$ is a solution of the second-order ODE~\eqref{2nd ODE:g} satisfying $x\tilde{g}(x)\leq 0$, hence we can infer that $\tilde{g}$ is non-increasing at zero. 
As 
\begin{align*}
\left(F(\tilde{g}(x))\right)' = F'(\tilde{g}(x)) \tilde{g}'(x), 
\end{align*}
integrating both sides of~\eqref{2nd ODE:g} gives 
\begin{align*}
\tilde{g}'(x) = \tilde{g}'(0) +  \int_0^x \Big(2a\xi - F'(\tilde{g}(\xi))\tilde{g}'(\xi)\Big) d\xi = ax^2 - \tilde{b} - F(\tilde{g}(x)),
\end{align*}
for some constant $\tilde{b} = F(\tilde{g}(0))-\tilde{g}'(0)$. By Lemma~\ref{symmetric} we know $b_F\geq 0$ is the unique constant such that~\eqref{ODE:g} has a solution on $\R$ with $xg(x)\leq 0$. Thus, $\tilde{b}=b_F$ and, by the uniqueness of $g$, we have $\tilde{g} = g$. This completes the proof. 
\end{proof}

We introduce one more lemma before the proof of Lemma~\ref{ODE}. 
\begin{lemma}\label{equ assump}
Suppose the general cost function $G$ satisfies Assumption~\ref{cond:cost}. Then $G^*$, the Legendre transform of $G$, satisfies Assumption~\ref{property}, and so does $cG^*(\frac{x}{c})$, where $c>0$ is a constant. 
\end{lemma}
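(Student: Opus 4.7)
The plan is to verify the four items of Assumption~\ref{property} for $F := G^*$ directly from the corresponding items of Assumption~\ref{cond:cost}, and then observe that the scaling $cG^*(\cdot/c)$ preserves each property up to a change of constants. The engine throughout is the standard Legendre-duality identity
\begin{equation*}
(G^*)'(y) = (G')^{-1}(y), \qquad y \in \R,
\end{equation*}
which is available because $G$ is convex, differentiable, symmetric, and $G'$ is strictly increasing with $G'(0) = 0$. From it almost everything else follows mechanically.

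For (i) and (ii): convexity of $G^*$ is automatic as a supremum of affine functions, evenness transfers from $G$ via the change of variable $x \mapsto -x$ inside the defining supremum, and $G^*(0) = -\inf_x G(x) = 0$. Strict monotonicity of $G^*$ on $[0,\infty)$ and of $(G^*)'$ on $(0,\infty)$, together with $(G^*)'(0) = 0$, all follow from strict monotonicity of $G'$ via the duality identity; differentiability of $(G^*)'$ on $(0,\infty)$ is the inverse function theorem. For (iii), integrate the polynomial bound on $(G')^{-1}$ from Assumption~\ref{cond:cost}(iii): for $y \geq 0$,
\begin{equation*}
G^*(y) = \int_0^y (G')^{-1}(t)\,dt \leq \int_0^y C(1+t^{k-1})\,dt \leq K(1+y^k),
\end{equation*}
and evenness of $G^*$ extends this to all $y \in \R$ with $p := k \geq 2$. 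For (iv), differentiate the duality identity to obtain $(G^*)''(y) = 1/G''\bigl((G^*)'(y)\bigr)$; since $(G^*)'$ is continuous, strictly increasing on $[0,\infty)$ and unbounded, there is a threshold beyond which $|(G^*)'(y)| > x_0$, and on that region Assumption~\ref{cond:cost}(iii) yields $G''((G^*)'(y)) \leq C$ and hence $(G^*)''(y) \geq 1/C =: \tilde C > 0$.

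For the rescaled function $F_c(x) := cG^*(x/c)$, everything transfers by direct computation: $F_c'(x) = (G^*)'(x/c)$ and $F_c''(x) = c^{-1}(G^*)''(x/c)$, so strict monotonicity and positivity properties carry over verbatim, the growth estimate becomes $F_c(x) \leq cK(1+|x/c|^p) \leq K'(1+|x|^p)$ after absorbing $c$ into a new constant, and evenness, convexity and $F_c(0)=0$ are immediate. The only real technical point is an implicit regularity one: items (ii) and (iv) require $G''$ to be pointwise positive on $(0,\infty)$, which is slightly stronger than what is literally stated in Assumption~\ref{cond:cost}(ii) but is consistent with the way the inverse of $G'$ is used throughout Appendix~\ref{app:ode}. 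Beyond that, the lemma is essentially a packaging exercise in the calculus of Legendre transforms, with no serious obstacle.
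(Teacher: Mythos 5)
Your proof is correct and takes essentially the same route as the paper: both hinge on the duality identity $(G^*)' = (G')^{-1}$ and check Assumption~\ref{property}(i)--(iv) directly. The only cosmetic difference is in item (iii), where you integrate the polynomial bound on $(G')^{-1}$, whereas the paper estimates $G^*(x) \le |x\,(G')^{-1}(x)|$ from the explicit formula $G^*(x) = x(G')^{-1}(x) - G((G')^{-1}(x))$; you also write out the rescaling $cG^*(\cdot/c)$, which the paper's proof leaves implicit, and your remark that $(G^*)''(y) = 1/G''\bigl((G')^{-1}(y)\bigr)$ quietly uses pointwise positivity of $G''$ applies equally to the paper's own argument.
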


\begin{proof}
Observe that the Legendre transformation of the cost function $G(x)$ is
\begin{align*}
G^*(x) =  x(G')^{-1}(x) - G((G')^{-1}(x)).
\end{align*}
Since the instantaneous cost $G$ is even, $G'$ and in turn $(G')^{-1}$ are odd, so that the function $G^*$ is even. Moreover, $G(0)=G'(0)=0$ imply $G^*(0)=0$. As both $G$ and $(G')^{-1}$ are differentiable, 
\begin{align*}
(G^*)'(x)=(G')^{-1}(x)>0.
\end{align*} 
In particular, $(G^*)^{-1}$ exists on $[0,\infty)$ and is differentiable. Moreover, by the convexity and twice differentiability of $G$,
\begin{align*}
(G^*)''(x) =((G')^{-1})'(x) >0.
\end{align*}
It follows that $G^*$ is convex and $(G^*)'$ is strictly increasing, so that Assumptions~\ref{property} (i,ii) are satisfied.  
By Assumption~\ref{cond:cost}, $|(G')^{-1}(x)|\leq C(1+|x|^{k-1})$ for $C>0$ and $k\geq 2$. Whence, there exists a constant $K>0$ such that
\begin{align*}
G^*(x)=|G^*(x)| \leq |x(G')^{-1}(x)|\leq C(|x| + |x|^k)\leq K(1+|x|^k).
\end{align*}
Therefore, Assumption~\ref{property}(iii) is also satisfied. Again by Assumption~\ref{cond:cost}, $\left(G'\right)^{-1}$ is increasing, and there exists $C>0$ and $x_0>0$, such that for large $x>x_0$, $\left(G'\right)^{-1}(x)$ is large and by Assumption~\ref{cond:cost}(iii)
\begin{align*}
(G^*)'' (x) = ((G')^{-1})'(x) = \frac{1}{G''\left( \left(G'\right)^{-1}(x)\right)}\geq \frac{1}{C}.
\end{align*}
Thus, Assumption~\ref{property}(iv) holds as well. 
\end{proof}

We now turn to the proof of Lemma~\ref{ODE}. 

\begin{proof}[Proof of Lemma~\ref{ODE}]
Let $G^*$ denote the Legendre transform of $G$, and define
\begin{align*}
a = \frac{\gamma \sigma^2}{\delta^2}, \quad F(x) = \frac{2}{\delta^2}G^*(x), 
\end{align*}
where $\gamma$ and $\delta$ are defined as in Lemma~\ref{ODE}. By Lemma~\ref{equ assump}, $G^*$ and in turn $F$ satisfy Assumption~\ref{property}. For the above choices of $a$ and $F$,  Lemma~\ref{general ODE} and Lemma~\ref{symmetric} therefore yield the existence and uniqueness of the constant $b_F$ and the solution $g$ on $\R$ to the first-order ODE~\eqref{ODE:g} such that $xg(x)\leq 0$ for every $x\in\R$. In view of the first-order ODE~\eqref{ODE:g} satisfied by $g$, 
\begin{align*}
g'(x) &
=\frac{\gamma\sigma^2}{\delta^2} x^2 -F(g(x))- b_F
=\frac{\gamma\sigma^2}{\delta^2} x^2-\frac{2}{\delta^2}\big[ g(x)(G')^{-1}(g(x)) - G((G')^{-1}(g(x)))\big] - b_F,
\end{align*}
Lemma~\ref{2nd ODE} shows that $g$ is also the unique solution to the ODE~\eqref{eqn:ergodic ODE} from Lemma~\ref{ODE}:
\begin{align*}
\frac 1 2 \delta^2 g''(x) = -\big[ g(x)(G')^{-1}(g(x)) - G((G')^{-1}(g(x)))\big]' + \gamma\sigma^2 x = -g'(G')^{-1}(g(x)) + \gamma\sigma^2 x.
\end{align*}
To complete the proof, notice that 
\begin{align*}
F^{-1}(ax^2) = F^{-1}\left(\frac{\gamma\sigma^2}{\delta^2}x^2\right) = (G^*)^{-1}\left(\frac{\delta^2}{2}\frac{\gamma\sigma^2}{\delta^2}x^2\right) = (G^*)^{-1} \left(\frac{\gamma\sigma^2}{2}x^2\right),
\end{align*}
which yields the analogue of the growth conditions~\ref{moments}:
\begin{equation}\label{conditions}
\lim_{x\to -\infty} \frac{g(x)}{(G^*)^{-1}(\frac{\gamma\sigma^2}{2}x^2)} = 1, \quad \quad
\lim_{x\to +\infty} \frac{g(x)}{(G^*)^{-1}(\frac{\gamma\sigma^2}{2}x^2)} = -1. 
\end{equation}
\end{proof}

\section{Calibration Details}\label{app:calculation}

In this section, we provide some additional details concerning the calibration of the model with costs of general power form at the end of Section~\ref{ss:calcosts}. If $G_q(x)=\lambda_q|x|^q/q$ with $\lambda_q>0$, $q \in (1,2]$, then the nonlinear ODE~\eqref{eqn:ergodic ODE} from Lemma~\ref{ODE} can be simplified by rescaling. Indeed, the solution then can be written as 
\begin{align}\label{rescale solution}
g_q(x) = \left(\frac{\lambda_q}{q}\right)^{\frac{3}{q+2}}\left(\frac{\gamma\sigma^2\delta_q^4}{8}\right)^{\frac{q-1}{q+2}} \tilde{g}_q\left(  2^{\frac{q-1}{q+2}} \left(\frac{q\gamma\sigma^2}{\lambda_q}\right)^{\frac{1}{q+2}} \delta_q^{-\frac{2q}{q+2}} x\right ), 
\end{align}
where $\tilde{g}_q$ is the unique solution on $\R$ of\footnote{As shown in Lemma~\ref{2nd ODE}, $\tilde{g}_q$ is in fact the solution to the first-order equation~(17) in \cite[Theorem~6]{guasoni.weber.18}, with $q = \alpha+1$.}
\begin{align}\label{canonical}
\tilde{g}_q''(x) + \tilde{g}_q'(x) \sign(\tilde{g}_q(x)) \left|q^{-1}\tilde{g}_q(x)\right|^{\frac{1}{q-1}}= 2x. 
\end{align}
This rescaled ODE only depends on the elasticity $q$ of the trading cost but not on the other primitives of the model. As a consequence, the rescaled ODE only needs to be solved numerically once for each $q$ to determine the transaction cost $\lambda_q$ and $\delta_q$ that match the variance of the state variable for proportional costs and the average share turnover observed empirically. To this end, first notice that 
\begin{align}\label{drift for X}
&\left(G_q'\right)^{-1} \left({g_q(x)}\right)= -\sign(x)\left|\frac{g_q(x)}{\lambda_q}\right|^{\frac{1}{q-1}} \notag\\
&\quad = -\left(\frac{q\gamma\sigma^2\delta_q^4}{8\lambda_q}\right)^{\frac{1}{q+2}}\sign(x)\left| q^{-1}\tilde{g}_q\left(  2^{\frac{q-1}{q+2}} \left(\frac{q\gamma\sigma^2}{\lambda_q}\right)^{\frac{1}{q+2}} \delta_q^{-\frac{2q}{q+2}} x\right )\right|^{\frac{1}{q-1}}.
\end{align}
For power costs $G_q(x)=\lambda_q |x|^q/q$,  the dynamics of the state-variable $X$ from  Lemma~\ref{lem:sde} in turn are given by
\begin{align} \label{dynamic for X_q}
dX_t &= -\left(\frac{q\gamma\sigma^2\delta_q^4}{8\lambda_q}\right)^{\frac{1}{q+2}}\sign(X_t)\left|q^{-1}\tilde{g}_q\left(  2^{\frac{q-1}{q+2}} \left(\frac{q\gamma\sigma^2}{\lambda_q}\right)^{\frac{1}{q+2}} \delta_q^{-\frac{2q}{q+2}} X_t \right)\right|^{\frac{1}{q-1}} dt+\delta_q dW_t.
\end{align}
The stationary density of the state variable $X$ therefore can therefore be computed via the normalized speed measure as\footnote{For quadratic costs $q=2$, where the state variable has Ornstein-Uhlenbeck dynamics, this reduces to the density of the normal distribution.} 
\begin{align*}
{\nu}_q(x) =  \frac{2^{\frac{q-1}{q+2}}(\frac{q\gamma\sigma^2}{\lambda_q})^{\frac{1}{q+2}}\delta_q^{-\frac{2q}{q+2}}
\exp\Big(-\int_0^{2^{\frac{q-1}{q+2}}(\frac{q\gamma\sigma^2}{\lambda_q})^{\frac{1}{q+2}}\delta_q^{-\frac{2q}{q+2}} x} |q^{-1}\tilde{g}_q(y)|^{\frac{1}{q-1}}dy\Big)}{ 2\int_0^\infty \exp\Big(- \int_0^{x} |q^{-1}\tilde{g}_q(y)|^{\frac{1}{q-1}}dy\Big) dx }.
\end{align*}

The goal now is to choose the model parameters $\lambda_q$ and $\delta_q$ to match the share turnover in the model to its empirical level and the stationary variance of the state variable to its counterpart for proportional costs. To this end, define 
\begin{align*}
 \tilde{c}_q &=  \left[  2\int_0^\infty \exp\left(- \int_0^{x} \left|q^{-1}\tilde{g}_q(y)\right|^{\frac{1}{q-1}}dy\right) dx \right]^{-1}, \\
   \tilde{v}_q &=  2  \tilde{c}_q \int_0^\infty x^2 \exp\left(- \int_0^{x} \left|q^{-1}\tilde{g}_q(y)\right|^{\frac{1}{q-1}}dy\right) dx.
\end{align*}
To match the total share turnover, we then need
\begin{align}\label{matching turnover}
\text{ShTu} 
=\int_{-\infty}^ \infty \left|\frac{g_q(x)}{\lambda_q}\right|^{\frac{1}{q-1}}\nu_q(x) dx 
= 2^{\frac{q-1}{q+2}} \left(\frac{q\gamma\sigma^2}{\lambda_q}\right)^{\frac{1}{q+2}} \delta_q^{-\frac{2q}{q+2}}\tilde{c}_q\delta_q^2,
\end{align}
which is satisfied if we choose
\begin{equation}\label{eq:deltaq}
\delta_q=\left(\frac{\lambda_q}{2^{q-1}q\gamma\sigma^2}\left(\frac{\text{ShTu}}{\tilde{c}_q}\right)^{q+2}\right)^{1/4}.
\end{equation}
After matching the average share turnover, we now choose the size $\lambda_q$ of the trading cost to match the stationary variance of the state variable to its counterpart for proportional costs $\lambda_1$. For power costs with elasticity $q$, the stationary mean of the state variable is zero by the symmetry of $\nu_q$, and the stationary variance can in turn be computed by integrating against the stationary density, 
\begin{align}\label{eq:varianceq}
2 \int_0^\infty x^2 \nu_q(x) dx 
=\frac{\tilde{v}_q}{4^{\frac{q-1}{q+2}} \left(\frac{q\gamma\sigma^2}{\lambda_q}\right)^{\frac{2}{q+2}} \delta_q^{-\frac{4q}{q+2}}}= \frac{\tilde{v}_q \lambda_q}{2^{q-1}q\gamma\sigma^2} \left(\frac{\text{ShTu}}{\tilde{c}_q}\right)^q,
\end{align}
where we have inserted~\eqref{eq:deltaq} in the second step. For proportional costs the state variable is a doubly reflected Brownian motion, whose stationary law is the uniform distribution on $[-l,l]$ (which has variance $l^2/3$). Recall from Section~\ref{ss:calcosts} that for proportional costs $\lambda_1$, we need 
\begin{align*}
\delta_1 = \frac{\xi_1}{\sigma} = \left(\frac{12\mathrm{ShTu}^3\lambda_1}{\gamma\sigma^2}\right)^{1/4},
\end{align*}
to match the average share turnover $\text{ShTu}$.  After inserting this into Formula~\eqref{eq:ell} for the trading boundary $l$, it follows that the stationary variance $l^2/3$ of the state variable for proportional costs is given by
\begin{align*}
\frac{l^2}{3}= \frac13 \left(\frac{3\lambda_1 \delta_1^2}{2\gamma\sigma^2}\right)^{2/3}=\frac{\lambda_1\text{ShTu}}{\gamma\sigma^2}.
\end{align*}
To match this with the corresponding stationary variance~\eqref{eq:varianceq} for power costs with elasticity $q$, we therefore need to choose
\begin{align}\label{est:lam}
\lambda_q = \frac{q \tilde{c}_q}{\tilde{v}_q}\left(\frac{2 \tilde{c}_q}{\text{ShTu}}\right)^{q-1}\lambda_1.
\end{align}
In summary, for a given value of $q$, the solution $\tilde{g}_q$ of~\eqref{canonical} therefore needs to be computed numerically on a fine grid once. Then, we can use numerical integration to determine $\tilde{c}_q$, $\tilde{v}_q$. This finally allows to compute the $\lambda_q$ corresponding to the proportional trading cost $\lambda_1$ via~\eqref{est:lam}, and pins down the corresponding $\delta_q$ through~\eqref{eq:deltaq}.

\bibliographystyle{abbrv}
\bibliography{references}

\end{document}